\newtheorem{theorem}{Theorem}
\newtheorem{corollary}{Corollary}
\newtheorem{lemma}{Lemma}
\newtheorem{remark}{Remark}
\begin{document}

\title{Cooperative Base Station Coloring for Pair-wise Multi-Cell Coordination}

\author{Jeonghun~Park,
 Namyoon~Lee,
 and~Robert W. Heath Jr.
%\IEEEauthorblockA{Department of Electrical and Computer Eng.\\
%The University of Texas at Austin\\
%Austin, TX 78712 USA\\
%Email: \{jeonghun\}@utexas.edu}}
\thanks{J. Park and R. W. Heath Jr. are with the Wireless Networking and Communication Group (WNCG), Department of Electrical and Computer Engineering, 
The University of Texas at Austin, TX 78701, USA. (E-mail: $\left\{\right.$jeonghun, rheath$\left\}\right.$@utexas.edu) 

N. Lee is with Intel Labs, 2200 Mission College Blvd, Santa Clara, CA 95054, USA (Email:namyoon.lee@intel.com)

This research was supported by a gift from Huawei Technologies Co. Ltd.}}

\maketitle \setcounter{page}{1} 

%\markboth{\emph{Transaction on Wireless Comm.},~2014}
%{Park \emph{et al}: Rate Coverage and Ergodic Spectral Efficiency of Cluster Multiplexing with Pairwise Multi-Cell Coordination}

\begin{abstract}
%A semi-static base station (BS) coordination strategy exploits predefined multiple BS cluster patterns, i.e., a set of BS clusters that use the same time-frequency resource, for cooperative transmissions to improve the cell-edge user throughput. 
%To obtain high edge user throughput, CM requires efficient BS cluster patterns, where edge users are guaranteed to be protected from out-of-cluster interference. To design such BS cluster patterns in cellular networks whose BS locations are irregular, however, is difficult in general. 
%In implementing dynamic clustering in a network, a base station (BS) selection conflict problem, which occurs when two different users select the same BS, is a main bottleneck. Conventionally, a global scheduler is needed to solve this problem, which brings huge complexity.
%To resolve the BS selection conflict problem with a systematic way, 
This paper proposes a method for designing base station (BS) clusters and cluster patterns for pair-wise BS coordination.
%where edge users are guaranteed to be protected from dominant out-of-cluster interference in an irregular cellular network.
%The proposed BS cluster pattern is designed to use two core ideas: 1) the 2nd-order Voronoi regions of deployed BSs' locations and 2) the edge-coloring for the graph whose each vertex is a BS location and each edge is the Delaunay triangulation. 
The key idea is that each BS cluster is formed by using the 2nd-order Voronoi region, and the BS clusters are assigned to a specific cluster pattern by using edge-coloring for a graph drawn by Delaunay triangulation.
%The proposed BS cluster pattern is designed to use two core features: 1) make each BS cluster by using the notion of the 2nd-order Voronoi region and 2) the edge-coloring for the graph comprising a set of vertices for BS locations and a set of edges induced by the Delaunay triangulation. 
The main advantage of the proposed method is that the BS selection conflict problem is prevented, while users are guaranteed to communicate with their two closest BSs in any irregular BS topology.
%that users are guaranteed to communicate with their two closest BSs set without the BS selection conflict problemin any irregular BS topology.
%irrespective of a BS deployment scenario.
% via coordinated beamforming. 
With the proposed coordination method, analytical expressions for the rate distribution and the ergodic spectral efficiency are derived as a function of relevant system parameters in a fixed irregular network model. 
In a random network model with a homogeneous Poisson point process,
a lower bound on the ergodic spectral efficiency is characterized. 
%for a random network where locations of BSs and users are modeled by using a homogeneous Poisson point process. 
%Through simulations, the derived analytical expressions are verified.
Through system level simulations, the performance of the proposed method is compared with that of conventional coordination methods: dynamic clustering and static clustering. Our major finding is that, when users are dense enough in a network, the proposed method provides the same level of coordination benefit with dynamic clustering to edge users.
%achieves the same performance of dynamic clustering with much less complexity.
%provides considerable performance gains in both the rate coverage probability and the ergodic spectral efficiency for cell-edge users compared to conventional strategies.
% so that the relatively fair performance is able to be provided. One noticeable point is the CM is a network centric method, where the clustering sets are made independently to users' location, thereby low overheads are entailed.
\end{abstract}

\section{Introduction}
\subsection{Motivation}

%Multicell coordination is a technique for mitigating inter-cell interference where the transmission strategy are coordinated (e.g. beamforming, scheduling, and power control) among base stations (BSs) in cellular systems. 

Coordination among base stations (BSs) is a powerful approach for mitigating inter-cell interference in cellular systems \cite{lee:comp2012}. It has been shown that the sum spectral efficiency scales linearly with the signal-to-noise ratio (SNR) when all the BSs are coordinated \cite{somekh:co_multicell_zf}. 
In practice, however, coordination with a large number of BSs may not be feasible due to excessive overhead associated with the coordination, e.g., complexity, channel estimation and channel feedback \cite{lozano:2013:limitcoop}. 
One practical solution for implementing multicell coordination is to form a BS cluster so that a limited number of BSs are coordinated to control intra-cluster interference with a reasonable amount of overhead \cite{lozano:2013:limitcoop, jun:09:network}. Unfortunately, when clustering-based BSs coordination is applied in a static way \cite{Huang:static, ges10},  or equivalently in a random way \cite{6410048} independent to users' conditions, 
the performance is mainly limited by unmanageable out-of-cluster interference \cite{lozano:2013:limitcoop, Huang:static, ges10}.
%which gives us an intuition of how to make a cluster.

%Motivated by the limitation of static (or random) clustering, 
Dynamic clustering, where a BS cluster is dynamically made by users, was proposed in \cite{Papadogiannis:dclus} to overcome the performance limitation of static clustering. While dynamic clustering improves the signal-to-interference and noise ratio (SINR) of a user significantly, there are several challenges that make it difficult to realize. 
%the main bottleneck is that dynamic clustering brings network-wise issues when implementing it in a network. 
In particular, different users who share the same spectrum can select the same BS at the same time. We refer to this as ``the BS selection conflict problem."
%which is called as ``a BS selection conflict problem" in this paper.
To resolve this, users in a network need to send feedback regarding desirable BS clusters and a global scheduler is required to schedule BSs and users based on this feedback information.
%a global scheduler is needed to schedule multiple BSs and users based on feedback information regarding cooperative BS sets chosen by every user in a network. 
For this reason, performance chiefly depends on the choice of this scheduler. 
The scheduling itself, however, requires a lot of feedback overhead from the users and a substantial amount of coordination of all the BSs and the users, which causes huge system overhead.
%and the associated feedback mechanism to implement such dynamic clustering. 
%This approach, however, increases system overheads since the feedback should be collected to a centralized scheduler and also the amount of computation of the global scheduler increases as the density of users increases.
%. The scheduling itself requires a lot of feedback overhead from the users and a substantial amount of coordination of all the users. 

To solve the BS selection conflict problem without requiring global coordination, semi-static clustering was proposed in \cite{periodic:frequency, Purmehdi:2013icc}. In this method, multiple predefined BS clusters are used with different time-frequency resources. A set of BS clusters that use the same time-frequency resource is referred as ``a BS cluster pattern.'' For instance, in a square grid BS topology, four BS cluster patterns are required to cover the whole region without the BS selection conflict problem. 
Using the semi-static clustering, users can communicate with their favorable BS set. The main limitation of the existing work \cite{periodic:frequency, Purmehdi:2013icc} is that the BS clustering methods are applicable for a regular BS topology where each BS is located regularly on a grid. In practice, however, the BS topology of cellular networks is irregular \cite{Andrew:10:stocg}. In such case, BS clustering pattern design is challenging. In this work, by leveraging geometry properties, we make progress in designing BS clusters and BS cluster patterns in a semi-static way, which is applicable for any wireless networks whose topology is irregular. 

%To protect a user effectively from out-of-cluster interference, a semi-static multicell coordination strategy was proposed in \cite{periodic:frequency, Purmehdi:2013icc}. The key idea is to use different predefined BS cluster patterns, i.e., a set of BS cluster that uses the same time-frequency resource.
% for cooperative transmissions. 
%From a reason that the used BS cluster patterns are predefined, overheads and complexity caused by BS coordination is reduced by this method. 
%This approach provides an opportunity for users to communicate with their favorable set of BSs in a particular phase, improving the cluster edge-user throughput \cite{periodic:frequency, Purmehdi:2013icc}. The performance of the semi-static multicell coordination strategy depends on how the BS cluster patterns are determined. The problem of designing the BS cluster pattern is therefore important, yet it is difficult in general cellular networks with irregular BS locations. The reason is that it becomes unclear how to form multiple BS cluster patterns so that every user obtains equivalent opportunities to communicate with their favorable set of BSs. In this paper, we provide a solution to the cluster pattern design problem for pair-wise BS coordination to improve the edge-user performance in cellular downlink networks with irregular BSs' locations.

\subsection{Related Work}
%The most efficient way to protect a user from out-of-cluster interference is dynamic BS clustering, where a BS cluster is formed in a user-centric way depending on the users' conditions, e.g., distances to interfering BSs and channel quality. 
Cooperative strategies for dynamic clustering have been extensively studied in the literature \cite{Papadogiannis:dclus, pa10, wd11, gkb12, 6848065, ry07, ry09, nh10, song:vcellcoop, nrm12, NY:dynamic, junzhang:dynamic, Tanbourgi:2014, 6928420, 6909064, 6879305}. In \cite{Papadogiannis:dclus, pa10, wd11, gkb12}, an algorithm for making a BS cluster in dynamic clustering was proposed. For instance, given the users' conditions such as distances or channel conditions, an algorithm selects BSs to include into a BS cluster so that the joint achievable throughput of the formed BS cluster is maximized.
In \cite{6848065, ry07, ry09}, a resource allocation method to enhance sum spectral efficiency was proposed for dynamic clustering. By jointly allocating resources to each BS cluster depending on their channel qualities, the sum spectral efficiency in a BS cluster can be improved.
%In \cite{mf07, gsz11}, a user scheduling method for improving user throughput was proposed. 
%a resource allocation or a user scheduling method to enhance sum spectral efficiency was proposed also for a dynamic BS clustering approach.
In \cite{nh10, song:vcellcoop, nrm12}, a beamforming algorithm 
was proposed in dynamic clustering aiming complexity reduction or sum spectral efficiency improvement.
In this work \cite{Papadogiannis:dclus, pa10, wd11, gkb12, 6848065, ry07, ry09, nh10, song:vcellcoop, nrm12}, a toy network, where only limited number of BSs and users are 
assumed, was considered.
%Specifically, in \cite{??}, -- was proposed for --.
%In \cite{nh10}, a linear interference nulling technique called soft interference nulling is proposed in a dynamic BS clustering method with overlapping coordination clusters. Similar to \cite{nh10}, in \cite{song:vcellcoop}, a beamformer optimization problem was tackled in overlapped BS clusters, formed by a user-centric way. 
In \cite{NY:dynamic, junzhang:dynamic, Tanbourgi:2014, 6928420}, the performance of dynamic clustering was characterized in a random network where BSs are distributed according to a homogeneous Poisson point process (PPP).
In \cite{6909064}, pair-wise dynamic clustering with rate splitting method was proposed assuming a Poisson network.
In \cite{6879305}, dynamic clustering in two-tier network was proposed and analyzed also in a Poisson network.
%By applying the stochastic geometry approach as in \cite{NY:dynamic, junzhang:dynamic, Tanbourgi:2014, 6928420, 6909064, 6879305 }, while taking a different performance metric, the average error symbol probability were characterized when using a frequency reuse interference management method in \cite{7151123}. Unlike \cite{NY:dynamic, junzhang:dynamic, Tanbourgi:2014, 6928420, 6909064, 6879305}, in \cite{7151123}, no BS clustering was considered.
The main limitation of the prior work considering BS clustering in a random network \cite{NY:dynamic, junzhang:dynamic, Tanbourgi:2014, 6928420, 6909064, 6879305} is that they did not provide a solution for resolving the BS selection conflict problem, rather they focused on the performance characterization (e.g., the coverage probability) under the assumption that a global scheduler allows such BS clustering to a typical user without the BS selection conflict problem. Global scheduling has higher complexity as overhead and scheduling are required between all possible BS clusters. 
%As mentioned in the previous section, this approach causes huge system overheads. 
The main difference between this paper and the prior work in \cite{NY:dynamic, junzhang:dynamic, Tanbourgi:2014, 6928420, 6909064, 6879305} is that we focus on providing a systematic solution for the BS selection conflict problem without using a global scheduler, so that the proposed method is able to be implemented in a network with reasonable overhead.

\subsection{Contributions} 
In this paper, we propose a method for designing BS clusters and BS cluster patterns for pair-wise multicell coordination in a cellular downlink network whose BSs' locations are irregular. The goal of the proposed method is to ensure that selected users communicate with their two closest BSs avoiding the BS selection conflict problem. To do this, we 
leverage the concept of 2nd-order Voronoi region. Suppose two BSs are located at ${\bf d}_i \in\mathbb{R}^2$ and ${\bf d}_j \in\mathbb{R}^2$ in a two dimensional space. Then, 
the set of closer points to ${\bf{d}}_{i}$ and ${\bf{d}}_{j}$ than any other points is defined by the 2nd-order Voronoi region as
%\begin{align} \label{voronoi_defn}
%&\mathcal{V}_2\left({\bf{d}}_i, {\bf{d}}_j \right) \nonumber \\
%&= \left\{\left. {\bf{d}}\in \mathbb{R}^2 \right| 
%\left\{ \left\|{\bf{d}} - {\bf{d}}_{i} \right\| \le \left\|{\bf{d}} - {\bf{d}}_{k} \right\|  \right\} \cap \left\{ \left\|{\bf{d}} - {\bf{d}}_{j} \right\|  \le \left\|{\bf{d}} - {\bf{d}}_{k} \right\| \right\},\; \forall k \in \mathbb{Z}^+,\; k\ne i,j   \right\}.
%\end{align}
\begin{align} \label{voronoi_defn}
&\mathcal{V}_2\left({\bf{d}}_i, {\bf{d}}_j \right) \nonumber \\
&= \left\{ {\bf{d}}\in \mathbb{R}^2 \left| 
\begin{array}{lc}
\left\{ \left\|{\bf{d}} - {\bf{d}}_{i} \right\| \le \left\|{\bf{d}} - {\bf{d}}_{k} \right\|  \right\}  \\
\cap \left\{ \left\|{\bf{d}} - {\bf{d}}_{j} \right\|  \le \left\|{\bf{d}} - {\bf{d}}_{k} \right\| \right\},
\\\; \forall k \in \mathbb{Z}^+,\; k\ne i,j  \end{array} \right. \right\},
\end{align}
The core feature of the 2nd-order Voronoi region is that 
%The 2nd-order Voronoi region $\mathcal{V}_{2}\left({\bf{d}}_{i}, {\bf{d}}_{j} \right)$, which can be defined by an arbitrary point pair located at $\{{\bf{d}}_{i}, {\bf{d}}_{j}\}$ in any irregular point process, 
it characterizes an area where any point in that area has the minimum distances to $\{{\bf{d}}_{i}, {\bf{d}}_{j}\}$. If a BS pair located at $\{{\bf{d}}_{i}, {\bf{d}}_{j} \}$
forms a BS cluster and serves users in $\mathcal{V}_{2}\left({\bf{d}}_{i}, {\bf{d}}_{j} \right)$, those users are guaranteed to communicate with their two closest BSs. 
%To apply this strategy to the whole network, 
Applying this to the whole network,
the first step of the method is tessellating a network plane into the 2nd-order Voronoi regions of every feasible pair of BS locations and making a BS cluster according to 2nd-order Voronoi regions.
% and serve users in each region by the two closest BSs.
%using a BS cluster formed by the corresponding 2nd-order Voronoi region. 
Next step is designing BS cluster patterns by allocating time-frequency resources to each BS cluster for preventing the BS selection conflict problem. 
To do this, we adopt the concept of Delaunay triangulation and edge-coloring in graph theory. At the best of authors' knowledge, it is the first work to solve the BS selection conflict problem in irregular BS topologies. By solving this, users are able to obtain BS coordination benefit by communicating with their two closest BSs.
%This strategy, however, can cause BS conflicts, i.e., more than two users from different regions want to communicate with the same BS simultaneously. 
%To prevent this, BS clusters are assigned into multiple cluster patterns and each cluster pattern uses different time-frequency resource.
% are allocated to each cluster pattern.
%we also propose a cluster pattern construction method by solving the edge-coloring problem of the graph whose each vertex is a BS location and each edge is the Delaunay triangulation of the network.
%allocating 
%different time-frequency resource to each BS cluster. 
%by using edge-coloring from graph theory. 
We explain the proposed idea more rigorously in Section II. 
%Further, we propose an additional algorithm called as ``edge cutting and restore algorithm'' 
%for improving the more in Section III.
%With the proposed strategy, we are able to make an efficient clustering pattern for semi-static BS coordination regardless of BS deployment scenario. 
%At the best of authors' knowledge, this is 

With the proposed method, we characterize the performance of the proposed BS coordination strategy in a fixed irregular network model. Under the premise that multi-user coordinated beamforming (CBF) \cite{chae12} is applied to mitigate intra-cluster interference, we derive analytical expressions for the rate coverage probability and the ergodic spectral efficiency as a function of relevant system parameters: 1) distances between BSs and users, 2) the number of antennas per BS, 3) the number of selected users, 4) signal-to-noise ratio (SNR), and 5) the pathloss exponent. 
Next, considering a random network model where a location of each BS is modeled by using a homogeneous PPP, 
%Similarly to the case of a deterministic network model, a user is guaranteed to communicate with the most favorable BS set by alternatively exploiting the proposed cluster patterns. 
we derive a lower bound on the ergodic spectral efficiency as a function of the relevant system parameters.
% 1) the number of antennas per BS, 2) the number of selected users, and 3) the pathloss exponent.
% under the interference-limited assumption.
%The intra-cluster interference is mitigated by a CBF method as in the fixed network model case.

To demonstrate efficiency of the proposed method, we use a system level simulation to compare the edge users sum throughput of the proposed method and that of other conventional coordination methods: dynamic clustering and static clustering. 
One main finding is that when users are dense enough, the proposed method achieves the same edge user performance as dynamic clustering in which a global scheduler is assumed.
%in some favorable BS topologies. Even with other cases, the proposed algorithm has performance improvement compared to static clustering. 
%the rate coverage probability and the ergodic spectral efficiency are compared with those obtained from other ways of coordination strategies: single cell operation, fractional frequency reuse and a random BS clustering method. One main finding is that the proposed cluster pattern substantially improves the rate outage probability and the ergodic spectral efficiency of cell-edge users compared to those of the existing approaches.
%For clear understanding, a cell-edge user is defined as a user whose SINR is less than 1 with high probability, so that a rate is less than 1 bps/Hz.
This gain comes from the fact that the proposed method ensures that selected users communicate with their two closest BSs with high probability as the user density increases.

The remainder of the paper is organized as follows. In Section II, 
the system model including the proposed cluster pattern is explained. 
In Section III, the rate coverage probability is characterized and the ergodic spectral efficiency is obtained in Section IV. In Section V, 
%we show that the proposed cluster pattern is able to be applied in a random network, where locations of BSs are modeled by using a homogeneous PPP and analyze a lower bound on the ergodic spectral efficiency. 
the ergodic spectral efficiency in a random network is analyzed, followed by Section VI which provides comparisons with simulations. Section VII concludes the paper.

\section{System Model} 

In this section, we introduce the system model used in this paper. We first describe the network model, the pair-wise coordination model, and user association assumptions. Then the proposed BS cluster and BS cluster pattern are explained by using a toy example, and extend them to a general network model in the next subsection. Further, we introduce an algorithm for enhancing the proposed method, called as ``edge cutting algorithm."
Signal model and performance metrics are specified in the following subsection.
%The proposed cluster pattern is also explained in this section. 

\subsection{Network Model}
\begin{figure}[t]
\centering
$\begin{array}{cc}
{\resizebox{0.45\columnwidth}{!}
{\includegraphics{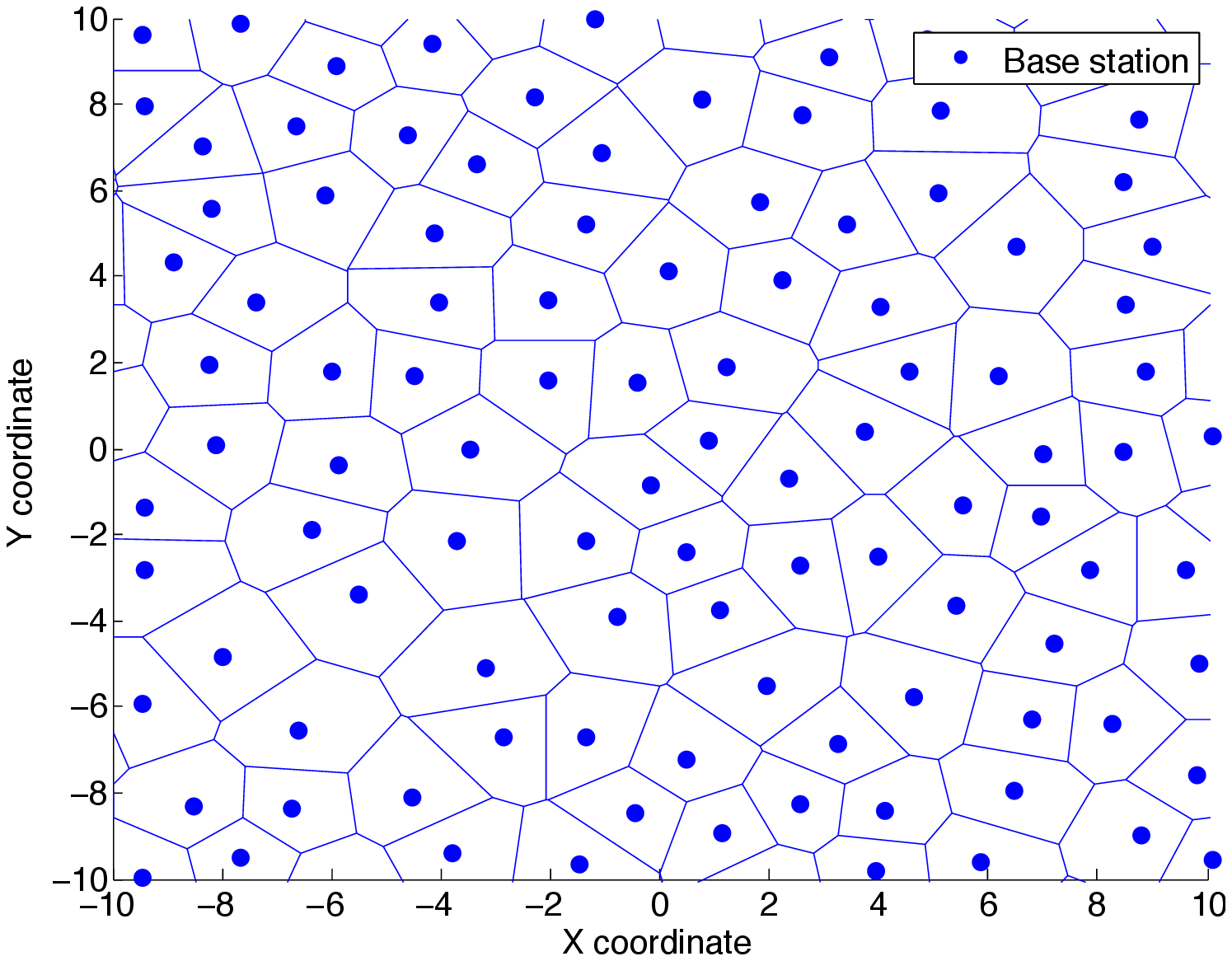}}}  &
{\resizebox{0.45\columnwidth}{!}
{\includegraphics{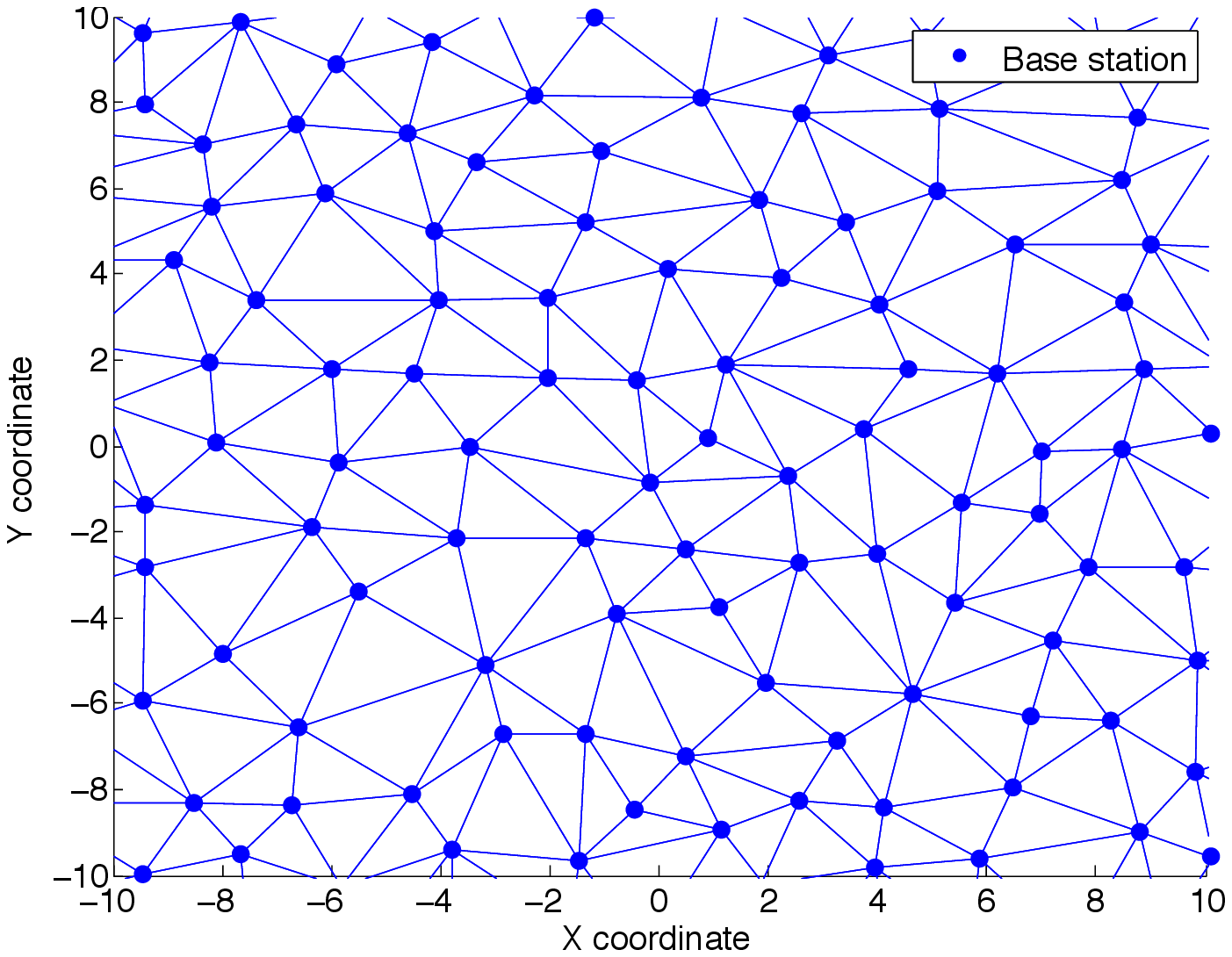}}}  \\ 
\mbox{(a)} &
\mbox{(b)}
\end{array}$
\caption{(a) An example of the considered network model. 
This example is generated by a repulsive point process where a distance between any two BSs is larger than $1.4 \rm km$. 
%Generally, however, we do not assume a specific network model.
% based on a snapshot of PPP. Specifically, each point is not allowed to be located within range of $\sqrt{2}$ to other points. 
(b) A graph induced by the Delaunay triangulation corresponding to (a).} 
   \label{2d_net_model}
\end{figure} 

We consider a fixed downlink cellular network where each BS has $N$ antennas.
%Denoting the number of BS antennas as $N$, $K = N/2$ users 
As illustrated in Fig.~\ref{2d_net_model} (a), BSs are located at irregular points on a two dimensional plane. We denote each location of a BS as ${\bf{d}}_i \in \mathbb{R}^2$ for $i \in \mathbb{Z}^+$, so that $\mathcal{N} = \left\{ \left. {\bf{d}}_{i} \right| i \in \mathbb{Z}^+\right\}$ includes all the locations of the BS on the plane. 
A BS located at ${\bf{d}}_i $ for $i \in \mathbb{Z}^+$ has its own 1st-order Voronoi region ${\mathcal{V}}_1\left({\bf{d}}_i \right)$, and each 1st-order Voronoi region tessellates the network plane into $\left| \mathcal{N} \right|$ regions.

%Each Voronoi cell is sectorized into four sectors, denoted by $S_1$, $S_2$, $S_3$, and $S_4$, where $K$ users are included in each sector. As described in Fig.~\ref{2d_net_model}, by splitting the total time duration into four same durations, denoted by $T_1$, $T_2$, $T_3$, and $T_4$, a BS communicates to users in $S_{\ell}$ at $T_{\ell}$ for $\ell = 1,2,3,4$.

\subsection{Pair-wise Multicell Coordination}
%We consider that two adjacent BSs are coordinated, i.e., pair-wise BS coordination. 
In this paper, we focus on the case where only two adjacent BSs form a cooperative cluster.
The rationale behind focusing on pair-wise BS coordination is that it is highly representative for characterizing coordination gain without causing too much overhead. For example, in \cite{NY:dynamic}, it was shown that a BS cluster including more than two BSs decreased the ergodic spectral efficiency when considering signaling overhead. This is because including more BSs into a BS cluster requires more overhead for estimating channel coefficients of the BSs in the cluster, which decrease the ratio of the transmitted data in a packet. This degrades the spectral efficiency.
% pre-log term of the ergodic spectral efficiency.
For pair-wise multicell coordination, a BS located at ${\bf{d}}_i$ is able to coordinate with a BS located at ${\bf{d}}_j$ if $\lambda\left(\mathcal{V}_2\left({\bf{d}}_i, {\bf{d}}_j \right)  \right) \ne 0$, where $\lambda\left(\cdot\right)$ is the Lebesgue measure in two dimensional space. 

\subsection{User Association}
We assume that $2K$ single antenna users are selected in each 2nd-order Voronoi region irrespective of the size of the region. Among them, $K$ users are associated with the BS composing the corresponding 2nd-order Voronoi region while the other $K$ users are associated with the other BS. For instance, if $2K$ users are in $\mathcal{V}_2\left({\bf{d}}_i, {\bf{d}}_j \right)$, $K$ users in $\mathcal{V}_1\left({\bf{d}}_i\right) \cap \mathcal{V}_2\left({\bf{d}}_i, {\bf{d}}_j \right)$ are associated with the BS located at ${\bf{d}}_i$ and the other $K$ users in $\mathcal{V}_1\left({\bf{d}}_j\right) \cap \mathcal{V}_2\left({\bf{d}}_i, {\bf{d}}_j \right)$ are associated with the BS located at ${\bf{d}}_j$.
%, where $K$ users are in $\mathcal{V}_2\left({\bf{d}}_i, {\bf{d}}_j \right)$ and other other $K$ users are in $\mathcal{V}_2\left({\bf{d}}_i, {\bf{d}}_j \right)$.
%to be associated with a BS in each cooperative area, 
%resulting in that there are total $2K$ users to be served in each 2nd-order Voronoi region irrespective of the size of the region. For instance, $2K$ users are in $\mathcal{V}_2\left({\bf{d}}_i, {\bf{d}}_j \right)$, where $K$ users are associated with the BS located at ${\bf{d}}_i$, and the other $K$ users are associated with the BS located at ${\bf{d}}_j$.
This underlying assumption is justified by assuming that the number of users in a network is far larger than the number of deployed BSs, so that there are at least $2K$ users in each 2nd-order Voronoi region satisfying our assumption. More specific user scheduling algorithms are beyond the scope of this paper. 

%${\bf{d}}_i$ and ${\bf{d}}_j$ are connected by the Delaunay triangulation defined in $\mathcal{N}$.

%such that no point in $\mathcal{N}$ is inside the circumcircle of any triangle.

%Due to its simplicity, pairwise cooperation is friendly to analysis, and also gives an insight of fundamentals of BSs cooperation as the smallest unit of BSs cluster. Additionally, from a practical perspective, pairwise cooperation is attractive for its low complexity. 
%For these reasons, pairwise cooperation has been studied in \cite{baccelli:pairwise}. 
%Specifically, for pairwise coordination, $B_{\left(i,j\right)}$ is able to coordinate with four adjacent BSs, i.e.,
%$B_{\left(i,j-1\right)}$, $B_{\left(i-1,j\right)}$, $B_{\left(i,j+1\right)}$, or $B_{\left(i+1,j\right)}$ by forming a coordination set; thereby a BS cluster includes two BSs.
%Exchanging CSI and data information within the same cluster is allowed by using an infinite capacity backhaul.
%Clustering sets including three BSs are not considered since it does not satisfy symmetrical clustering condition, which we will discuss in the later section.

%BS at $\left(0,0\right)$, indicated by $B_{\left(0,0\right)}$, serves four symmetrically located users, indicated by ${\rm{User}}_{\left\{i, \left(0, 0\right) \right\}}$, $1\le i \le 4$. Each BS is at distance $R$, and each user is located at distance $d$ for the serving BS. Therefore, the location of the users served by $B_{\left(i,j \right)}$

\subsection{The Proposed Clustering Model - Toy Example}

\begin{figure}[t]
\centerline{\resizebox{0.8\columnwidth}{!}{\includegraphics{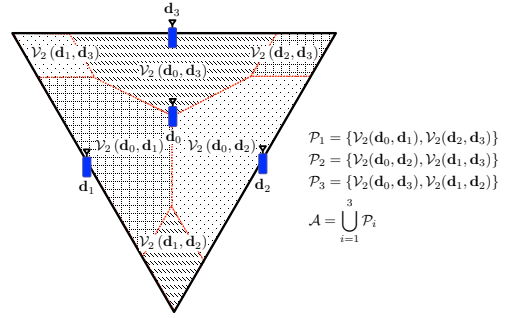}}}    
\caption{A motivational example where all BSs are assumed to be on a finite triangular plane $\mathcal{A}$, each of which is located at ${\bf{d}}_i$ for $i \in \{0,1,2, 3\}$.
The whole plane can be divided by the 2nd-order Voronoi region of $\{{\bf{d}}_i, {\bf{d}}_j \}$ for $i,j \in \{0,1,2,3\}$ and $i \ne j$. 
%A user in $\mathcal{V}_2\left({\bf{d}}_i, {\bf{d}}_j \right)$ is served by $\{ {\bf{d}}_i, {\bf{d}}_j \}$, so that each user is guaranteed to communicate with two closest BSs.
The areas which has the same shade pattern are assigned to the same cluster pattern, resulting in that three cluster patterns, i.e., $\mathcal{P}_1, \mathcal{P}_2,$ and $\mathcal{P}_3$, are defined. 
%With the defined cluster patterns, there is no same BS in the same cluster pattern. By allocating different time-frequency resource to each cluster pattern, BS conflicts are avoided.
}  
   \label{tri_example} 
\end{figure}

%The proposed clustering strategy leverages the notion of the 2nd-order Voronoi region, defined as
%\begin{align} \label{voronoi_defn}
%&\mathcal{V}_2\left({\bf{d}}_i, {\bf{d}}_j \right) \nonumber \\
%&= \left\{\left. {\bf{d}}\in \mathbb{R}^2 \right| 
%\left\{ \left\|{\bf{d}} - {\bf{d}}_{i} \right\| \le \left\|{\bf{d}} - {\bf{d}}_{k} \right\|  \right\} \cap \left\{ \left\|{\bf{d}} - {\bf{d}}_{j} \right\|  \le \left\|{\bf{d}} - {\bf{d}}_{k} \right\| \right\},\; \forall k \in \mathbb{Z}^+,\; k\ne i,j   \right\},
%\end{align}
%i.e., the set of closer points to ${\bf{d}}_{i}$ and ${\bf{d}}_{j}$ than any other points.
%The core feature of the 2nd-order Voronoi region is that it characterizes two closest BSs for each user. 
%From this intuition, 
%from users inherently by the definition.
%For instance, if a BS coordination pair $\{{\bf{d}}_i, {\bf{d}}_j\}$ is formed to serve a user in $\mathcal{V}_2\left({\bf{d}}_i, {\bf{d}}_j \right)$, the user is guaranteed to communicate with the two closest BSs without dominant out-of-cluster interference. 
%Specifically, we create a pair-wise BS cooperative region by tessellating the plane into the 2nd-order Voronoi region around the two neighbor BSs. 
In this subsection, we explain the proposed clustering model with a motivational example. To this end, suppose that BS $i$ is located at ${\bf{d}}_i$ in a finite area $ \mathcal{A} \subset{\mathbb{R}^2}$ for $i\in\{0,1,2, 3\}$, as illustrated in Fig.~\ref{tri_example}.
By exploiting the notion of the 2nd-order Voronoi region \eqref{voronoi_defn}, 
it is possible to create $\binom{4}{2} = 6$ different pair-wise cooperative areas $\mathcal{V}_2({\bf{d}}_i, {\bf{d}}_j)$ for $i\neq j$ and $i,j\in\{0,1,2, 3\}$ such that all points within the given region $\mathcal{V}_2({\bf{d}}_i, {\bf{d}}_j)$ have the same two nearest BSs, i.e., BS $i$ and BS $j$. 
In the proposed algorithm, users located in $\mathcal{V}_2({\bf{d}}_i, {\bf{d}}_j)$ are served by a BS pair located at $\{{\bf{d}}_i, {\bf{d}}_j \}$, so that they are guaranteed to communicate with the two closest BSs.
Next, once each of 2nd-order Voronoi region is defined, multiple cluster patterns $\mathcal{P}_{\ell}$ for $\ell = \{1,...,L\}$ are created, each of which contains different 2nd-order Voronoi regions and $\bigcup_{\ell = 1}^{L} \mathcal{P}_{\ell} = \mathcal{A}$. Each cluster pattern uses different time-frequency resources, so that there is no BS conflict between two different cluster patterns.
%while it has the non-overlapping BSs. 
%If there are same BSs in the same cluster pattern, 
%a cluster pattern includes the same BSs, 
%e.g., $\mathcal{V}_2 \left( {\bf{d}}_0, {\bf{d}}_1\right) \in \mathcal{P}_{1}$ and $\mathcal{V}_2 \left( {\bf{d}}_0, {\bf{d}}_2\right) \in \mathcal{P}_{2}$,
%In the same cluster pattern, however, BS conflicts still can occur.
An example of the BS selection conflict problem occurs when $\mathcal{V}_2\left({\bf{d}}_0, {\bf{d}}_1 \right)\in  \mathcal{P}_1$ and also $\mathcal{V}_2\left({\bf{d}}_0, {\bf{d}}_2 \right)\in  \mathcal{P}_1$. In this case, BS 0 is conflicted, and it cannot serve all the associated users. 
To avoid this, one possible example is to create six cluster patterns such that $\mathcal{P}_1 = \{ \mathcal{V}_2({\bf{d}}_0, {\bf{d}}_1)\} $, $\mathcal{P}_2 = \{ \mathcal{V}_2({\bf{d}}_0, {\bf{d}}_2)\} $, $\mathcal{P}_3 = \{ \mathcal{V}_2({\bf{d}}_0, {\bf{d}}_3)\} $, $\mathcal{P}_4 = \{ \mathcal{V}_2({\bf{d}}_1, {\bf{d}}_2)\} $, $\mathcal{P}_5 = \{ \mathcal{V}_2({\bf{d}}_1, {\bf{d}}_3)\} $, and $\mathcal{P}_6 = \{ \mathcal{V}_2({\bf{d}}_2, {\bf{d}}_3)\} $. 
%Since $\bigcup _{i=1}^{6} \mathcal{P}_i = \mathcal{A}$, all users in the network are able to communicate the two closest BSs.
%Each cluster pattern uses different time-frequency resource, so that a BS conflict problem is avoided. 
This example, however, requires too many time-frequency resources, i.e., 6. It is even larger than the case where each BS uses different time-frequency resources without BS coordination, i.e., 4.
A better example is to create three BS cluster patterns such that 
%$\mathcal{P}_1=\{\mathcal{V}_2({\bf{d}}_0, {\bf{d}}_1)\}$, $\mathcal{P}_2=\{\mathcal{V}_2({\bf{d}}_0, {\bf{d}}_2)\}$, and $\mathcal{P}_3=\{\mathcal{V}_2({\bf{d}}_1, {\bf{d}}_2)\}$. 
$\mathcal{P}_1=\{\mathcal{V}_2({\bf{d}}_0, {\bf{d}}_1),\mathcal{V}_2({\bf{d}}_2, {\bf{d}}_3)\}$, $\mathcal{P}_2=\{\mathcal{V}_2({\bf{d}}_0, {\bf{d}}_2),\mathcal{V}_2({\bf{d}}_1, {\bf{d}}_3)\}$, and $\mathcal{P}_3=\{\mathcal{V}_2({\bf{d}}_0, {\bf{d}}_3),\mathcal{V}_2({\bf{d}}_1, {\bf{d}}_2)\}$. 
%Since there is no overlapping BS, no BS is conflicted with these cluster patterns.
%the union of the three cluster patterns $\bigcup_{\ell =1}^{3}\mathcal{P}_{\ell}$ covers the whole plane $\mathcal{A}$, all the users in the network are able to be served through the proposed clustering pattern where each cluster pattern uses a different time-frequency resource. 
With this example, BS conflicts are prevented and the required number of time-frequency resources is reduced to 3. 
To figure out how much performance gain is achieved by using this cluster pattern, assume that a user is in cell-edge region where the interference chiefly comes from the closest interfering BS. When no BS coordination is applied, the performance of the user might be severely degraded by the dominant interference which has almost same power with the desired signal. When each BS has its own separate resource, the dominant interference can be mitigated but it requires 4 time-frequency resources. By applying the proposed pattern, the dominant interference is mitigated with 3 time-frequency resources, which can lead to the performance improvement.
%Under the assumption that the interference chiefly comes from the closest interfering BS, using this pattern can improve the performance than a case where each BS has their own resources.
%, which is reasonable than that of the previous approach. 

\subsection{The Proposed Clustering Model - General Network}

%\begin{figure}[t]
%\centerline{\resizebox{0.9\columnwidth}{!}{\includegraphics{net_model_irregular.pdf}}}    
%\caption{An example of a pair-wise coordination method using the proposed cluster pattern. 
%By alternatively using the BS cluster pattern in the different time duration, the users are able to be protected from the dominant interference by a property the 2nd-order Voronoi region. 
%The considered network model is divided by the $2$nd order Voronoi region for the proposed cluster pattern. The equal number of users are selected in one region, thereby $K/4$ users associated to a BS are included in $S_{\ell}$ for $\ell = 1,2,3,4$. The users in $S_{\ell}$ are communicated to the corresponding BS in time duration $T_{\ell}$.
%}  
%\label{2d_net_model_cm}
%\end{figure} 
\begin{figure}[t]
\centering
{\resizebox{0.8\columnwidth}{!}
{\includegraphics{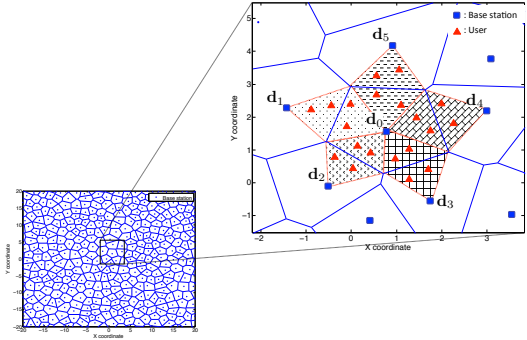}}} 
\caption{An illustration of cooperative areas characterized by the 2nd-order Voronoi region (denoted as dotted red line) focusing on the BS located at ${\bf{d}}_0$. 
Each cooperative area marked by different shade patterns is assigned to a different cluster pattern to avoid BS conflicts. 
%By employing the proposed clustering strategy, users in each cooperative area are protected from the dominant interference.
} 
   \label{gen_net_example}
\end{figure} 

We extend the proposed clustering concept 
%explained in the previous section 
to a general network. At first, we illustrate in Fig.~\ref{gen_net_example} how the proposed BS clusters are formed in a general network. Focusing on a BS located at ${\bf{d}}_0$ and its adjacent BSs at ${\bf{d}}_j$ for $j \in \{1,2,...,5 \}$,
%As in the motivational example Fig.~\ref{tri_example}, 
users in $\mathcal{V}_2\left({\bf{d}}_0, {\bf{d}}_j \right)$ for $j \in \{1,2,...,5\}$ are served through a BS pair $\{{\bf{d}}_0, {\bf{d}}_j \}$ by using the proposed clustering method.
%in a general network whose BSs' locations are irregular as in Fig.~\ref{gen_net_example}, each 2nd-order Voronoi region 
After forming BS clusters, we design cluster patterns $\mathcal{P}_{\ell}$ for $\ell \in \{1,2,...,L\}$ where $\mathcal{P}_{\ell} = \{ \left. \mathcal{V}_2\left( {\bf{d}}_i, {\bf{d}}_j \right)\right| i,j \in \mathbb{Z}^+ \}$.
%and $\bigcup_{\ell = 1}^{L} \mathcal{P}_{\ell} = \mathbb{R}^2$.
%Each clustering pattern uses different time-frequency resource. 
%As in the toy example case, 
%The important condition of designing clustering patterns is to avoid BS conflicts, %i.e., one BS is included in two BS clusters that can be used in the same time-frequency resource. 
%For instance, in Fig.~\ref{gen_net_example}, if $\mathcal{V}_2\left({\bf{d}}_0, {\bf{d}}_1 \right)\in  \mathcal{P}_1$ and also $\mathcal{V}_2\left({\bf{d}}_0, {\bf{d}}_2 \right)\in  \mathcal{P}_1$, conflict of the BS at ${\bf{d}}_0$ occurs, and the BS at ${\bf{d}}_0$ cannot serve all users. However, from the intuition in the previous section, if we design $L$ is large, it may cause excessive use of time-frequency resource.
From the intuition of the toy example, 
we summarize the conditions for a desirable cluster pattern.
% as in the followings.
\begin{enumerate}[1.]
\item $\bigcup_{\ell = 1}^{L} \mathcal{P}_{\ell} = \mathbb{R}^2$: (all users in the network are covered).
%: it is for covering all the users in the network.
\item Any two $\mathcal{V}_2\left({\bf{d}}_i, {\bf{d}}_j \right) \in \mathcal{P}_{\ell}, \mathcal{V}_2\left({\bf{d}}_j, {\bf{d}}_{w} \right)\in \mathcal{P}_{\ell}$ for $\ell \in \{1,...,L\}$, $i \ne j \ne v \ne w$: (to avoid BS conflicts).
%: it is for avoiding BS conflicts.
\item Small $L$ satisfying 1) and 2): (to minimize the use of unnecessary resources).
% should be minimized.
%: it is for minimizing use of unnecessary resources.
\end{enumerate}
%The first condition is for covering all the users in the network, the second one is for avoiding a BS conflict problem, and the third one is for reducing unnecessary resource use. 
In a network whose BSs' locations are irregular, it is challenging to figure out how to map each of the BS clusters into appropriate BS cluster patterns satisfying the above conditions. To solve this, we draw a graph ${G}\left(\mathcal{N}\right)$ whose a vertex is ${\bf{d}}_i \in \mathcal{N}$ and an edge is made by Delaunay triangulation defined in $\mathcal{N}$. Delaunay triangulation for $\mathcal{N}$ is a triangulation of a plane, where each vertex of a triangle is ${\bf{d}}_i \in \mathcal{N}$. One important condition of the Delaunay triangulation is that when drawing the circumcircle of a triangle, there should be no point of $\mathcal{N}$ inside that circumcircle. Note that Delaunay triangulation always can be defined if there are more than three points, i.e., $\left| \mathcal{N} \right| \ge 3$.
The example of $G\left( \mathcal{N} \right)$ is illustrated in Fig.~\ref{2d_net_model} (b), when $\mathcal{N}$ is corresponding to Fig.~\ref{2d_net_model} (a).
%${G}\left(\mathcal{N}\right)$ is a simple, planar graph. 
%\begin{itemize}
%\item $G\left(\mathcal{N} \right)$ is a simple, planar graph.
%\item $\mathcal{V}_2\left({\bf{d}}_i, {\bf{d}}_j\right)$ is non-empty if and only if ${\bf{d}}_i$ and ${\bf{d}}_j$ are connected by an edge of the graph $G\left(\mathcal{N}\right)$, denoted by $E\left({\bf{d}}_i, {\bf{d}}_j \right) = 1$.
%\item 
%\end{itemize}
A key relationship between 2nd-order Voronoi region and Delaunay triangulation is that
%In $G\left( \mathcal{N} \right)$, 
if there is an edge between ${\bf{d}}_i$ and ${\bf{d}}_j$, i.e., $E\left({\bf{d}}_i, {\bf{d}}_j \right)=1$ in $G\left( \mathcal{N} \right)$, the corresponding 2nd-order Voronoi region is non-empty, i.e., $\lambda\left(\mathcal{V}_2\left({\bf{d}}_i, {\bf{d}}_j\right)\right) \ne 0$ \cite{higer_voro}.
%, denoted by $E\left({\bf{d}}_i, {\bf{d}}_j \right) = 1$ 
%Since the proposed BS cluster is determined by using the 2nd-order Voronoi region, and the 2nd-order Voronoi region ${\mathcal{V}}_2\left({\bf{d}}_i, {\bf{d}}_j \right)$ is not empty if and only if ${\bf{d}}_i$ and ${\bf{d}}_j$ are connected via the Delaunay triangulation , 
%Since each edge of $G\left(\mathcal{N}\right)$ is corresponding to the 2nd-order Voronoi region, 
%This implies that 
For this reason, designing BS cluster patterns, i.e., mapping $\mathcal{V}_2\left({\bf{d}}_i, {\bf{d}}_j\right)$ into a cluster pattern, is equivalent to mapping $E\left({\bf{d}}_i, {\bf{d}}_j \right)$ into a cluster pattern.
%to assign $\mathcal{V}_2\left({\bf{d}}_i, {\bf{d}}_j\right)$ in $\mathcal{P}_{\ell}$ is equivalent to assign $E\left({\bf{d}}_i, {\bf{d}}_j \right)$ in $\mathcal{P}_{\ell}$. 
%To satisfy the condition $2)$, any two edges sharing the same vertex should not be assigned to the same cluster pattern.
%Therefore, if create cluster patterns $\mathcal{P}_{\ell} = \{\left. \mathcal{V}_2\left({\bf{d}}_i, {\bf{d}}_j \right)\right| i,j \in \mathbb{Z}^+, i \ne j\}$ is equivalent to allocate 
When considering the cluster patterns as ``colors,"
the cluster pattern design problem is equivalent to the edge-coloring in graph theory \cite{intro_graph}.
The goal of edge-coloring is to assign colors to the edges of the graph so that any two edges sharing the same vertex have different colors, while minimizing the required number of colors to cover the whole graph. 
%A solution of the edge-coloring problem satisfies all the conditions for a desirable cluster pattern. At first, since the solution covers the whole graph, it satisfies 1). Next, a solution of
%Since a solution of the edge-coloring problem satisfies the 
By using edge-coloring for $G\left(\mathcal{N} \right)$, we are able to create cluster patterns $\mathcal{P}_{\ell}$, $\ell \in \{ 1,2,...,L\}$ that satisfies the above conditions.

\begin{figure}[t]
\centering
$\begin{array}{cc}
{\resizebox{0.4\columnwidth}{!}
{\includegraphics{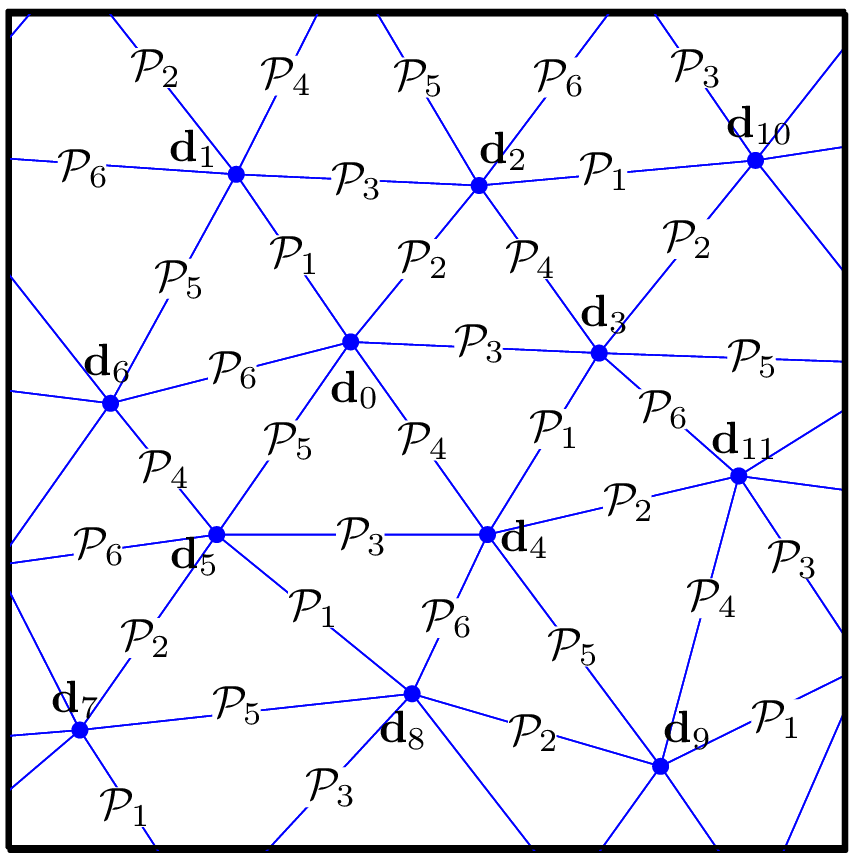}}}  &
{\resizebox{0.4\columnwidth}{!}
{\includegraphics{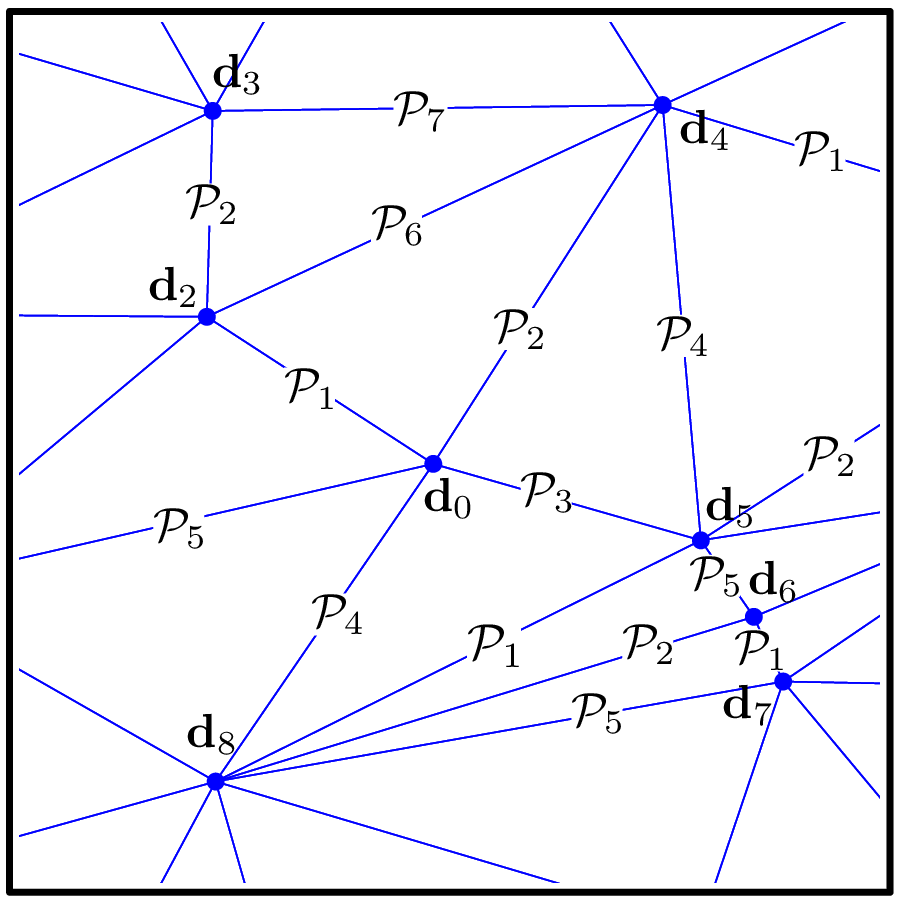}}}  \\  
\mbox{(a)} &  
\mbox{(b)}
\end{array}$
\caption{
Illustration of two different Delaunay triangulations. The membership of the cluster for its cluster pattern is indicated by the edge label.
(a) An example of the Delaunay triangulation of a symmetric network, i.e., the maximum degree is equal to the minimum degree.
(b) An example of the Delaunay triangulation of an asymmetric network, i.e., the maximum degree is strictly larger than the minimum degree. 
%In each example, it is observed that a cluster pattern is mapped to each edge so that there is no BS conflict.
} 
   \label{net_example_sym_asym}
\end{figure} 

Now we study the relationship between the cluster pattern design and the edge-coloring specifically. At first, we characterize $L$, which is the required number of different time-frequency resources to avoid the BS selection conflict problem. Thanks to the equivalence between designing cluster patterns and the edge-coloring, $L$ is equal to the minimum required number of colors for the edges. The minimum required number of colors is characterized in the following Theorem.
\begin{theorem}[Vizing's theorem, \cite{vizing}] \label{th_vizing}
A simple planar graph $G$ of maximum degree $\Delta(G)$ has chromatic index $\Delta(G)$ or $\Delta(G) + 1$ in general.
\end{theorem}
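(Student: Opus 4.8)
The lower bound $\chi'(G) \ge \Delta(G)$ is immediate: the $\Delta(G)$ edges incident to a vertex of maximum degree must receive pairwise distinct colors. So the entire content lies in the upper bound $\chi'(G) \le \Delta(G)+1$ (here $\chi'(G)$ denotes the chromatic index), and I note that planarity is not actually needed for this bound. The plan is to prove it for an arbitrary simple graph by \emph{induction on the number of edges}. The base case with no edges is trivial. For the inductive step, remove an edge $e = uv$; by the induction hypothesis $G - e$ admits a proper edge-coloring with the palette $\{1, \ldots, \Delta(G)+1\}$, and it remains to extend this coloring so that $e$ also receives a color.

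The key observation is that, since every vertex has degree at most $\Delta(G)$ while $\Delta(G)+1$ colors are available, each vertex has at least one \emph{missing} color, i.e.\ a color appearing on none of its incident edges. Let $\alpha$ be missing at $u$ and $\beta$ be missing at $v$. If some color is missing at both endpoints we simply assign it to $e$ and are done, so assume otherwise. I would then construct a \emph{Vizing fan} at $u$: a maximal sequence of distinct neighbors $v = v_0, v_1, v_2, \ldots$ such that the edge $u v_{k+1}$ carries a color that is missing at $v_k$. Rotating the colors along this fan, by reassigning to each $u v_k$ the color missing at $v_k$, shifts the uncolored slot along the fan while keeping the coloring proper, and reduces the problem to the last vertex of the fan.

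The argument is then completed by a \emph{Kempe chain} recoloring. Letting $\alpha$ be missing at $u$ and $\beta$ be missing at the terminal fan vertex, consider the maximal path starting at $u$ whose edges alternate strictly between colors $\alpha$ and $\beta$. Swapping $\alpha$ and $\beta$ along this path preserves properness and changes which of the two colors is missing at the path's endpoints. By choosing which fan vertex to rotate up to (according to where this alternating path terminates) one arranges that, after the swap, a common missing color is created at the two endpoints of the single edge still lacking a color, which can then be assigned that color. This closes the induction and yields $\chi'(G) \le \Delta(G)+1$.

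The main obstacle — and the crux of the whole proof — is the \emph{case analysis coordinating the fan with the Kempe chain}: one must rule out the configuration in which the $\alpha$/$\beta$-alternating path emanating from $u$ interferes with both candidate fan vertices simultaneously, so that a single swap genuinely frees a usable color. This rests on the fact that a maximal bichromatic alternating path has a unique other endpoint, combined with the maximality of the fan, which forces the missing color $\beta$ at the terminal vertex to already appear among the fan edges. Aligning the endpoints of the relevant alternating paths and excluding the bad coincidence is where all the subtlety lies; once it is handled, the extension of the coloring to $e$ is routine.
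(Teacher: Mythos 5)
The paper does not actually prove this statement: it is imported as a classical result and the ``proof'' consists solely of a pointer to the original reference. So there is no in-paper argument to compare against; what you have written is the standard fan/Kempe-chain proof of Vizing's theorem, which is exactly what the cited source contains. Your outline is structurally correct: the lower bound $\chi'(G)\ge\Delta(G)$ from the edges at a maximum-degree vertex, the observation that planarity is irrelevant to the bound $\chi'(G)\le\Delta(G)+1$ (the word ``planar'' in the statement is superfluous), induction on edges, the existence of a missing color at every vertex, the Vizing fan rotation, and the $\alpha/\beta$-alternating-path swap are all the right ingredients, and you correctly identify the fact that makes the final step work --- $u$ misses $\alpha$, so $u$ has degree at most one in the $\alpha/\beta$-subgraph and is therefore an endpoint of at most one maximal alternating path, whence the paths launched from the two candidate fan vertices cannot both terminate at $u$.

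The one reservation is that this decisive step is described rather than carried out. You say ``one arranges that, after the swap, a common missing color is created'' and defer the two-case analysis (does the alternating path from $v_k$, respectively from the intermediate fan vertex $v_{j}$ at which $\beta$ is missing, end at $u$ or not?) to a sentence acknowledging that this is ``where all the subtlety lies.'' As a self-contained proof that is a gap, since the entire difficulty of Vizing's theorem is concentrated there; as a proof outline it is accurate and faithful to the classical argument. To close it you would need to exhibit the two cases explicitly: if the relevant alternating path does not return to $u$, swap $\alpha$ and $\beta$ along it, rotate the fan up to the corresponding vertex, and color the freed edge $\alpha$; the uniqueness-of-endpoint fact guarantees at least one of the two choices succeeds.
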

\begin{proof}
See the reference \cite{vizing}.
\end{proof}
The chromatic index is known as the required minimum number of colors and the degree means the number of edges connected to the corresponding vertex. In general, it is NP-complete to find the chromatic index given an arbitrary graph, but we can observe that a graph $G$ made by Delaunay triangulation has chromatic index $\Delta(G)$ in many cases.
%It has been also proven that if $\Delta > 6$, the chromatic index of the given graph is $\Delta$ \cite{Sanders2001201}. Therefore, if $G\left(\mathcal{N}\right)$ has $\Delta \le 6$, 
%$L = \Delta + 1$ is enough to avoid BS conflicts, while if $\Delta > 6$, $L = \Delta$ is enough. 
%Since the planar graph has an average degree strictly less than $6$, and 
%the graph induced by the Delaunay triangulation,  $\Delta < 6$ rarely occurs.
Now we provide two examples to show how to create the cluster patterns for a given graph. The first example, described in Fig.~\ref{net_example_sym_asym} (a), is a symmetric network where every vertex of $G\left(\mathcal{N}\right)$ has the same degree, 6. This example is fitted to a scenario where each BS is deployed with a guard distance for preventing that two BSs are located very close together.
%The degree is the number of edges connected to the vertex.
Specifically, each BS is randomly located within a certain circle, and a center of each circle is located at a grid. %This scenario will be explained in Section VI.
%we have BS locations as illustrated in Fig.~\ref{net_example_sym_asym} (a).
For this example, $L = 6$. According to the description in Fig.~\ref{net_example_sym_asym} (a), the cluster pattern $\mathcal{P}_{\ell}$ for $\ell \in \{1,2,...,6\}$ can be designed as

\begin{align}
&\mathcal{P}_1 = \{ \mathcal{V}_2 \left({\bf{d}}_0, {\bf{d}}_1 \right), \mathcal{V}_2 \left({\bf{d}}_2, {\bf{d}}_{10} \right), \mathcal{V}_2 \left({\bf{d}}_3, {\bf{d}}_4 \right), \cdots \}, \nonumber \\
&\mathcal{P}_2 = \{ \mathcal{V}_2 \left({\bf{d}}_0, {\bf{d}}_{2} \right), \mathcal{V}_2 \left({\bf{d}}_3, {\bf{d}}_{10} \right), \mathcal{V}_2 \left({\bf{d}}_4, {\bf{d}}_{11} \right), \cdots \}, \nonumber \\
&\mathcal{P}_3 = \{ \mathcal{V}_2 \left({\bf{d}}_0, {\bf{d}}_{3} \right), \mathcal{V}_2 \left({\bf{d}}_1, {\bf{d}}_{2} \right), \mathcal{V}_2 \left({\bf{d}}_4, {\bf{d}}_5 \right), \cdots \}, \nonumber \\
&\mathcal{P}_4 = \{ \mathcal{V}_2 \left({\bf{d}}_0, {\bf{d}}_{4} \right), \mathcal{V}_2 \left({\bf{d}}_2, {\bf{d}}_{3} \right), \mathcal{V}_2 \left({\bf{d}}_5, {\bf{d}}_6 \right), \cdots \}, \nonumber \\
&\mathcal{P}_5 = \{ \mathcal{V}_2 \left({\bf{d}}_0, {\bf{d}}_{5} \right), \mathcal{V}_2 \left({\bf{d}}_1, {\bf{d}}_{6} \right), \mathcal{V}_2 \left({\bf{d}}_7, {\bf{d}}_8 \right), \cdots \}, \nonumber \\
&\mathcal{P}_6 = \{ \mathcal{V}_2 \left({\bf{d}}_0, {\bf{d}}_{6} \right), \mathcal{V}_2 \left({\bf{d}}_4, {\bf{d}}_{8} \right), \mathcal{V}_2 \left({\bf{d}}_3, {\bf{d}}_{11} \right), 
 \cdots \},
\end{align}
by using edge-coloring.
%and $\bigcup_{\ell=1}^{6}\mathcal{P}_{\ell} = \mathbb{R}^2$. 
By allocating different time-frequency resources to each cluster pattern $\mathcal{P}_{\ell}$ for $\ell = \{1,2,...,6\}$, no BS conflict occurs and users are guaranteed to communicate with their two closest BSs set.
%Since the union of the six cluster patterns covers the whole plane $\mathbb{R}^2$, all the users in the network can be served through the proposed cluster pattern, and are guaranteed to communicate without the dominant interference. 

Now we look an example in Fig.~\ref{net_example_sym_asym} (b), which describes an asymmetric network, i.e., each vertex of $G\left(\mathcal{N}\right)$ can have different numbers of degrees. 
This example is fitted to a case where each BS is deployed with a more random manner than the symmetric network case. 
In this case, $L = 7$ is enough. Similar to the previous example, the cluster pattern $\mathcal{P}_{\ell}$ for $\{1,2,...,7\}$ can be 
\begin{align} \label{ex_clus_pattern_asym}
&\mathcal{P}_1 = \{ \mathcal{V}_2 \left({\bf{d}}_0, {\bf{d}}_2 \right), \mathcal{V}_2 \left({\bf{d}}_5, {\bf{d}}_{8} \right), \mathcal{V}_2 \left({\bf{d}}_6, {\bf{d}}_{7} \right), \cdots \}, \nonumber \\
&\mathcal{P}_2 = \{ \mathcal{V}_2 \left({\bf{d}}_2, {\bf{d}}_{3} \right), \mathcal{V}_2 \left({\bf{d}}_0, {\bf{d}}_{4} \right), \mathcal{V}_2 \left({\bf{d}}_6, {\bf{d}}_{8} \right), \cdots \}, \nonumber \\
&\mathcal{P}_3 = \{ \mathcal{V}_2 \left({\bf{d}}_0, {\bf{d}}_{5} \right), \cdots \}, \nonumber \\
&\mathcal{P}_4 = \{ \mathcal{V}_2 \left({\bf{d}}_4, {\bf{d}}_{5} \right), \mathcal{V}_2 \left({\bf{d}}_0, {\bf{d}}_{8} \right), \cdots \}, \nonumber \\
&\mathcal{P}_5 = \{ \mathcal{V}_2 \left({\bf{d}}_7, {\bf{d}}_{8} \right), \cdots \}, \nonumber \\
&\mathcal{P}_6 = \{ \mathcal{V}_2 \left({\bf{d}}_2, {\bf{d}}_{4} \right),  \cdots \}, \nonumber \\
&\mathcal{P}_7 = \{ \mathcal{V}_2 \left({\bf{d}}_3, {\bf{d}}_{4} \right),  \cdots \}, %\nonumber \\
%&\mathcal{P}_8 = \{ ,  \cdots \}, %\nonumber \\
%&\mathcal{P}_9 = \{ ,  \cdots \}, 
\end{align}
also by using edge-coloring.
%and $\bigcup_{\ell=1}^{7}\mathcal{P}_{\ell} = \mathbb{R}^2$. 
No BS conflict occurs as long as a different time-frequency resource is assigned to a different cluster pattern. In this paper, we do not specify an algorithm for solving edge-coloring, rather we assume that edge-coloring is solved perfectly. Some useful algorithms for edge-coloring are found in \cite{gabow_simple, edgecutalgo1}. In \cite{gabow_simple}, an edge-coloring algorithm for simple graphs was proposed and it was proven that the algorithm complexity is $\mathcal{O}(\Delta(G)\left|\mathcal{N}\right| \sqrt{\left| \mathcal{N}\right|\log(\left|\mathcal{N}\right|)})$. In \cite{edgecutalgo1}, an edge-coloring algorithm for planar graphs whose $\Delta(G) \ge 9$ was proposed and the algorithm complexity was proven to be $\mathcal{O}(\left|\mathcal{N}\right| \sqrt{\left| \mathcal{N}\right|\log(\left|\mathcal{N}\right|)})$.

%by assigning different time-frequency resources to different cluster patterns. 
Observe that although the general edge-coloring problem is NP-complete, 
complexity is not a bottleneck since the proposed method (making BS clusters and BS cluster patterns) does not have to be performed frequently. 
% are for semi-static BS coordination. 
Once BS clusters and cluster patterns are made, they are preserved until the geometry of the BSs changes.
%Unless the geometry of the BSs changes, the cluster patterns are preserved. 
Conventionally, the BS geometry would not be changed on the order of months or years.

\subsection{Edge Cutting }
In some BS topologies, particularly in an asymmetric network case, it is possible that unnecessary time-frequency resources are wasted. For instance, in Fig.~\ref{net_example_sym_asym} (b), we found that $L=7$ time-frequency resources are required for the corresponding edge-coloring. Focusing on BS located at ${\bf{d}}_0$ in Fig.~\ref{net_example_sym_asym} (b), however, it only uses $5$ resources therefore $2$ resources are wasted. 
%For instance, we numerically obtain that the required number of time-frequency resources is $\approx 11$ when considering a network where each BS's location is distributed according to a homogeneous PPP with $\lambda = 3 \times 10^{-7}$. Therefore, the ergodic spectral efficiency is decreased by $\approx 0.09$. 
%Therefore, to apply the proposed clustering in the whole network plane without BS conflicts, $11$ time-frequency resources are needed, which is excessive. %Since the average degree of vertex is $6$, we can conjecture that some of the 2nd-order Voronoi regions are small, which means only 
Due to these wasted resources, the network sum throughput even can be degraded.
%even though the proposed clustering improves the rate coverage probability performance, the ergodic spectral efficiency can be degraded due to the excessive time-frequency use. 
To relieve this, the edge cutting algorithm is proposed. The key idea of this algorithm is decreasing the maximum degrees $\Delta(G)$ to $\Delta_{\rm EC}(G) < \Delta(G)$ by cutting some edges in the original graph $G(\mathcal{N})$.
%the edge-cutting algorithm is cutting edges of $G\left(\mathcal{N} \right)$ so that the maximum number of degrees of $G\left(\mathcal{N} \right)$ is forced to be $\Delta_{\rm EC} < \Delta$. 
Specifically, in $G\left(\mathcal{N}\right)$, if a vertex ${\bf{d}}_i \in \mathcal{N}$ has degree $D\left({\bf{d}}_i \right) > \Delta_{\rm EC}(G)$,
$D\left({\bf{d}}_i \right) - \Delta_{\rm EC}(G)$ edges are selected and cut.
Repeating this to every vertex in a graph, we construct a cut-graph $G^{\rm cut}\left( \mathcal{N}\right)$ 
whose the maximum degree is $\Delta_{\rm EC}(G)$.
%, and this decreases the required number of time-frequency resources to $\Delta_{\rm EC}(G)$ (See Theorem \ref{th_vizing}). 
%The cut-graph $G^{\rm cut}\left( \mathcal{N}\right)$ is a subgraph of $G\left(\mathcal{N} \right)$.
Note that the constructed cut-graph $G^{\rm cut}\left( \mathcal{N}\right)$ is a subgraph of $G\left(\mathcal{N} \right)$.
Assuming that an edge $E\left({\bf{d}}_i , {\bf{d}}_j\right)$ is cut in $G^{\rm cut}\left( \mathcal{N}\right)$, users in $\mathcal{V}_2\left({\bf{d}}_i, {\bf{d}}_j \right)$ are 
not served.
By doing this, the required time-frequency resources are $\Delta_{\rm EC}(G)$.
%not served while exploiting $G^{\rm cut}\left( \mathcal{N}\right)$. By alternatively using different cut-graphs $G^{{\rm cut},1}\left(\mathcal{N} \right)$, $G^{{\rm cut},2}\left(\mathcal{N} \right), \cdots$, all the regions can be fairly covered. 
By applying the edge-cutting algorithm, the overall network performance can be improved by saving time-frequency resources. Specifically, the required time-frequency resources $L$ decreases to $\Delta_{\rm EC}(G)$ from $\Delta(G)$ (See Theorem \ref{th_vizing}), where $\Delta_{\rm EC}(G)$ can be controlled as an algorithm parameter.

There are two noticeable points in our edge cutting algorithm. The first one is that selecting the edges to cut has a significant effect to the performance of the algorithm. A desirable way is selecting edges inversely proportional to the area of the corresponding 2nd-order Voronoi region. For instance, assume that we have two candidate edges for cutting, $E\left({\bf{d}}_i, {\bf{d}}_j \right)$ and $E\left({\bf{d}}_i, {\bf{d}}_k\right)$. If $\lambda(\mathcal{V}_2\left({\bf{d}}_i, {\bf{d}}_j \right)) > \lambda(\mathcal{V}_2\left({\bf{d}}_i, {\bf{d}}_k \right))$, we cut $E({\bf{d}}_i, {\bf{d}}_k)$ since its 2nd-order Voronoi area is smaller than the other's area. By doing this, a large cooperative area remains while saving the resources.
The second point is that we can restore edges after finishing the edge cutting. The intuition is that the ideal situation for the proposed method is that every BS has the same degree, as in a symmetric network case Fig.~4 (a). For this reason, we find a vertex whose degree is less than $\Delta_{\rm EC}(G)$ and re-connect edges while satisfying the maximum degree of the graph is $\Delta_{\rm EC}(G)$. The edge restore order is reverse of the edge cut order, so that $E\left({\bf{d}}_i, {\bf{d}}_j\right)$ is restored if $\lambda(\mathcal{V}_2\left({\bf{d}}_i, {\bf{d}}_j \right)) > \lambda(\mathcal{V}_2\left({\bf{d}}_i, {\bf{d}}_k \right))$.
The pseudo code description of the edge-cutting algorithm is described in the following using a MATLAB style notation.

\begin{algorithm}
\caption{Edge-cutting algorithm}
\begin{algorithmic}
\State {\bf{Input}}: $G\left(\mathcal{N}\right)$, $N_{\rm BS} = \left| \mathcal{N}\right| $, $\Delta_{\rm EC}(G)$.
\State {\bf{Initialize}}: Let $S_{\rm Neighbor}({\bf{d}}_i)$, $i \in \mathcal{N}$ be a set of vertices connected via the edge to ${\bf{d}}_i$ in $G(\mathcal{N})$\\
%${\bf{E}} \in \mathbb{N}^{N_{\rm BS} \times N_{\rm BS}}$, ${\bf{E}}{\left(i,j\right)}= 1$ if and only if ${\bf{d}}_i \in \mathcal{N}$ and ${\bf{d}}_j \in \mathcal{N}$ is connected by an edge in $G{\left(\mathcal{N}\right)}$, otherwise ${\bf{E}}{\left(i,j\right)}= 0$. ${\bf{ E}}_{\rm cut} = {\bf E}$.\\
(Area-based edge cutting)
\FOR {$i =1:N_{\rm BS}$} 
\IF {$\left|S_{\rm Neighbor}({\bf{d}}_i) \right| > \Delta_{\rm EC}(G) $}
\State Find ${\bf{d}}_j \in S_{\rm Neighbor}({\bf{d}}_i)$
% and sort them according to $(\lambda(\mathcal{V}_2\left({\bf{d}}_i, {\bf{d}}_{j} \right)))$
\State Cut ${\bf{d}}_j$ in order of $(\lambda(\mathcal{V}_2\left({\bf{d}}_i, {\bf{d}}_{j} \right)))$ (ascending order) so that {$\left|S_{\rm Neighbor}({\bf{d}}_i) \right| \le \Delta_{\rm EC}(G) $}
%\State ${\rm Cand\_set} = {\rm find}\left({\bf{E}}_{\rm cut}{\left(:, i \right)} ==1 \right)$
%\FOR  {$i_{\rm cand} = 1:{\rm{length}({\rm Remove\_candidates})}$}
%\State ${\rm{Area\_set}}$ = ${\rm{sort}}(\lambda(\mathcal{V}_2\left({\bf{d}}_i, {\bf{d}}_{\rm Cand\_set} \right)))$
%\ENDFOR
%\State ${\bf{E}}_{\rm cut}{\left({\rm Area\_set}\left(1:{\rm sum}\left({\bf{E}}_{\rm cut}{\left(:, i \right)}\right) - \Delta_{\rm EC}(G)  \right), i\right) }= 0$
%\State ${\bf{E}}_{\rm cut}{\left(i, {\rm Area\_set}\left(1:{\rm sum}\left({\bf{E}}_{\rm cut}{\left(:, i \right)}\right) - \Delta_{\rm EC}(G)  \right)\right) }= 0$
\ENDIF
\ENDFOR
\\ (Area-based edge restore)
\FOR {$i =1:N_{\rm BS}$} 
\IF {$\left|S_{\rm Neighbor}({\bf{d}}_i) \right| < \Delta_{\rm EC}(G) $}
\State Find ${\bf{d}}_j$ cut in the area-based edge cutting
% and sort them according to $(\lambda(\mathcal{V}_2\left({\bf{d}}_i, {\bf{d}}_{j} \right)))$
\State Restore ${\bf{d}}_j$ in order of $(\lambda(\mathcal{V}_2\left({\bf{d}}_i, {\bf{d}}_{j} \right)))$ (descending order) satisfying $\max(S_{\rm Neighbor}({\bf{d}}_k)) \le \Delta_{\rm EC}(G),k\in \mathcal{N}$
%while satisfying $\max(S_{\rm Neighbor}({\bf{d}}_k)) \le \Delta_{\rm EC}(G),k\in \mathcal{N}$
%\ENDFOR
%\WHILE  {$\max({\rm sum}({\bf{E}}_{\rm cut}(:,j))) == \min(\Delta_{{\rm EC}}(G), {\rm sum}({\bf{E}}(:,j))),j\in\{i, {\rm Cand\_set}\}$}
%\State $k = k+1$
%\State ${\bf{E}}_{\rm cut}{\left({\rm Area\_set}(k), i\right) }= 1$
%\State ${\bf{E}}_{\rm cut}{\left(i,  {\rm Area\_set}(k) \right) }= 1$
%\ENDWHILE
\ENDIF
\ENDFOR
%\State In $G^{\rm cut}\left(\mathcal{N} \right)$, ${\bf{d}}_i$ and ${\bf{d}}_j$ are connected if and only if ${\bf{E}}_{\rm cut}\left(i,j \right) = 1$.
\State Output: $G^{\rm cut}\left(\mathcal{N} \right)$
\end{algorithmic}
\end{algorithm}
\begin{remark}
Since users in the cut region are not served in the proposed algorithm, there can be a fairness issue. This is able to be resolved by using the proposed method and conventional single cell operation method alternatively. For example, we can use the proposed method during time fraction $\alpha$ and conventional single cell operation during time fraction $(1-\alpha)$. 
Every region is covered during time fraction $(1-\alpha)$. In addition to that, more advanced scheduling algorithm such as proportional fairness can be also incorporated. 
\end{remark}

\begin{remark}
In the proposed edge cutting algorithm, $\Delta_{\rm EC}(G)$ can be determined in various ways. One promising way is to make that all vertices in $G(\mathcal{N})$ have the same degree, so that time-frequency resources are equally allocated to every BS. To do this, $\Delta_{\rm EC}(G)$ should be equal to the minimum degree of $G$, i.e, $\min D({\bf{d}}_i), \; \forall {\bf{d}}_i \in \mathcal{N}$. Other ways to determine $\Delta_{\rm EC}(G)$ can be incorporated depending on BS geometrical features and user density.
For instance, one can formulate an optimization problem as in \cite{6615901}, where $\Delta_{\rm EC}(G)$ is maximized while guaranteeing the difference between $\max D({\bf{d}}_i), \; \forall {\bf{d}}_i \in \mathcal{N}$ and $\min D({\bf{d}}_i), \; \forall {\bf{d}}_i \in \mathcal{N}$ is less than a particular value.
\end{remark}

Finally, we summarize the procedure of the proposed method including the edge cutting algorithm as in the following:
\begin{enumerate}[1.]
\item Tessellate a network plane into 2nd-order Voronoi regions $\mathcal{V}_2({\bf{d}}_i, {\bf{d}}_j)$ for ${\bf{d}}_i, {\bf{d}}_j \in \mathcal{N}$.
\item Make a BS cluster $\{{\bf{d}}_i, {\bf{d}}_j\}$ according to the corresponding $\mathcal{V}_2({\bf{d}}_i, {\bf{d}}_j)$ for ${\bf{d}}_i, {\bf{d}}_j \in \mathcal{N}$.
\item Draw a graph $G(\mathcal{N})$ by Delaunay triangulation.
\item Perform the edge cutting algorithm.
\item By using edge-coloring, make BS cluster pattens $\mathcal{P}_{\ell}$ for $\ell \in \{1,..,L\}$ and assign different time-frequency resource to each BS cluster pattern.
\end{enumerate}
Now we investigate computational complexity of each step. Assuming we have $\left|\mathcal{N}\right|$ points, constructing $\left|\mathcal{N}\right|$ points 2nd-order Voronoi region requires $\mathcal{O}(4\left|\mathcal{N}\right| \log \left|\mathcal{N}\right|)$ \cite{DTL_kNVD}, drawing Delaunay triangulation demands $\mathcal{O}(\left|\mathcal{N}\right| \log \left|\mathcal{N}\right|)$ \cite{Berg:2008:CGA:1370949}, and the edge cutting algorithm needs $\mathcal{O}(\left|\mathcal{N}\right| \Delta(G))$. In general, edge-coloring is NP-complete. 
The complexity is not important part in the proposed method since BS clusters and cluster patterns made by the proposed method are preserved until the geometry of the BSs changes, which occurs on the order of months or years conventionally.
% and the BS ge
%Conventionally, the BS geometry would not be changed on the order of months or years.

\subsection{Signal Model}
%For tractability of the analysis, it is assumed that $K$ single antenna users are selected to be associated with a BS in each cooperative area, resulting in that there are total $2K$ users to be served in each 2nd-order Voronoi region irrespective of the size of the region. For instance, $2K$ users are in $\mathcal{V}_2\left({\bf{d}}_i, {\bf{d}}_j \right)$, where $K$ users are associated with the BS located at ${\bf{d}}_i$, and the other $K$ users are associated with the BS located at ${\bf{d}}_j$.
%This underlying assumption is justified by assuming that the number of users on a field is far larger than the number of deployed BSs, so that there are at least $2K$ users in each 2nd-order Voronoi region. More specific user scheduling algorithms are beyond the scope of this paper. 

Now we describe the signal model. We focus on a user associated to the BS located at ${\bf{d}}_0$. We denote this user as the tagged user. 
%Since there are $K$ users for each BS and each BS has $L$ different 2nd-order Voronoi regions, we assume that there are $K/L$ users in each 2nd-order Voronoi region. For simplicity, we assume $K/L \in \mathbb{N}$. The symmetric network assumption can be generalized to the asymmetric network by specifying the possibly different number of users selected in each 2nd-order Voronoi region.
We also assume that the tagged user is located on the origin. This assumption is easily generalized to arbitrary user location by shifting each BS's location to ${\bf{d}}_i - {\bf{u}}$ for $i \in \mathbb{Z}^+$, where ${\bf{u}}$ is a location of the tagged user.
Under the assumption that the tagged user is in $\mathcal{V}_2\left({\bf{d}}_0, {\bf{d}}_1\right)$ and $\mathcal{V}_2\left({\bf{d}}_0, {\bf{d}}_1\right)$ is not cut in the edge cutting algorithm, a BS pair $\{{\bf{d}}_0, {\bf{d}}_{1} \}$ forms a cluster according to the proposed cluster pattern. 
% In a cluster pattern $P_{\ell}$, it is assumed that a BS pair located at $\{{\bf{d}}_0, {\bf{d}}_{i} \}$ forms a cluster according to the proposed cluster pattern.
Let's also assume $\mathcal{V}_2\left({\bf{d}}_0, {\bf{d}}_1\right) \in \mathcal{P}_{\ell}$.
% and $\mathcal{C}_{\ell} = \{{\bf{d}}_0, {\bf{d}}_1 \}$. 
In each BS located at ${\bf{d}}_i$, an information symbol vector ${\bf{s}}^{\ell}_{i} \in \mathbb{C}^{K}$, where $ {\bf{s}}^{\ell}_{i} = \left[s^{\ell}_{i,1}, ..., s^{\ell}_{i,K} \right]^T$ for $i \in \mathbb{Z}^+$ is encoded to be transmitted to $K$ users. 
%Each BS at ${\bf{d}}_j$ for $j\in \{0,i\}$ separately encodes an information symbol $s^{\ell}_{j,k}$ and sends it to the $k$-th user of the BS at ${\bf{d}}_j$.
%An information symbol vector is written as ${\bf{s}}^{\ell}_{j} = \left[s^{\ell}_{j,1}, ..., s^{\ell}_{j,K/L} \right]^T$.
%, where $K'$ is the number of selected users in a cooperative area $\mathcal{V}_2\left({\bf{d}}_0, {\bf{d}}_i \right)$.
The average power of this transmit symbol vector satisfies $\mathbb{E}\left[ \left\| {\bf{s}}^{\ell}_{i} \right\|^2 \right] \le P$.
When transmitting the symbol vector, a linear beamforming matrix ${\bf{V}}_{i}^{\ell} = \left[{\bf{v}}_{i,1}^{\ell}, ..., {\bf{v}}_{i,K}^{\ell} \right]$, whose ${\bf{v}}_{i,k}^{\ell} \in \mathbb{C}^{N}$ and  $\left\| {\bf{v}}_{i,k}^{\ell}\right\|=1$ for $k = 1,...,K$ is used. %for the multi-user communication. 
%Without loss of generality, we describe the signal model received by $u_{\left(i,j\right),k}^{\ell}$ for arbitrary $k = 1,...,K/4$ and arbitrary $i,j\in \mathbb{Z}$.  
%Among $K/L$ users associated with the BS at ${\bf{d}}_0$,  For simplicity of the notations, we assume that the tagged user is located on the origin.
%For simplicity of notation, the subscript notations of the tagged user are omitted and the associated BS of the tagged user is denoted by $B_0$. The BS cluster set including $B_0$ is written as $\mathcal{C}^{\ell}_0$. 
The observation from the tagged user is given by
%\begin{align} \label{2d_sig_model}
%y^{\ell}=& \underbrace{\left\| {\bf{d}}_0 \right\|^{-\beta/2}\left({{\bf{h}}}_0^{\ell}\right)^T {\bf{V}}^{\ell}_0{{\bf{s}}}_0^{\ell}}_{ {\rm desired\;signal \;} {\rm +\;inter\;user\;interference} } +  \underbrace{
%\sum_{{\bf{d}}_j \in \mathcal{C}_{\ell} \backslash {\bf{d}}_0}
% \left\| {\bf{d}}_{1} \right\|^{-\beta/2} \left( {{\bf{h}}}^{\ell}_{1}\right)^T {\bf{V}}^{\ell}_{1} {{\bf{s}}}_{1}^{\ell}}_{\rm intra\;cluster\;interference}   
% +\underbrace{\sum_{{\bf{d}}_j \in \mathcal{N}_{\ell} \backslash \{{\bf{d}}_0, {\bf{d}}_1 \}} \left\| {\bf{d}}_j \right\|^{-\beta/2} \left( {{{\bf{h}}}}^{\ell}_{j} \right)^T {\bf{V}}^{\ell}_{j} {{\bf{s}}}^{\ell}_{j}}_{\rm out\;of\;cluster\;interference}   + n^{\ell} ,
%\end{align}
\begin{align} \label{2d_sig_model}
y^{\ell}=& \underbrace{\left\| {\bf{d}}_0 \right\|^{-\beta/2}\left({{\bf{h}}}_0^{\ell}\right)^T {\bf{V}}^{\ell}_0{{\bf{s}}}_0^{\ell}}_{ {\rm desired\;signal \;} {\rm +\;inter\;user\;interference} } +  \underbrace{
%\sum_{{\bf{d}}_j \in \mathcal{C}_{\ell} \backslash {\bf{d}}_0}
 \left\| {\bf{d}}_{1} \right\|^{-\beta/2} \left( {{\bf{h}}}^{\ell}_{1}\right)^T {\bf{V}}^{\ell}_{1} {{\bf{s}}}_{1}^{\ell}}_{\rm intra\;cluster\;interference} \nonumber \\   
& +\underbrace{\sum_{{\bf{d}}_j \in \mathcal{N}_{\ell} \backslash \{{\bf{d}}_0, {\bf{d}}_1 \}} \left\| {\bf{d}}_j \right\|^{-\beta/2} \left( {{{\bf{h}}}}^{\ell}_{j} \right)^T {\bf{V}}^{\ell}_{j} {{\bf{s}}}^{\ell}_{j}}_{\rm out\;of\;cluster\;interference}   + n^{\ell} ,
\end{align}
where ${{\bf{h}}_{i}^{\ell}} \in \mathbb{C}^{N}$ is the channel coefficient vector from the BS at ${\bf{d}}_i$ to the tagged user. 
The channel coefficients are assumed to have independent and identically distributed (IID) complex Gaussian entries with zero mean and unit variance, i.e., $\mathcal{CN}\left(0,1\right)$. 
%The channel vector that has no subscript notation, $\left({\bf{h}}^{\ell}\right)^T \in \mathbb{C}^{N}$, denotes that the channel coefficient vector from the $B_0$ to the tagged user. 
%The entries of ${\bf{h}}^{\ell}$ is also assumed to be IID complex Gaussian with zero mean and unit variance.
%$D_{\left(v,w\right)}$ is the distance from $B_{\left(v,w\right)}$ to the tagged user, and $D$ is the distance from $B_0$ to the tagged user. 
The distance dependent pathloss is with a reference at $1$m and has an exponent of $\beta$ and the additive Gaussian noise is $n^{\ell}$ that follows $\mathcal{CN}\left(0, \sigma^2\right)$. $\mathcal{N}_{\ell} \subseteq \mathcal{N}$ is a set of BSs included in the same cluster pattern with the BS at ${\bf{d}}_0$, i.e., $\mathcal{N}_{\ell} = \{ \left. {\bf{d}}_i \right| \mathcal{V}_2\left({\bf{d}}_i, {\bf{d}}_j \right) \in \mathcal{P}_{\ell}, \forall j\}$. 
The received signal can be separated into three parts: the desired signal and the inter user interference from the associated BS at ${\bf{d}}_0$, the intra-cluster interference from a BS set ${\bf{d}}_1$, and the unmanageable out-of-cluster interference, respectively.

Before transmission of an information symbol, pilot symbols are exploited to learn the downlink channel coefficient vector ${\bf{h}}^{\ell}_0$ and ${\bf{h}}_{1}^{\ell}$.
% if ${\bf{d}}_j \in \mathcal{C}_\ell$
%, i.e., the BS is included in the corresponding BS cluster. 
Then, the user sends the obtained CSI back to the corresponding BSs
%$\mathcal{C}^\ell_{\left(i,j\right)}$ 
via an error-free feedback link. CSI at the transmitter (CSIT) within the same BS cluster is allowed. 
To mitigate intra-cluster interference, CBF is employed by using obtained CSIT.
%The term ``weighted" is used because we consider a distance term when designing ZF beamformer. 
Assuming that our tagged user's index is $k$, the beamforming matrix ${\bf{V}}^{\ell}_{0}$ and ${\bf{V}}^{\ell}_{1}$ are designed to satisfy the following condition.
\begin{align}
{\rm{maximize}}: &\left| \left({\bf{h}}^{\ell}_0\right)^T {\bf{v}}^{\ell}_{0,k} \right|^2   \\
{\rm{subject\;to}}: & \left| \left( {\bf{h}}^{\ell}_0\right)^T {\bf{v}}^{\ell}_{0, k'} \right|^2 =0\;\;{\rm{for}}\;\; k' \ne k  \\
%& \left| \left( {\bf{h}}^{\ell}_{1}\right)^T {\bf{v}}^{\ell}_{1, k''} \right|^2  =0 \;\;{\rm{for}}\;\;
%{\bf{d}}_1 \in \mathcal{C}_{\ell}\;\;{\rm{and}}\;\; 1\le k'' \le K, 
& \left| \left( {\bf{h}}^{\ell}_{1}\right)^T {\bf{v}}^{\ell}_{1, k''} \right|^2  =0 \;\;{\rm{for}}\;\;
 1\le k'' \le K, 
\end{align}
where the constraints of multi-user CBF are for nullifying the inter-user interference and for removing intra-cluster interference, respectively.
Under the conventional zero forcing (ZF) conditions, 
it is always possible to find such ${\bf{V}}^{\ell}_0$ and ${\bf{V}}_{j}^{\ell}$ if $2K \le N$ with high enough probability if each channel coefficient is IID.

\subsection{Performance Metrics}
If perfect CSIT within the same BS cluster is allowed, then the inter-user interference and intra-cluster interference terms in \eqref{2d_sig_model} are zero. 
Assuming that $\tilde {h}^{\ell}_0$ indicates a modified channel coefficient after multiplying with a beamforming matrix ${\bf{V}}^{\ell}_0$ and $s^{\ell}_{0,0}$ is the information symbol for the tagged user, we have the following modified received signal
%After removing intra-cluster interference by using CBF, the resultant received signal $\tilde y^{\ell}$ for $\ell = 1,2,3,4$ is given by
%\begin{align}
%\tilde y^{\ell} =&  \left\|{\bf{d}}_0 \right\|^{-\beta/2} \tilde {h}^{\ell}_0 {s}^{\ell}_{0,0} +\sum_{{\bf{d}}_j  \in \mathcal{N}_{\ell} \backslash \{{\bf{d}}_0, {\bf{d}}_1\}} \left\| {\bf{d}}_j \right\|^{-\beta/2} \left( {{\bf{h}}}^{\ell}_{j}\right)^T {\bf{V}}^{\ell}_{j} {{\bf{s}}}^{\ell}_{j}   + n^{\ell}.
%\end{align}
\begin{align}
\tilde y^{\ell} =&  \left\|{\bf{d}}_0 \right\|^{-\beta/2} \tilde {h}^{\ell}_0 {s}^{\ell}_{0,0} +\sum_{{\bf{d}}_j  \in \mathcal{N}_{\ell} \backslash \{{\bf{d}}_0, {\bf{d}}_1\}} \left\| {\bf{d}}_j \right\|^{-\beta/2} \left( {{\bf{h}}}^{\ell}_{j}\right)^T {\bf{V}}^{\ell}_{j} {{\bf{s}}}^{\ell}_{j}  \nonumber \\
& + n^{\ell}.
\end{align}
The SINR of the information symbol $s^{\ell}_{0,0}$ is
\begin{align} \label{sinr}
{{\rm SINR}}_{|{\ell}} &=  \frac{ \left\| {\bf{d}}_0 \right\|^{-\beta} \left| \tilde h^{\ell}_0 \right|^2 }{{ \sum_{{\bf{d}}_j \in \mathcal{N}_{\ell} \backslash \{{\bf{d}}_0, {\bf{d}}_1
\}} \left\| {\bf{d}}_j \right\|^{-\beta}\left|\left( {{\bf{h}}}^{\ell}_{j} \right)^T {\bf{V}}^{\ell}_{j} \right|^2}+\sigma^2/P/K  } \nonumber \\
&= \frac{  \left| \tilde h^{\ell}_0 \right|^2 }{I_{{\ell}}+ \left\|{\bf{d}}_0 \right\|^{\beta}/{\rm{SNR}}/K }, 
\end{align}
where $I_{{\ell}} = { \sum_{ {\bf{d}}_j \in \mathcal{N}_{\ell} \backslash \{{\bf{d}}_0, {\bf{d}}_1\}} \left( \left\| {\bf{d}}_j \right\| / \left\| {\bf{d}}_0 \right\| \right)^{-\beta}\left| \left({{\bf{h}}}^{\ell}_{j}\right)^T {\bf{V}}^{\ell}_{j} \right|^2}$, and $P/\sigma^2 = {\rm{SNR}}$. 
%the CCDF of the corresponding SINR in phase $\ell$ is defined as
%\begin{align} 
%&F_{{{\rm SIR}}_{|\mathcal{P}_{\ell}}}^{c}\left(P, D, \mathcal{D}^{\ell}_{0}, \beta;\gamma\right) = \mathbb{P}\left[ \frac{P \left| \tilde h^{\ell} \right|^2 }{{P \sum_{B_{\left(v,w\right)} \in \mathcal{N} \backslash \mathcal{C}^{\ell}_{0}}\left(d_{\left(v,w\right)} \right)^{-\beta}\left|{{h}}^{\ell}_{\left(v,w\right)} {\bf{v}}^{\ell}_{\left(v,w\right)}{{\bf{s}}}^{\ell}_{\left(v,w\right)} \right|^2} + \left(D\right)^{\beta} } > \gamma \right],  \label{ccdf_sinr} \\  
%&{\rm{where}}\;\mathcal{D}^{\ell}_{0} = \left\{\left. d_{\left(v,w\right)} \right| B_{\left(v,w\right)} \in \mathcal{N} \backslash \mathcal{C}^{\ell}_{0} \right\}. \label{sinr_d}
%\end{align}
%From the defined SINR model, 
Given the system assumptions, the rate coverage probability is given as
\begin{align}
&P^{\ell} \left({\rm{SNR}}, \left\| {\bf{d}}_0 \right\|, \mathcal{D}_{\ell},N, K, \beta, \gamma \right) = \mathbb{P}\left[ \log_2\left(1+  {{\rm SINR}}_{|{\ell}} \right) > \gamma \right], \label{rate_cov_def} 
%&{\rm{where}}\;\mathcal{D}_{\ell} = \left\{\left.  {\bf{d}}_j  \right| {\bf{d}}_j \in \mathcal{N}_{\ell} \backslash \{{\bf{d}}_0, {\bf{d}}_1\} \right\}, \nonumber
\end{align}
where $\mathcal{D}_{\ell} = \left\{\left.  {\bf{d}}_j  \right| {\bf{d}}_j \in \mathcal{N}_{\ell} \backslash \{{\bf{d}}_0, {\bf{d}}_1\} \right\}$ and $\gamma$ is the rate threshold. The ergodic spectral efficiency is then
\begin{align} 
&R^{\ell}  \left({\rm{SNR}}, \left\| {\bf{d}}_0 \right\|, \mathcal{D}_{\ell},N, K, L, \beta\right) = \frac{1}{L}\mathbb{E}\left[ \log_2\left(1+  {{\rm SINR}}_{|{\ell}} \right) \right]. \label{rate_def}
%&{\rm{where}}\;\mathcal{D}_{\ell} = \left\{\left.  {\bf{d}}_j \right| {\bf{d}}_j \in \mathcal{N}_{\ell} \backslash \{{\bf{d}}_0, {\bf{d}}_1\} \right\}. \nonumber
\end{align}
%where $\mathcal{D}_{\ell} = \left\{\left.  {\bf{d}}_j \right| {\bf{d}}_j \in \mathcal{N}_{\ell} \backslash \{{\bf{d}}_0, {\bf{d}}_1\} \right\}$. 
The pre-log term $1/L$ is used because $L$ time-frequency resources are used in a network. $L$ is determined as an algorithm parameter in the edge cutting algorithm.

\section{Rate Coverage Analysis}
In this section, we derive the rate coverage probability in a closed form under the interference limited assumption. To this end, we first introduce the following lemma.
%Lemma \ref{lem_ch}. 
Lemma \ref{lem_ch} provides distributions of the desired channel gain and the interference power.

%as in the following theorem.
%, i.e., SNR is high enough, so that ${\rm{SINR}}_{|\ell} \approx {\rm{SIR}}_{|\ell}$.

\begin{lemma} \label{lem_ch}
The desired channel gain $\left|\tilde h^{\ell}_0 \right|^2$ follows Chi-squared distribution with $2\left(N - 2K +1\right)$ degrees of freedom.  The out-of-cluster interference $I_{{\ell}}$ follows weighted sum of Chi-squared distribution with $2K$ degrees of freedom. 
%The weights are the ratio of the distances from the associated BS to the user and the interfering BS to the user.
\end{lemma}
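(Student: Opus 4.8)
The plan is to condition on all the beamforming matrices and exploit the rotational invariance of the isotropic channel vectors, treating the two claims separately.

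For the desired channel gain, I would first argue that the beamformer $\mathbf{v}^{\ell}_{0,k}$ of the tagged user is the maximum-ratio direction inside the null space carved out by the zero-forcing constraints. BS $0$ must suppress inter-user interference to its own $K-1$ co-scheduled users and, because of the intra-cluster coordination, must also null the $K$ users served by BS $1$; this yields $2K-1$ linear constraints, the tagged user's own channel being the objective rather than a constraint, which is the source of the ``$+1$''. Since $2K\le N$, these constraint channels are almost surely linearly independent and the orthogonal projector $\mathbf{P}$ onto their complement has rank $N-2K+1$. The optimal $\mathbf{v}^{\ell}_{0,k}$ is then the normalized projection $\mathbf{P}\mathbf{h}^{\ell}_0/\|\mathbf{P}\mathbf{h}^{\ell}_0\|$, so that $|\tilde h^{\ell}_0|^2=\|\mathbf{P}\mathbf{h}^{\ell}_0\|^2$. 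The key observation is that every constraint channel (to the co-cell users and to the BS-$1$ users) is statistically independent of $\mathbf{h}^{\ell}_0$, hence $\mathbf{P}$ is independent of $\mathbf{h}^{\ell}_0$; projecting an isotropic $\mathcal{CN}(0,\mathbf{I}_N)$ vector onto a fixed $(N-2K+1)$-dimensional subspace gives a sum of $N-2K+1$ i.i.d. unit-mean exponentials, i.e. the chi-squared law with $2(N-2K+1)$ degrees of freedom.

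For the out-of-cluster interference, I would handle each summand $\big|(\mathbf{h}^{\ell}_j)^T\mathbf{V}^{\ell}_j\big|^2$ for $\mathbf{d}_j\in\mathcal{N}_{\ell}\setminus\{\mathbf{d}_0,\mathbf{d}_1\}$ separately. Because BS $j$ lies in a different cluster, it designs $\mathbf{V}^{\ell}_j$ from the channels of its own served users, which are independent of $\mathbf{h}^{\ell}_j$ (the channel from BS $j$ to our tagged user). Conditioning on $\mathbf{V}^{\ell}_j$ and using the semi-unitary structure of the CBF precoder (orthonormal columns), $\|(\mathbf{h}^{\ell}_j)^T\mathbf{V}^{\ell}_j\|^2$ is the squared norm of the projection of the isotropic vector $\mathbf{h}^{\ell}_j$ onto the $K$-dimensional column space of $\mathbf{V}^{\ell}_j$, which is chi-squared with $2K$ degrees of freedom and independent of the realization of $\mathbf{V}^{\ell}_j$. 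Summing over $j$ with the deterministic pathloss weights $(\|\mathbf{d}_j\|/\|\mathbf{d}_0\|)^{-\beta}$, and using independent fading across distinct interfering BSs, then expresses $I_{\ell}$ as the asserted weighted sum of chi-squared variables with $2K$ degrees of freedom.

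I expect the delicate step to be the interference part rather than the desired-signal part. The desired-gain count is routine once the constraints are tallied correctly; the only care needed is the observation that the tagged user's own channel does not enter the constraint set, which produces $N-2K+1$ rather than $N-2K$. The interference claim, however, hinges on each $\mathbf{V}^{\ell}_j$ having orthonormal columns: if the CBF beamformers were merely unit-norm but not mutually orthogonal, then $\|(\mathbf{h}^{\ell}_j)^T\mathbf{V}^{\ell}_j\|^2$ would conditionally be a quadratic form whose law depends on the eigenvalues of the Gram matrix $(\mathbf{V}^{\ell}_j)^H\mathbf{V}^{\ell}_j$, and the clean per-term chi-squared with $2K$ degrees of freedom would fail. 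I would therefore make the semi-unitary precoder structure explicit (or verify that it follows from the CBF construction), and confirm that the coefficients multiplying the chi-squared terms are exactly the pathloss factors, with the channel randomness fully captured by the $2K$-degree chi-squared variables.
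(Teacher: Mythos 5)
Your argument is correct and is essentially the standard projection argument that the paper itself does not spell out: the stated proof is only a pointer to \cite{NY:dynamic}, so your write-up supplies the reasoning that the citation is standing in for. The count for the desired gain is right --- the tagged user's beamformer must lie in the orthogonal complement of the $K-1$ co-scheduled channels at BS $0$ plus the $K$ channels of BS $1$'s users, i.e.\ $2K-1$ almost-surely independent constraints, and projecting the independent isotropic $\mathcal{CN}(0,\mathbf{I}_N)$ vector ${\bf{h}}^{\ell}_0$ onto the resulting $(N-2K+1)$-dimensional subspace gives the $\chi^2\left(2(N-2K+1)\right)$ law. Your flag on the interference term is also the right place to be suspicious: $\left|({\bf{h}}^{\ell}_j)^T{\bf{V}}^{\ell}_j\right|^2$ is exactly $\chi^2(2K)$ only if the columns of ${\bf{V}}^{\ell}_j$ are orthonormal; with plain ZF columns that are unit-norm but not mutually orthogonal, the conditional law is a quadratic form governed by the Gram matrix, and the per-interferer Gamma$(K,1)$ statement is the customary approximation (or an implicit semi-unitary assumption) in this line of work rather than an exact consequence of the model as written. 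Since the paper asserts the lemma by reference without stating that structural assumption, your proof is at least as careful as the source, and the caveat you raise is a genuine gap in the lemma's statement rather than in your argument.
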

\begin{IEEEproof}
See \cite{NY:dynamic} and reference therein.
\end{IEEEproof}

By using this Lemma, we derive the rate coverage. The following Theorem is the main result of this section. 

\begin{theorem}
In the interference limited regime, when the BSs form BS clusters by using the proposed method, the rate coverage probability of the tagged user is 
%\begin{align} \label{rate_coverage}
%\mathbb{P}\left[ \log_2\left(1 + {\rm{SIR}}_{|{\ell}}\right)  > \gamma\right] 
%&P^{\ell} \left({\rm SNR} \rightarrow \infty, \left\| {\bf{d}}_0 \right\|, \mathcal{D}_{\ell},N, K, \beta, \gamma \right) =  \nonumber \\
%&\sum_{m=0}^{N-2K}\frac{\left(2^{\gamma}-1\right)^m}{m!} \left(-1 \right)^m  \left.\frac{\partial^m }{\partial s^m} \prod_{{\bf{d}}_j \in \mathcal{D}_{\ell}}\left( \frac{1}{1+s \left( \left\|{\bf{d}}_j \right\| / \left\| {\bf{d}}_{0} \right\| \right) ^{-\beta}} \right) ^{K}   \right.,
%\end{align}
\begin{align} \label{rate_coverage}
%\mathbb{P}\left[ \log_2\left(1 + {\rm{SIR}}_{|{\ell}}\right)  > \gamma\right] 
&P^{\ell} \left({\rm SNR} \rightarrow \infty, \left\| {\bf{d}}_0 \right\|, \mathcal{D}_{\ell},N, K, \beta, \gamma \right) =  \nonumber \\
&\sum_{m=0}^{N-2K}\frac{\left(2^{\gamma}-1\right)^m}{m!} \left(-1 \right)^m  \left.\frac{\partial^m }{\partial s^m} \prod_{{\bf{d}}_j \in \mathcal{D}_{\ell}}\left( \frac{1}{1+s \left( \frac{\left\|{\bf{d}}_j \right\| }{ \left\| {\bf{d}}_{0} \right\|} \right) ^{-\beta}} \right) ^{K}   \right.,
\end{align}
where $s = 2^{\gamma} - 1$ and $\gamma$ is the rate threshold.
\end{theorem}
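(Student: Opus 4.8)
The plan is to combine the distributional facts of Lemma~\ref{lem_ch} with a Laplace-transform representation of the out-of-cluster interference. First I would take the interference-limited limit $\mathrm{SNR}\to\infty$ in \eqref{sinr}, which annihilates the noise term $\|{\bf d}_0\|^{\beta}/\mathrm{SNR}/K$ and reduces the SINR to the bare ratio $|\tilde h^{\ell}_0|^2/I_{\ell}$. Rewriting the threshold event $\log_2(1+\mathrm{SINR}_{|\ell})>\gamma$ as $\mathrm{SINR}_{|\ell}>2^{\gamma}-1$ and setting $s=2^{\gamma}-1$, the coverage probability in \eqref{rate_cov_def} becomes simply $\mathbb{P}[\,|\tilde h^{\ell}_0|^2 > s\,I_{\ell}\,]$.

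The key step invokes Lemma~\ref{lem_ch}: the desired gain $|\tilde h^{\ell}_0|^2$ is a chi-squared variable with $2(N-2K+1)$ degrees of freedom, i.e.\ a Gamma variable with integer shape $M=N-2K+1$ and unit rate (complex chi-squared convention). Conditioning on $I_{\ell}$ and using that the complementary CDF of a unit-rate Gamma with integer shape is the finite series
\begin{align}
\mathbb{P}\!\left[\,|\tilde h^{\ell}_0|^2 > s I_{\ell}\,\big|\,I_{\ell}\right] = e^{-sI_{\ell}}\sum_{m=0}^{N-2K}\frac{(sI_{\ell})^{m}}{m!},\nonumber
\end{align}
I would then take the expectation over $I_{\ell}$ and pull the finite sum outside, yielding $\sum_{m=0}^{N-2K}\frac{s^{m}}{m!}\,\mathbb{E}[\,I_{\ell}^{m} e^{-sI_{\ell}}\,]$.

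The remaining task is to evaluate each weighted moment $\mathbb{E}[I_{\ell}^{m}e^{-sI_{\ell}}]$. I would recognize this as $(-1)^{m}\frac{d^{m}}{ds^{m}}\mathcal{L}_{I_{\ell}}(s)$, where $\mathcal{L}_{I_{\ell}}(s)=\mathbb{E}[e^{-sI_{\ell}}]$ is the Laplace transform of the interference. Since by Lemma~\ref{lem_ch} the term $I_{\ell}=\sum_{{\bf d}_j\in\mathcal{D}_{\ell}}(\|{\bf d}_j\|/\|{\bf d}_0\|)^{-\beta}|({\bf h}^{\ell}_j)^{T}{\bf V}^{\ell}_j|^2$ is a weighted sum of independent Gamma$(K,1)$ contributions, and the transform of a weighted Gamma$(K,1)$ summand with weight $w$ is $(1+sw)^{-K}$, independence across interferers factorizes the transform as
\begin{align}
\mathcal{L}_{I_{\ell}}(s)=\prod_{{\bf d}_j\in\mathcal{D}_{\ell}}\left(\frac{1}{1+s\,(\|{\bf d}_j\|/\|{\bf d}_0\|)^{-\beta}}\right)^{K}.\nonumber
\end{align}
Substituting this product back into the series gives exactly \eqref{rate_coverage}.

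The main obstacle I anticipate is justifying the two interchanges: differentiating the Laplace transform under the expectation to represent the moments $\mathbb{E}[I_{\ell}^{m}e^{-sI_{\ell}}]$, and the factorization of $\mathcal{L}_{I_{\ell}}$, both of which rest on the independence of the CBF matrices ${\bf V}^{\ell}_j$ from the interfering channels ${\bf h}^{\ell}_j$ and on the chi-squared marginals asserted in Lemma~\ref{lem_ch}. Because the resulting product is a rational function of $s$, analytic near the relevant values, the term-by-term differentiation is routine once the independence is established; the genuine content lies in verifying that the CBF design, which nulls inter-user and intra-cluster interference while leaving ${\bf V}^{\ell}_j$ statistically independent of ${\bf h}^{\ell}_j$, indeed produces the Gamma structure of Lemma~\ref{lem_ch}.
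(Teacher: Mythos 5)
Your proposal is correct and follows essentially the same route as the paper's proof: take the interference-limited limit to drop the noise term, apply the conditional complementary CDF of the chi-squared desired gain from Lemma~\ref{lem_ch} to obtain the finite series in $I_{\ell}$, convert each weighted moment $\mathbb{E}[I_{\ell}^{m}e^{-sI_{\ell}}]$ into the $m$-th derivative of the Laplace transform, and factorize $\mathcal{L}_{I_{\ell}}(s)$ over the independent interferers. The only difference is that you spell out the independence and interchange justifications that the paper delegates to the reference behind Lemma~\ref{lem_ch}.
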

\begin{IEEEproof}
As defined in \eqref{rate_cov_def}, the rate coverage probability is rewritten as
%\begin{align}
%\mathbb{P}\left[ \log_2\left(1 + {\rm{SINR}}_{|{\ell}}\right)  > \gamma\right] & \mathop = ^{(a)}\mathbb{P}\left[ \frac{  \left| \tilde h^{\ell}_0 \right|^2 }{I_{{\ell}}}  >\left(  2^{\gamma} - 1 \right) \right] \nonumber \\
%&=\mathbb{P}\left[ {  \left| \tilde h^{\ell}_0 \right|^2 }  > \left(2^{\gamma} -1\right)  {I_{{\ell}} } \right]\nonumber \\
%& \mathop =  ^{(b)}\mathbb{E}_{I_{\ell}}\left[\left. \sum_{m=0}^{N-2K}\frac{\left(2^{\gamma} - 1\right)^m}{m!}  \left( I_{\ell}\right) ^{m}  e^{-\left( 2^{\gamma} - 1\right) I_{\ell}}  \right. \right], \label{rate_coverage_pt1}
%\end{align}
\begin{align}
&\mathbb{P}\left[ \log_2\left(1 + {\rm{SINR}}_{|{\ell}}\right)  > \gamma\right] \nonumber \\
& \mathop = ^{(a)}\mathbb{P}\left[ \frac{  \left| \tilde h^{\ell}_0 \right|^2 }{I_{{\ell}}}  >\left(  2^{\gamma} - 1 \right) \right] \nonumber \\
%&=\mathbb{P}\left[ {  \left| \tilde h^{\ell}_0 \right|^2 }  > \left(2^{\gamma} -1\right)  {I_{{\ell}} } \right]\nonumber \\
& \mathop =  ^{(b)}\mathbb{E}_{I_{\ell}}\left[\left. \sum_{m=0}^{N-2K}\frac{\left(2^{\gamma} - 1\right)^m}{m!}  \left( I_{\ell}\right) ^{m}  e^{-\left( 2^{\gamma} - 1\right) I_{\ell}}  \right. \right], \label{rate_coverage_pt1}
\end{align}
where (a) comes from that $\left\| {\bf{d}} \right\|^{\beta}/{\rm SNR} \rightarrow 0$ when ${\rm{SNR}} \rightarrow \infty$, and (b) follows Lemma \ref{lem_ch} and the complement cumulative distribution function of a Chi-squared random variable with degrees of freedom $2\left(N-2K + 1\right)$.
Using the derivation of the Laplace transform, i.e., 
$\mathbb{E}\left[ X^m e^{-sX} \right] = \left(-1 \right)^m \frac{d^m \mathcal{L}_X \left( s\right)}{ds^m}$, \eqref{rate_coverage_pt1} is represented as
%\begin{align}
%\mathbb{E}_{I_{\ell}}\left[ \sum_{m=0}^{N-2K}\frac{\left(2^{\gamma} - 1\right)^m}{m!}  \left( I_{\ell}\right) ^{m}  e^{-\left(2^{\gamma} - 1 \right)I_{\ell}}  \right] &= \sum_{m=0}^{N-2K}\frac{\left(2^{\gamma} - 1\right)^m}{m!} \left(-1 \right)^m  \left.\frac{\partial^m \mathcal{L}_{I_{\ell}} \left( s\right)}{\partial s^m}\right|_{s = 2^{\gamma} - 1},
%\end{align}
\begin{align}
&\mathbb{E}_{I_{\ell}}\left[ \sum_{m=0}^{N-2K}\frac{\left(2^{\gamma} - 1\right)^m}{m!}  \left( I_{\ell}\right) ^{m}  e^{-\left(2^{\gamma} - 1 \right)I_{\ell}}  \right] \nonumber \\
&= \sum_{m=0}^{N-2K}\frac{\left(2^{\gamma} - 1\right)^m}{m!} \left(-1 \right)^m  \left.\frac{\partial^m \mathcal{L}_{I_{\ell}} \left( s\right)}{\partial s^m}\right|_{s = 2^{\gamma} - 1},
\end{align}
where $\mathcal{L}_{I_{\ell}}\left(s\right)$ is the Laplace transform of $I_{\ell}$, defined as 
\begin{align}
\mathcal{L}_{I_{\ell}}\left(s\right) = \prod_{{\bf{d}}_j \in \mathcal{D}_{\ell}}\left( \frac{1}{1+s \left( \left\|{\bf{d}}_j \right\| / \left\| {\bf{d}}_{0} \right\| \right) ^{-\beta}} \right) ^{K},
\end{align}
which completes the proof.
\end{IEEEproof}

Computing the rate coverage probability \eqref{rate_coverage} is not easy since computing the derivative of the product of $\left( {1}/{1+s \left( \left\|{\bf{d}}_j \right\| / \left\| {\bf{d}}_{0} \right\| \right) ^{-\beta}} \right) ^{K}$ results in many terms.
% to compute when $\left|\mathcal{N} \backslash \mathcal{C}^{\ell}_0\right| \gg 1$, 
To provide a simpler form of the rate coverage probability, the following Corollary is used to approximate the rate coverage probability.
%we rather use the approximated form of the rate coverage probability \eqref{rate_coverage_approx}.f

\begin{corollary}
The rate coverage probability \eqref{rate_coverage} can be approximated as in the following form.
%\begin{align} \label{rate_coverage_approx}
%&P^{\ell} \left({\rm SNR} \rightarrow \infty, \left\| {\bf{d}}_0 \right\|, \mathcal{D}_{\ell},N, K, \beta, \gamma \right) \approx \nonumber \\
%& \sum_{m=0}^{N-2K}\frac{\left(2^{\gamma}-1\right)^m}{m!} \left(-1 \right)^m  \left.\frac{\partial^m }{\partial s^m} \left(\frac{ e^{-\sum_{{\bf{d}}_j \in \mathcal{D}_{\ell} \backslash {\bf{d}}_{\rm min}}s \left(\frac{\left\|{\bf{d}}_j \right\|}{\left\| {\bf{d}}_0 \right\|} \right)^{-\beta}  }}{1 + s\left( \left\| {\bf{d}}_{\rm min} \right\| / \left\| {\bf{d}}_0 \right\| \right)^{-\beta} }\right)^{K}   \right.,
%\end{align}
\begin{align} \label{rate_coverage_approx}
&P^{\ell} \left({\rm SNR} \rightarrow \infty, \left\| {\bf{d}}_0 \right\|, \mathcal{D}_{\ell},N, K, \beta, \gamma \right) \approx \nonumber \\
& \sum_{m=0}^{N-2K}\frac{\left(2^{\gamma}-1\right)^m}{m!} \left(-1 \right)^m  \left.\frac{\partial^m }{\partial s^m} \left(\frac{ e^{-\sum_{{\bf{d}}_j \in \mathcal{D}_{\ell} \backslash {\bf{d}}_{\rm min}}s \left(\frac{\left\|{\bf{d}}_j \right\|}{\left\| {\bf{d}}_0 \right\|} \right)^{-\beta}  }}{1 + s\left(\frac{ \left\| {\bf{d}}_{\rm min} \right\| }{ \left\| {\bf{d}}_0 \right\| }\right)^{-\beta} }\right)^{K}   \right.,
\end{align}
where $s = 2^{\gamma} - 1$ and ${\bf{d}}_{\min} = \mathop {\arg \min} \limits_{{\bf{d}}_j \in \mathcal{D}_{\ell}} \left\| {\bf{d}}_j\right\|$.
% and $B_{\rm min}$ indicates a BS whose distance to the user is $d_{\rm min}$.
\end{corollary}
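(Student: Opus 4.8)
The plan is to start from the exact Laplace transform
\[
\mathcal{L}_{I_{\ell}}(s) = \prod_{{\bf{d}}_j \in \mathcal{D}_{\ell}}\left( 1+s \left( \left\|{\bf{d}}_j \right\| / \left\| {\bf{d}}_{0} \right\| \right) ^{-\beta} \right)^{-K}
\]
established in the preceding Theorem, and to replace every \emph{non-dominant} factor by an exponential while retaining only the nearest interferer in exact form. To lighten notation I would write $w_j = \left(\left\|{\bf{d}}_j\right\|/\left\|{\bf{d}}_0\right\|\right)^{-\beta}$ for the normalized interference weight of ${\bf{d}}_j$. Since ${\bf{d}}_{\min} = \argmin_{{\bf{d}}_j \in \mathcal{D}_{\ell}}\left\|{\bf{d}}_j\right\|$ is the closest interferer, it carries the \emph{largest} weight, $w_{\min} = \max_j w_j$, whereas all remaining interferers have comparatively small weights because $w_j$ decays polynomially in $\left\|{\bf{d}}_j\right\|$ at rate $\beta$. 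This separation is exactly what will make the approximation accurate.

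The algebraic core is short. First I would peel off the dominant factor and write the product of the remaining factors in exponential form,
\[
\mathcal{L}_{I_{\ell}}(s) = \left(1+s\,w_{\min}\right)^{-K}\exp\!\left(-K\!\!\sum_{{\bf{d}}_j \in \mathcal{D}_{\ell}\setminus {\bf{d}}_{\min}}\!\!\ln\!\left(1+s\,w_j\right)\right).
\]
Next I would apply the first-order approximation $\ln(1+x)\approx x$, legitimate for the small arguments $s\,w_j$ of the non-dominant interferers, to obtain $\sum_{j\ne\min}\ln(1+s\,w_j)\approx s\sum_{j\ne\min}w_j$. Substituting this back reproduces precisely the bracketed expression appearing in \eqref{rate_coverage_approx}, namely $\bigl(e^{-s\sum_{j\ne\min} w_j}/(1+s\,w_{\min})\bigr)^{K}$. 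Inserting this into the derivative formula $\sum_{m}\frac{(2^{\gamma}-1)^m}{m!}(-1)^m\,\partial_s^m(\cdot)$ from the Theorem then yields the claimed approximation \eqref{rate_coverage_approx} directly, and the payoff is that only a single rational factor $(1+s\,w_{\min})^{-K}$ survives alongside a pure exponential, so the $m$-th derivative no longer explodes into the combinatorially many cross terms that plague \eqref{rate_coverage}.

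The main obstacle, and the only part that requires genuine care, is justifying that the discarded error is negligible. I would control it through the two-sided bound $0 \le x - \ln(1+x) \le x^2/2$: the multiplicative error incurred by the substitution is $\exp\!\bigl(K\sum_{j\ne\min}\left(s\,w_j-\ln(1+s\,w_j)\right)\bigr)$, which is governed by $\sum_{j\ne\min} w_j^{2}$. Hence the approximation is tight precisely when the second-nearest and farther interferers are well separated from the user relative to ${\bf{d}}_{\min}$ — the single-dominant-interferer regime that motivates the Corollary. Keeping ${\bf{d}}_{\min}$ in exact form is essential here, because it absorbs the one term for which $\ln(1+x)\approx x$ would be least accurate, leaving a residual sum that involves only the small weights where the linearization is reliable.
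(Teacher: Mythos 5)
Your proposal is correct and follows essentially the same route as the paper: isolate the nearest-interferer factor exactly and replace each remaining factor $\bigl(1+s\,w_j\bigr)^{-K}$ by $e^{-Ks\,w_j}$ via a first-order expansion (the paper phrases it as $e^{-x}\approx 1/(1+x)$ for small $x$, which is equivalent to your $\ln(1+x)\approx x$ step). Your added error bound $0\le x-\ln(1+x)\le x^{2}/2$ is a small refinement the paper does not include, but it does not change the argument.
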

\begin{IEEEproof}
Using the Talyor expansion of the exponential function $e^{x} = 1 + \frac{x}{1!} + \frac{x^2}{2!} + \cdots$, when $x$ is small enough, $e^{x} \approx 1 + x$, therefore $e^{-x} \approx 1/\left(1 + x\right)$. Then we have
%\begin{align}
%&\prod_{{\bf{d}}_j \in \mathcal{D}_{\ell}}\left( \frac{1}{1+s \left( \left\|{\bf{d}}_j \right\| / \left\| {\bf{d}}_{0} \right\| \right) ^{-\beta}} \right) ^{K}\nonumber \\
%& = \left( \frac{1}{1 + s\left( \left\| {\bf{d}}_{\rm min} \right\| / \left\| {\bf{d}}_0 \right\| \right)^{-\beta} } \right) ^{K} \prod_{{{\bf{d}}_j \in \mathcal{D}_{\ell} \backslash {\bf{d}}_{\rm min}}}\left( \frac{1}{1+s \left( \left\|{\bf{d}}_j \right\| / \left\| {\bf{d}}_{0} \right\| \right) ^{-\beta}} \right) ^{K} \nonumber \\
%&\approx  \left(\frac{1}{1 + s\left( \left\| {\bf{d}}_{\rm min} \right\| / \left\| {\bf{d}}_0 \right\| \right)^{-\beta} }\right)^{K}\left( e^{-\sum_{{\bf{d}}_j \in \mathcal{D}_{\ell} \backslash {\bf{d}}_{\rm min}}s \left(\frac{\left\|{\bf{d}}_j \right\|}{\left\| {\bf{d}}_0 \right\|} \right)^{-\beta}  } \right) ^{K},
%\end{align}
\begin{align}
&\prod_{{\bf{d}}_j \in \mathcal{D}_{\ell}}\left( \frac{1}{1+s \left( \frac{\left\|{\bf{d}}_j \right\| }{\left\| {\bf{d}}_{0} \right\|} \right) ^{-\beta}} \right) ^{K}\nonumber \\
& = \left( \frac{1}{1 + s\left(\frac{ \left\| {\bf{d}}_{\rm min} \right\| }{ \left\| {\bf{d}}_0 \right\| }\right)^{-\beta} } \right) ^{K} \prod_{{{\bf{d}}_j \in \mathcal{D}_{\ell} \backslash {\bf{d}}_{\rm min}}}\left( \frac{1}{1+s \left( \frac{\left\|{\bf{d}}_j \right\| }{ \left\| {\bf{d}}_{0} \right\|} \right) ^{-\beta}} \right) ^{K} \nonumber \\
&\approx  \left(\frac{1}{1 + s\left(\frac{ \left\| {\bf{d}}_{\rm min} \right\| }{ \left\| {\bf{d}}_0 \right\| }\right)^{-\beta} }\right)^{K}\left( e^{-\sum_{{\bf{d}}_j \in \mathcal{D}_{\ell} \backslash {\bf{d}}_{\rm min}}s \left(\frac{\left\|{\bf{d}}_j \right\|}{\left\| {\bf{d}}_0 \right\|} \right)^{-\beta}  } \right) ^{K},
\end{align}
which completes the proof.
\end{IEEEproof}

%Unlike the exact form of the rate coverage probability obtained in \eqref{rate_coverage}, t
Using the Taylor expansion, we approximate a product of $\left( {1}/{1+s \left( \left\|{\bf{d}}_j \right\| / \left\| {\bf{d}}_{0} \right\| \right) ^{-\beta}} \right) ^{K}$ in an exponential form, so that we are able to calculate the derivative of \eqref{rate_coverage_approx} more simply than that of \eqref{rate_coverage}. 

Now we verify the approximated form of the rate coverage probability \eqref{rate_coverage_approx} by comparing to the Monte-Carlo simulation results.
For the verification, we use a network model described in Fig.~\ref{network_snapshot} (b) in Section VI.
%The approximated rate coverage probability has a form of 
%n arbitrary network model whose BS locations are irregular similar to Fig.~\ref{2d_net_model}. In this network model, each BS has a protection range 1km, i.e., no BS is placed within a radius of 1km. 
%In this network model, $L = 5$, i.e., six different cluster patterns are exploited alternatively with different time-frequency resources.
%different cluster pattern $\mathcal{P}_{\ell}$ for $\ell \in \{1,2,...,5\}$. 
%Since only $1/5$ time resource is used for serving the tagged user, the rate threshold is tuned by $\gamma' = L \gamma$. 
As illustrated in Fig.~\ref{rate_coverage_verification}, the analytical approximation tightly matches the simulated rate coverage probability over entire rate threshold region and for different numbers of BS antennas and users settings. 
%Comparing to the performance of the case where $N=3$ and $K = 1$, the rate coverage performance is improved when $N=4$ and $K = 1$. The reason is that when the number of the BS antennas is increased, the desired channel gain obtained by the user is increased due to CBF, which leads to improve the rate coverage performance. In contrast, when $N=4$ and $K = 2$, the rate coverage performance is degraded since the the desired channel gain is divided by the number of users.
%Since the significant performance degradation occurs when the power gain is 
%We are able to see that the power gain is more significant factor than the desired channel gain for the rate coverage performance from this observation.
%One should note that the case $N=6$ and $K/L = 2$ is not appropriate assumption since $N < K$, though we allow the assumption only for this simulation to verify the rate coverage probability analysis.

\begin{figure}[t]
\centerline{\resizebox{0.7\columnwidth}{!}  {\includegraphics{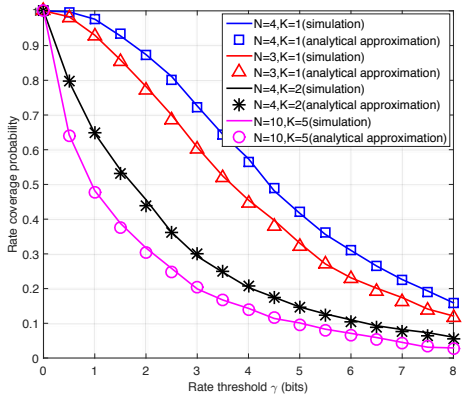}}}  
\caption{The comparison of the rate coverage probability obtained by Monte-Carlo simulations and the analytical approximation \eqref{rate_coverage_approx}. 
It is assumed that $\beta = 4$, $N \in \{3, 4, 10\}$, and $K \in \{1,2, 5\}$.
% The analytical approximation tightly matches to the simulation results over the entire rate threshold region, for all the cases of the numbers of BS antennas and users.
}    
   \label{rate_coverage_verification}
\end{figure} 

One noticeable point in the rate coverage analysis is that since we do not assume any particular random network, the rate coverage is a function of distances between BSs and the tagged user. Prior work \cite{6967804, 5953530, 5951699} also characterized performance in a fixed network model, where performance was derived as a function of the distances between BSs and a user. 
% In this work, they also presented the performance as a function of distances between BSs and a user.

\section{Ergodic Spectral Efficiency Analysis}
In this section, we derive an exact expression of the ergodic spectral efficiency in an integral form and provide a lower bound on the ergodic spectral efficiency in a closed form. As in the rate coverage analysis, the ergodic spectral efficiency is characterized as a function of distances between BSs and the tagged user.

\subsection{Exact Characterization}
Our exact expression is derived by using 
Lemma \ref{lem_useful}, which yields a general expression of the ergodic spectral efficiency in terms of moment generating functions.
%introduced for deriving the ergodic spectral efficiency of the CM.
%to characterize the  rate for the CM, we use the following lemma.

%\begin{lemma}[\cite{useful}, Lemma 1] \label{lem_useful}
\begin{lemma} \label{lem_useful}
Let $x_1,...,x_N,y_1,...,y_M$ be arbitrary non-negative random variables. Then
%\begin{align}
%\mathbb{E}\left[\ln \left( 1+ \frac{\sum_{n=1}^{N}x_n  }{\sum_{m=1}^{M} y_m + 1} \right) \right]  = \int_{0}^{\infty} \frac{\mathcal{M}_y\left(z\right) - \mathcal{M}_{x,y}\left(z\right)}{z}{\rm{exp}}\left(-z\right) {\rm d}z,
%\end{align}
\begin{align}
&\mathbb{E}\left[\ln \left( 1+ \frac{\sum_{n=1}^{N}x_n  }{\sum_{m=1}^{M} y_m + 1} \right) \right]  \nonumber \\
&= \int_{0}^{\infty} \frac{\mathcal{M}_y\left(z\right) - \mathcal{M}_{x,y}\left(z\right)}{z}{\rm{exp}}\left(-z\right) {\rm d}z,
\end{align}
where $\mathcal{M}_y\left(z\right) = \mathbb{E}\left[e^{-z\sum_{m=1}^{M}y_m} \right]$ and $\mathcal{M}_{x,y}\left(z\right) = \mathbb{E} \left[ e^{-z\left( \sum_{n=1}^{N} x_n + \sum_{m=1}^{M} y_m \right)} \right]$.
\end{lemma}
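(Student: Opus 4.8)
The plan is to turn the expectation of a logarithm into an expectation of an integral through a Frullani-type representation, and then to swap the expectation with the integral. First I would abbreviate $X = \sum_{n=1}^{N} x_n$ and $Y = \sum_{m=1}^{M} y_m$, which are non-negative, and note that the argument of the logarithm factors as $1 + X/(Y+1) = (X+Y+1)/(Y+1)$, so that the quantity inside the expectation is $\ln(X+Y+1) - \ln(Y+1)$.

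The deterministic engine of the proof is the identity $\ln(v/u) = \int_0^\infty \frac{e^{-uz} - e^{-vz}}{z}\,dz$, valid for all $u,v > 0$, a standard Frullani integral obtained from $f(t)=e^{-t}$ (so $f(0)=1$ and $f(\infty)=0$). I would apply it for each sample point with $u = Y+1 \ge 1$ and $v = X+Y+1 \ge u$, giving $\ln(1 + X/(Y+1)) = \int_0^\infty \frac{e^{-(Y+1)z} - e^{-(X+Y+1)z}}{z}\,dz$. Because $u \le v$ the integrand is non-negative; near $z=0$ its numerator is $O((v-u)z) = O(Xz)$ so the integrand stays bounded, and for large $z$ it decays like $e^{-z}/z$, so the integral converges almost surely.

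Next I would take expectations on both sides and interchange $\mathbb{E}$ with $\int_0^\infty$. Since the integrand is non-negative, Tonelli's theorem licenses the swap with no extra hypotheses (both sides are well-defined in $[0,\infty]$), so non-negativity of the $x_n$ and $y_m$ is all that is required. This gives $\mathbb{E}[\ln(1 + X/(Y+1))] = \int_0^\infty \frac{\mathbb{E}[e^{-(Y+1)z}] - \mathbb{E}[e^{-(X+Y+1)z}]}{z}\,dz$. Pulling the deterministic factor $e^{-z}$ out of each expectation and reading off the moment generating functions, namely $\mathbb{E}[e^{-(Y+1)z}] = e^{-z}\mathcal{M}_y(z)$ and $\mathbb{E}[e^{-(X+Y+1)z}] = e^{-z}\mathcal{M}_{x,y}(z)$, reproduces the claimed formula.

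Everything beyond the Frullani identity is bookkeeping, so I expect the only delicate points to be establishing that representation and justifying the interchange of expectation and integration. The interchange is the natural place for a careful argument, but the non-negativity of the integrand reduces it to a direct application of Tonelli, which makes the obstacle mild; one should still verify convergence of the Frullani integral at $z=0$ and at $z=\infty$, which follows from the first-order vanishing of the numerator at the origin and the exponential decay at infinity.
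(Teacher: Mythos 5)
Your proof is correct: the Frullani representation $\ln(v/u)=\int_0^\infty \frac{e^{-uz}-e^{-vz}}{z}\,{\rm d}z$ applied with $u=Y+1$, $v=X+Y+1$, followed by Tonelli's theorem (justified by the non-negativity of the integrand, so no independence or integrability hypotheses are needed), yields exactly the stated identity. The paper does not prove this lemma itself but defers to Lemma~1 of its reference \cite{useful}, whose standard derivation is precisely this argument, so your proposal is a complete, self-contained version of the same approach.
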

\begin{IEEEproof}  
See Lemma 1 in reference \cite{useful}.
\end{IEEEproof}

Leveraging Lemma \ref{lem_useful}, an exact expression of the ergodic spectral efficiency is obtained as in the following Theorem.
%Now we derive the  spectral efficiency of the CM. 
%To obtain it in a closed form, the approximated ergodic rate is characterized.
%The exactness of this approximation will be verified in the later section.
\begin{theorem}
%Recalling the SINR model in phase $\ell$ for $\ell \in \{1,2,3,4\}$ \eqref{sinr}, we denote the desired signal $P\left| \tilde h^{\ell} \right|^2$ by $S$ and the interference $P{ \sum_{B_{\left(v,w\right)} \in \mathcal{N} \backslash \mathcal{C}^{\ell}_{0}}\left(d_{\left(v,w\right)} \right)^{-\beta}\left|{{h}}^{\ell}_{\left(v,w\right)} {\bf{v}}^{\ell}_{\left(v,w\right)}{{\bf{s}}}^{\ell}_{\left(v,w\right)} \right|^2} $ by $I_{\mathcal{P}_{\ell}}$. 
When the BSs form BS clusters by using the proposed method, 
the ergodic spectral efficiency of the tagged user is
%\begin{align} \label{ergodic_rate_th}
%&R^{\ell}  \left({\rm{SNR}}, \left\| {\bf{d}}_0 \right\|, \mathcal{D}_{\ell},N, K, L, \beta\right) \nonumber \\
%&=  \frac{\log_2\left(e\right)}{L} \int_{0}^{\infty} \frac{{\rm{exp}}\left(-z \frac{\left\| {\bf{d}}_0 \right\|^{\beta}}{{\rm{SNR}}/K} \right)}{z}  \prod_{{\bf{d}}_j \in \mathcal{D}_{\ell}} \left( \frac{1}{1+ z \left(\left\| {\bf{d}}_j \right\| / \left\| {\bf{d}}_0 \right\| \right)^{-\beta}}\right)^{K}  \left( 1 -  \left( \frac{1}{ 1 + z} \right)^{N-2K+1} \right) {\rm d}z.
%\end{align}
\begin{align} \label{ergodic_rate_th}
&R^{\ell}  \left({\rm{SNR}}, \left\| {\bf{d}}_0 \right\|, \mathcal{D}_{\ell},N, K, L, \beta\right) \nonumber \\
&=  \frac{\log_2\left(e\right)}{L} \int_{0}^{\infty} \frac{{\rm{exp}}\left(-z \frac{\left\| {\bf{d}}_0 \right\|^{\beta}}{{\rm{SNR}}/K} \right)}{z} \cdot  \nonumber \\
&  \prod_{{\bf{d}}_j \in \mathcal{D}_{\ell}} \left( \frac{1}{1+ z \left(\frac{\left\| {\bf{d}}_j \right\| }{ \left\| {\bf{d}}_0 \right\|} \right)^{-\beta}}\right)^{K}  \left( 1 -  \left( \frac{1}{ 1 + z} \right)^{N-2K+1} \right) {\rm d}z.
\end{align}
%where $\mathcal{M}_S\left(z\right)$ and $\mathcal{M}_{I_{\mathcal{P}_{\ell}}}\left(z\right)$ indicate the moment generating function of the desired signal and the interference, respectively. $\mathcal{M}_S\left(z\right)$ is
\end{theorem}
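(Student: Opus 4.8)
The plan is to begin from the definition of the ergodic spectral efficiency in \eqref{rate_def}, rewrite the base-two logarithm via $\log_2(1+x) = \log_2(e)\ln(1+x)$, and then force the expectation $\mathbb{E}[\ln(1+{\rm SINR}_{|\ell})]$ into the moment-generating-function form provided by Lemma \ref{lem_useful}. The crucial observation is that the instantaneous SINR in \eqref{sinr} already has the ratio structure the lemma requires: the numerator is the single desired channel gain $|\tilde h_0^\ell|^2$ (the role of $\sum_n x_n$, here a single term), the random part of the denominator is the out-of-cluster interference $I_\ell = \sum_{{\bf d}_j \in \mathcal{D}_\ell}(\|{\bf d}_j\|/\|{\bf d}_0\|)^{-\beta}|({\bf h}_j^\ell)^T{\bf V}_j^\ell|^2$ (the role of $\sum_m y_m$), and the only remaining piece is the deterministic noise constant $c := \|{\bf d}_0\|^\beta/({\rm SNR}/K)$.

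Lemma \ref{lem_useful} is normalized so that the additive constant in the denominator equals $1$, whereas ours equals $c$, so I would first rescale: dividing numerator and denominator by $c$ and setting $\tilde x = |\tilde h_0^\ell|^2/c$, $\tilde y_j = (\|{\bf d}_j\|/\|{\bf d}_0\|)^{-\beta}|({\bf h}_j^\ell)^T{\bf V}_j^\ell|^2/c$ puts the expression in the exact form of the lemma. After applying the lemma and performing the change of variable $z \mapsto z/c$ in the resulting integral, the unit exponential weight $e^{-z}$ of the lemma becomes the noise factor $\exp(-zc) = \exp(-z\|{\bf d}_0\|^\beta/({\rm SNR}/K))$ appearing in \eqref{ergodic_rate_th}, while the two moment generating functions revert to being evaluated at the original argument $z$. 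Equivalently one may bypass the rescaling and invoke the Frullani-type identity $\ln(1+a/b) = \int_0^\infty z^{-1}(e^{-zb}-e^{-z(a+b)})\,{\rm d}z$ directly, which is what underlies the lemma and which absorbs the constant $c$ into the factor $e^{-zc}$ automatically.

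Next I would evaluate the two transforms using Lemma \ref{lem_ch} together with the independence of the desired link from the interfering links. Since $|\tilde h_0^\ell|^2$ is chi-squared with $2(N-2K+1)$ degrees of freedom, i.e. a unit-scale Gamma variable of shape $N-2K+1$ under the $\mathcal{CN}(0,1)$ normalization, its Laplace transform is $(1+z)^{-(N-2K+1)}$; because numerator and denominator are independent, the difference $\mathcal{M}_y(z)-\mathcal{M}_{x,y}(z)$ factorizes as $\mathbb{E}[e^{-zI_\ell}]\,(1-(1+z)^{-(N-2K+1)})$, which supplies the factor $1-(1/(1+z))^{N-2K+1}$. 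For the interference, each effective term $|({\bf h}_j^\ell)^T{\bf V}_j^\ell|^2$ is an independent chi-squared variable with $2K$ degrees of freedom weighted by $(\|{\bf d}_j\|/\|{\bf d}_0\|)^{-\beta}$, so its transform is $(1+z(\|{\bf d}_j\|/\|{\bf d}_0\|)^{-\beta})^{-K}$, and independence across $j$ lets $\mathbb{E}[e^{-zI_\ell}]$ factor into $\prod_{{\bf d}_j \in \mathcal{D}_\ell}(1+z(\|{\bf d}_j\|/\|{\bf d}_0\|)^{-\beta})^{-K}$. Inserting the prefactor $\log_2(e)/L$ and the noise exponential then reproduces \eqref{ergodic_rate_th} term by term.

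The step I expect to require the most care is the normalization: aligning the deterministic noise constant $c$ with the unit additive constant assumed in Lemma \ref{lem_useful}, and tracking how the change of variables pulls that constant out of the transform arguments and into the weight $\exp(-zc)$. A secondary point is the factorization of the joint transform into a product over interferers times the desired-signal transform, which relies on the mutual independence of the effective channel gains and on the precise Gamma shapes; both are inherited from Lemma \ref{lem_ch}. The interchange of expectation and the outer integral (Fubini/Tonelli) is routine, since the integrand is nonnegative once written in Frullani form.
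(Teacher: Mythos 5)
Your proposal is correct and follows essentially the same route as the paper: compute the Laplace transforms of the chi-squared desired gain and of the weighted-sum interference via Lemma \ref{lem_ch}, then apply Lemma \ref{lem_useful}. You are in fact more explicit than the paper about the one nontrivial bookkeeping step --- rescaling by the noise constant $c=\left\|{\bf d}_0\right\|^{\beta}/({\rm SNR}/K)$ and changing variables so that the lemma's unit-exponential weight becomes the factor $\exp\left(-z\left\|{\bf d}_0\right\|^{\beta}K/{\rm SNR}\right)$ --- which the paper's proof silently absorbs into ``by applying Lemma \ref{lem_useful}, the Theorem is completed.''
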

\begin{IEEEproof}
We start by characterizing the moment generating function of the desired channel gain. Since the desired channel gain is a Chi-squared random variable with $2\left(N-2K+1 \right)$ degrees of freedom from Lemma \ref{lem_ch},
%i.e., $\left|\tilde h ^{\ell} \right|^2 \sim \chi^2\left(2\left(N-K/2+1 \right) \right)$, 
its moment generating function is given by
\begin{align}
\mathcal{M}_S\left(z\right) =& \mathbb{E}\left[e^{-z \left|\tilde h ^{\ell}_0 \right|^2} \right] \nonumber \\
=& \left( \frac{1}{ 1 + z} \right)^{N-2K+1}.
\end{align}
Similarly, the moment generating function of the interference is
\begin{align}
\mathcal{M}_{I_{\ell}}\left(z\right) = \prod_{{\bf{d}}_j \in \mathcal{D}_{\ell}} \left( \frac{1}{1+ z \left(\left\| {\bf{d}}_j \right\| / \left\| {\bf{d}}_0 \right\| \right)^{-\beta}}\right)^{K}.
\end{align}
By applying Lemma \ref{lem_useful}, the Theorem is completed.
\end{IEEEproof}
Given distances from the tagged user to each BS, the ergodic spectral efficiency of the tagged user is obtained by numerically calculating \eqref{ergodic_rate_th}.
%With the derived expression, the averaged  rate during four phases is obtained by \eqref{averge_rate}.
%Although the ergodic spectral efficiency is represented simply in an integral form, it still 

\subsection{A Lower Bound Characterization}
We also derive a lower bound on the ergodic spectral efficiency with a closed form which is useful in obtaining insight on how system parameters change the ergodic spectral efficiency. To derive a lower bound, the following Lemma is claimed.
\begin{lemma} \label{lem_lower}
For any non-negative independent random variables $S$ and $I$,
\begin{align}
\mathbb{E}\left[\log_2\left(1 + \frac{S}{I+1} \right) \right] \ge \log_2\left( 1 + \frac{e^{\mathbb{E}[\ln S]}}{\mathbb{E}\left[I+1\right]}\right).
\end{align}
\end{lemma}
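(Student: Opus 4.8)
The plan is to obtain the bound through two successive applications of Jensen's inequality, exploiting the convexity of $\log_2(1+e^x)$ and the concavity of $\ln(\cdot)$. First I would rewrite the ratio inside the logarithm as an exponential: setting $g(x) = \log_2(1+e^x)$ and $X = \ln S - \ln(I+1)$, we have $S/(I+1) = e^X$, so that the left-hand side equals $\mathbb{E}[g(X)]$.

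The key observation is that $g$ is convex, which follows from $g''(x) = \frac{1}{\ln 2}\,\frac{e^x}{(1+e^x)^2} > 0$. Hence Jensen's inequality gives $\mathbb{E}[g(X)] \ge g(\mathbb{E}[X])$, and by linearity of expectation $\mathbb{E}[X] = \mathbb{E}[\ln S] - \mathbb{E}[\ln(I+1)]$, so at this stage
\begin{align}
\mathbb{E}\left[\log_2\left(1+\frac{S}{I+1}\right)\right] \ge \log_2\left(1 + e^{\mathbb{E}[\ln S]}\, e^{-\mathbb{E}[\ln(I+1)]}\right). \nonumber
\end{align}

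Next I would apply Jensen's inequality a second time, now to the concave map $t \mapsto \ln t$, which yields $\mathbb{E}[\ln(I+1)] \le \ln \mathbb{E}[I+1]$ and therefore $e^{-\mathbb{E}[\ln(I+1)]} \ge 1/\mathbb{E}[I+1]$. Since $g$ is increasing (equivalently $\log_2(1+t)$ is increasing in $t$), substituting this bound into the previous display produces exactly the claimed inequality. Note that the independence of $S$ and $I$ is not strictly needed for the chain above, but it is harmless and is consistent with the setting in which the lemma is later invoked.

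The step I expect to be the crux is the first one: recognizing that the correct change of variable is logarithmic and that $\log_2(1+e^x)$ is the relevant convex function. Once this is identified, the remainder is a routine concavity bound on the denominator together with monotonicity of the logarithm. I would also pay brief attention to the degenerate case in which $S=0$ with positive probability, so that $\mathbb{E}[\ln S] = -\infty$; then the right-hand side reduces to $\log_2 1 = 0$ and the inequality holds trivially because the left-hand side is non-negative.
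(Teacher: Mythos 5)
Your proof is correct, and it takes a recognizably different route from the paper's. The paper first conditions on $S$ (this is where independence is actually used, so that the conditional law of $I$ given $S$ is the unconditional one), applies Jensen to the convex map $x\mapsto \log_2(1+S/x)$ to pull $\mathbb{E}[I+1]$ into the denominator, and only then writes $S=e^{\ln S}$ and applies Jensen a second time to the convex map $y\mapsto\log_2(1+e^y/c)$. You instead absorb both sources of randomness into the single variable $X=\ln S-\ln(I+1)$, apply Jensen once to the convex function $\log_2(1+e^x)$, and then repair the denominator using the concavity of $\ln$. Both arguments are ``Jensen twice,'' but yours buys two small extras: it does not use independence at all (your chain is valid for arbitrarily dependent non-negative $S$ and $I$, as you note), and your intermediate bound $\log_2\left(1+e^{\mathbb{E}[\ln S]}e^{-\mathbb{E}[\ln(I+1)]}\right)$ is tighter than the stated one, since $e^{-\mathbb{E}[\ln(I+1)]}\ge 1/\mathbb{E}[I+1]$. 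The paper's version, in exchange, keeps the two random variables separate in a way that matches how the bound is later evaluated ($\mathbb{E}[\ln S]$ via the digamma function and $\mathbb{E}[I]$ via Campbell's theorem). Your attention to the degenerate case $S=0$, where $\mathbb{E}[\ln S]=-\infty$, is a detail the paper omits.
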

\begin{IEEEproof}
%We are able to complete the proof by using Jensen's inequality two times.
\begin{align}
\mathbb{E} \left[\log_2 \left( 1+ \frac{S}{I+1} \right) \right] &  {\ge}  \mathbb{E}_{S} \left[ \mathbb{E} \left[ \left. \log_2\left( 1 + \frac{S}{I+1}\right) \right| S \right] \right]   \nonumber \\
& \mathop {\ge} \limits^{(a)} \mathbb{E}\left[   \log_2 \left( 1 + \frac{e^{\ln S}}{\mathbb{E}\left[I+1\right]}\right) \right]  \nonumber \\
& \mathop {\ge} \limits^{(b)} \log_2\left( 1 + \frac{e^{\mathbb{E}[\ln S]}}{\mathbb{E}\left[I+1\right]}\right)
\end{align}
where (a) follows Jensen's inequality and $\log_2 \left( 1 + \frac{1}{x}\right)$ is a convex function of non-negative variable $x$, and (b) also follows Jensen's inequality and $\log_2 \left( 1 + e^{y}\right)$ is a convex function of non-negative variable $y$.
\end{IEEEproof}
Applying Lemma \ref{lem_lower}, a lower bound of the ergodic spectral efficiency is obtained in the following Theorem.
\begin{theorem} \label{th_low}
When the BSs form a BS coordination set according to the proposed cluster pattern, 
the ergodic spectral efficiency of the tagged user is lower bounded as
%\begin{align}
%R^{\ell}  \left({\rm{SNR}}, \left\| {\bf{d}}_0 \right\|, \mathcal{D}_{\ell},N, K, L, \beta\right)  \ge \frac{1}{L}\log_2\left(1 + \frac{{\rm{exp}}\left( \psi\left(N-2K+1 \right)  \right)  }{ K \sum_{{\bf{d}}_j \in \mathcal{D}_{\ell}}\left(\left\| {\bf{d}}_j \right\| / \left\| {\bf{d}}_0 \right\| \right)^{-\beta} + \left\| {\bf{d}}_0 \right\|^{\beta} / {\rm{SNR}}/K}  \right), \label{lower_bound_rate}
%\end{align}
\begin{align}
&R^{\ell}  \left({\rm{SNR}}, \left\| {\bf{d}}_0 \right\|, \mathcal{D}_{\ell},N, K, L, \beta\right)  \nonumber \\
&\ge \frac{1}{L}\log_2\left(1 + \frac{{\rm{exp}}\left( \psi\left(N-2K+1 \right)  \right)  }{ K \sum_{{\bf{d}}_j \in \mathcal{D}_{\ell}}\left(\frac{\left\| {\bf{d}}_j \right\| }{ \left\| {\bf{d}}_0 \right\|} \right)^{-\beta} + \left\| {\bf{d}}_0 \right\|^{\beta} / {\rm{SNR}}/K}  \right), \label{lower_bound_rate}
\end{align}
where $\psi\left(z\right)$ is the digamma function defined as $\Gamma'\left(z\right) / \Gamma\left(z\right)$, where $\Gamma\left(z\right)$ is the gamma function defined as
\begin{align}
\Gamma\left(z\right) = \int_{0}^{\infty} x^{z-1}e^{-z} {\rm d} z.
\end{align}
\end{theorem}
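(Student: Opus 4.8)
The plan is to apply Lemma~\ref{lem_lower} directly to the ergodic spectral efficiency \eqref{rate_def}, identifying the desired channel gain as the random variable $S$ and the full denominator of the SINR \eqref{sinr} as the positive random quantity playing the role of $I+1$. Concretely, I would write $R^{\ell} = \frac{1}{L}\mathbb{E}\left[\log_2\left(1 + S/J\right)\right]$ with $S = \left|\tilde h^{\ell}_0\right|^2$ and $J = I_{\ell} + \left\|{\bf{d}}_0\right\|^{\beta}/{\rm SNR}/K$. Inspecting the proof of Lemma~\ref{lem_lower}, the ``$+1$'' in its denominator may be replaced by any strictly positive random variable, so the deterministic noise floor $\left\|{\bf{d}}_0\right\|^{\beta}/{\rm SNR}/K$ is harmlessly absorbed into $J$. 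Since the effective desired gain after CBF depends only on ${\bf{h}}^{\ell}_0$ and the serving cluster's beamformers while $I_{\ell}$ depends on the out-of-cluster channels ${\bf{h}}^{\ell}_j$ and beamformers ${\bf{V}}^{\ell}_j$, the variables $S$ and $I_{\ell}$ are independent, which is exactly what is needed to push $\mathbb{E}\left[I+1\,|\,S\right] = \mathbb{E}\left[I+1\right]$ through step (a) of the lemma. This gives
\[
R^{\ell} \ge \frac{1}{L}\log_2\left(1 + \frac{e^{\mathbb{E}[\ln S]}}{\mathbb{E}[J]}\right).
\]

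Two expectations then remain. For the numerator, Lemma~\ref{lem_ch} identifies $S = \left|\tilde h^{\ell}_0\right|^2$ as a (suitably scaled) chi-squared variable, equivalently a Gamma variable with shape $N-2K+1$ and unit scale. Using the standard identity $\mathbb{E}[\ln X] = \psi(\alpha)$ for $X$ of shape $\alpha$ and unit scale, I obtain $\mathbb{E}[\ln S] = \psi(N-2K+1)$, hence $e^{\mathbb{E}[\ln S]} = \exp\left(\psi(N-2K+1)\right)$. For the denominator I compute $\mathbb{E}[J] = \mathbb{E}[I_{\ell}] + \left\|{\bf{d}}_0\right\|^{\beta}/{\rm SNR}/K$. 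Expanding $I_{\ell} = \sum_{{\bf{d}}_j \in \mathcal{D}_{\ell}}\left(\left\|{\bf{d}}_j\right\|/\left\|{\bf{d}}_0\right\|\right)^{-\beta}\left|\left({\bf{h}}^{\ell}_j\right)^T {\bf{V}}^{\ell}_j\right|^2$ and noting that each $\left({\bf{h}}^{\ell}_j\right)^T {\bf{v}}^{\ell}_{j,k}$ is $\mathcal{CN}(0,1)$ (a unit-norm beamformer acting on an IID $\mathcal{CN}(0,1)$ vector independent of it), each of the $K$ summands comprising $\left|\left({\bf{h}}^{\ell}_j\right)^T {\bf{V}}^{\ell}_j\right|^2$ has unit mean, so $\mathbb{E}\left[\left|\left({\bf{h}}^{\ell}_j\right)^T {\bf{V}}^{\ell}_j\right|^2\right] = K$. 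Therefore $\mathbb{E}[I_{\ell}] = K\sum_{{\bf{d}}_j \in \mathcal{D}_{\ell}}\left(\left\|{\bf{d}}_j\right\|/\left\|{\bf{d}}_0\right\|\right)^{-\beta}$, and substituting both expectations reproduces \eqref{lower_bound_rate} exactly.

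I expect the main obstacle to be the numerator moment $\mathbb{E}[\ln S]$ rather than the mechanical application of the lemma. One must pin down the exact distributional convention behind Lemma~\ref{lem_ch}---degrees of freedom versus Gamma shape and scale---so that the digamma argument reduces cleanly to $N-2K+1$; a misplaced factor of two in the chi-squared normalization would both shift the argument of $\psi$ and inject a spurious additive $\ln$ term into the bound. A secondary point worth stating explicitly is the independence of $S$ and $I_{\ell}$ invoked above, which holds because the CBF beamformers nulling the intra-cluster interference are built from channels statistically independent of the out-of-cluster links; everything else in the argument is a routine moment computation.
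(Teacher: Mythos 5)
Your proposal is correct and follows essentially the same route as the paper: apply Lemma \ref{lem_lower}, evaluate $\mathbb{E}[\ln|\tilde h_0^{\ell}|^2]=\psi(N-2K+1)$ from the chi-squared (Gamma shape $N-2K+1$, unit scale) distribution in Lemma \ref{lem_ch}, and evaluate $\mathbb{E}[I_{\ell}]=K\sum_{{\bf{d}}_j\in\mathcal{D}_{\ell}}(\|{\bf{d}}_j\|/\|{\bf{d}}_0\|)^{-\beta}$ from the mean of the weighted chi-squared interference terms. Your explicit remarks on absorbing the deterministic noise floor into the lemma's ``$+1$'' and on the independence of the desired gain and the out-of-cluster interference are points the paper's proof leaves implicit, but they do not constitute a different argument.
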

\begin{IEEEproof}
First, we characterize $\mathbb{E}\left[ \ln \left|\tilde h ^{\ell}_0 \right|^2\right]$.
Since $\left|\tilde h ^{\ell}_0 \right|^2 \sim \chi^2\left(2\left(N-2K+1\right) \right)$, 
\begin{align}
\mathbb{E}\left[ \ln \left|\tilde h ^{\ell}_0 \right|^2\right] =  \psi\left(N-2K+1 \right),
\end{align}
where $\psi\left(z\right)$ is the digamma function defined as $\Gamma'\left(z\right) / \Gamma\left(z\right)$. Next, we compute $\mathbb{E}\left[I_{{\ell}} \right]$. Since the interference is a sum of independent Chi-squared random variables with different weights corresponding to the ratio of the distances, we have
\begin{align}
\mathbb{E}\left[I_{{\ell}} \right] = K \sum_{{\bf{d}}_j \in \mathcal{D}_{\ell} }\left(\frac{\left\| {\bf{d}}_j \right\|}{ \left\| {\bf{d}}_0 \right\|} \right)^{-\beta}.
\end{align}
By applying Lemma \eqref{lem_lower}, we complete the proof. 
%Verification of the obtained lower bound is provided in Section VI.
\end{IEEEproof}

\begin{figure}[t]
\centerline{\resizebox{0.7\columnwidth}{!}  {\includegraphics{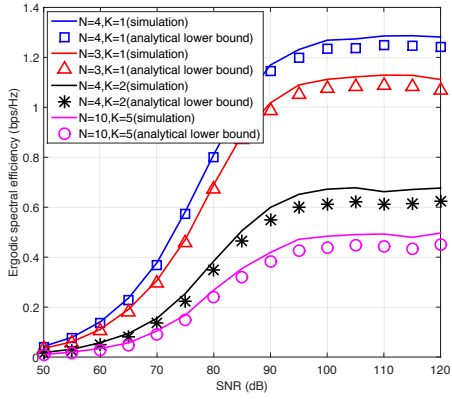}}}  
\caption{The comparison of the ergodic spectral efficiency obtained by Monte-Carlo simulations and the analytical lower bound \eqref{lower_bound_rate}. 
It is assumed that $\beta = 4$, $N \in \{3, 4, 10\}$, and $K \in \{1,2, 5\}$.
% The analytical approximation tightly matches to the simulation results over the entire rate threshold region, for all the cases of the numbers of BS antennas and users.
}    
   \label{ergodic_verification}
\end{figure} 

Now we verify the derived lower bound by comparing with the ergodic spectral efficiency obtained by using Monte-Carlo simulations. We use a network model described in Fig.~\ref{network_snapshot} (b) in Section VI for the verification. It is assumed that $L = \Delta_{\rm EC}(G) = 4$. The comparison result is in Fig.~\ref{ergodic_verification}. In Fig.~\ref{ergodic_verification}, we observe two separate regime depending on SNR, which are referred as the DoF regime and the saturation regime in \cite{lozano:2013:limitcoop}. In both regimes, the lower bound is reasonably tight for different numbers of BS antennas and users settings.
%Theorem \ref{th_low} gives intuition about how the proposed method 
%affects the ergodic spectral efficiency. For example, from the approximation 
%\begin{align}
%\log_2\left(1+x \right) \approx x \log_2 e
%\end{align}
%when $x \ll 1$,
%the following approximation is possible in a low SINR regime.
%\begin{align} \label{ergodic_lower_approx}
%\frac{1}{L}\log_2\left(1 + \frac{{\rm{exp}}\left( \psi\left(N-2K+1 \right)  \right)  }{ K \sum_{{\bf{d}}_j \in \mathcal{D}_{\ell}}\left(\left\| {\bf{d}}_j \right\| / \left\| {\bf{d}}_0 \right\| \right)^{-\beta} + \left\| {\bf{d}}_0 \right\|^{\beta} / {\rm{SNR}}/K}  \right) \nonumber \\
%\approx \frac{\log_2 e}{L}\cdot \frac{{\rm{exp}}\left( \psi\left(N-2K+1 \right)  \right)  }{ K \sum_{{\bf{d}}_{j} \in \mathcal{D}_{\ell}}\left(\left\| {\bf{d}}_j \right\| / \left\| {\bf{d}}_0 \right\| \right)^{-\beta} + \left\| {\bf{d}} \right\|^{\beta} / {\rm{SNR}}/K}.
%\end{align}
%By using the proposed method, one dominant denominator of \eqref{ergodic_lower_approx} is eliminated, which leads significant increase of the ergodic spectral efficiency.

\section{Ergodic Spectral Efficiency in Random Networks}

%\begin{figure}[t]
%\centerline{\resizebox{0.7\columnwidth}{!}{\includegraphics{voronoi_random.pdf}}}  
%\caption{An illustration of the proposed cluster pattern description in a random network. 
%Once the 2nd-order Voronoi regions are defined on the plane, the same coordination strategy is applied with the case of a fixed network model.
%Since the cluster pattern is designed by using the 2nd-order Voronoi region, a user scheduling time and the corresponding BS coordination set is determined by the 2nd-order Voronoi region.
%}    
%\label{voronoi_random_fig}   
%\end{figure}  

So far, we characterize the performance when the BSs are coordinated with the proposed cluster patterns in a fixed network model. In this section, we analyze the ergodic spectral efficiency when applying the proposed method in a random network.
We skip characterizing the rate coverage probability of the proposed method in a random network since it is equivalent to a special case of (26) and (27) in \cite{NY:dynamic}. 
%modeled by using a homogeneous PPP, and analyze the performance as in the square grid network model. For a random network 
We consider a downlink cellular network where BSs equipped with $N$ antennas are distributed according to a homogeneous PPP, $\Phi = \left\{ \left. {\bf{d}}_{i} \right| i \in \mathbb{N} \right\}$ with density $\lambda$ on the plane $\mathbb{R}^2$. 
For easy of notation, we start the BS location index with $1$ so that the $k$-th closest BS from the origin is denoted as ${\bf{d}}_{k}$.
Single antenna users are also distributed as a homogeneous PPP, $\Phi_{\rm{U}} = \left\{ \left. {\bf{u}}_{i} \right| i \in \mathbb{N} \right\}$ with density $\lambda_{{\rm{U}}}$, which is independent with $\Phi$. We assume that $\lambda_{\rm U} \gg \lambda$ so that there exist at least $K$ users in each region with high enough probability. 
Under this assumption, $K$ users are selected in each region.
In this network model, the proposed method is applied in the same way with the case of a fixed network model, i.e., make BS clusters by using 2nd-order Voronoi region, draw a graph $G\left(\Phi \right)$ by Delaunay triangulation, and design BS cluster patterns by using the corresponding edge-coloring.
Under the assumption that the proposed method is used,
we derive a lower bound on the ergodic spectral efficiency in a closed form.
%The rate coverage distribution is omitted since the SIR distribution when the dynamic clustering method was applied was characterized in \cite{NY:dynamic}. 
One should note that the randomness of the network model in this section is not a core part of the proposed method. Rather, it is a tool for analyzing the ergodic spectral efficiency performance.

Without loss of generality, we focus on the typical user located at the origin, and assume that it is associated with the BS located at ${\bf{d}}_1$. Under the premise that the typical user is in $\mathcal{V}_2\left({\bf{d}}_1, {\bf{d}}_2\right) \in \mathcal{P}_{\ell}$, the typical user is served by a BS cluster $\{ {{\bf{{d}}}_1},{{ \bf{{d}}}_2} \}$.
%in time duration $T_{\ell}$ for $\ell \in \left\{1,2,...,L\right\}$, which means there are $L$ %cluster patterns including the BS at ${\bf{d}}_0$. 
%Similarly to the fixed network case, we assume that $K$ users in $\mathcal{V}_2\left({\bf{d}}_1, {\bf{d}}_2\right)$ are selected to be associated with the BS located at ${\bf{d}}_1$, resulting in that there are $2K$ users in each 2nd-order Voronoi region.
%, and $N \ge LK'$.
Similarly to the fixed network case, the typical user communicates with a BS pair $\{ {{\bf{{d}}}_1},{{ \bf{{d}}}_2} \}$ by applying CBF, so that the intra-cluster interference from the BS at ${\bf{d}}_2$ to the typical user is removed.
%After forming the cluster set, the typical user is protected from interference by applying CBF as in the fixed network model case.
Then, equivalently with \eqref{sinr}, the SINR of the typical user is given by
\begin{align} \label{sir_random}
{\rm SINR}_{\left| \ell \right.} = \frac{\left\| {\bf{d}}_1 \right \|^{-\beta} \left| \tilde h^{\ell}_1 \right|^2}{\sum_{{\bf{d}}_j \in \Phi \backslash \mathcal B\left(0, \left\| {\bf{d}}_2 \right\| \right)}{ \left\| {\bf{d}}_j \right\|^{-\beta}\left| \left({{\bf{h}}}^{\ell}_{j}\right)^T {\bf{V}}^{\ell}_{j} \right|^2} + 1/ {\rm{SNR}}/K },
\end{align}
where $\tilde h^{\ell}_1$ is the modified channel coefficient from the BS at ${\bf{d}}_1$ to the typical user after applying CBF, ${\bf{h}}_j^{\ell}$ is the channel coefficient vector from the BS at ${\bf{d}}_j$ to the typical user, and ${\bf{V}}^{\ell}_{j}$ is a beamforming matrix of the BS at ${\bf{d}}_j$, respectively. 
%For analytical tractability, we assume that there are $K'$ users in each region, therefore $LK' = K$. 
The entries of the channel coefficient vector follows $\mathcal{CN}\left(0,1\right)$. From \eqref{sir_random}, the ergodic spectral efficiency is defined as
\begin{align} \label{rate_random}
&R^{\ell} \left({\rm{SNR}},N, K, \beta\right) = \frac{1}{L}\mathbb{E}\left[\log_2\left(1+  {{\rm SINR}}_{|{\ell}} \right) \right],
\end{align}
where the pre-log term $1/L$ is used because $1/L$ time-frequency resources are exploited to serve the typical user. 
%$L$ is determined as an algorithm parameter in the edge cutting algorithm.

\subsection{Lower Bound on the Ergodic Spectral Efficiency}
Now we derive a lower bound on the ergodic spectral efficiency. 
%We first obtain a lower bound for unmodified asymmetric network and then obtain a modified asymmetric network where some edges are cut for reducing the time-frequency resources use. 
%For analytical tractability, we assume that the edge cutting is randomly performed. 
To do this, we need the following Lemma. 
%Lemma \ref{Lem_Ber} is for obtaining $\mathbb{E}\left[L \right]$, and 
Lemma \ref{lem_joint_pdf} provides probability density function (PDF) of distances of the closest BS and $k$-th closest BS to the origin.

\begin{lemma} \label{lem_joint_pdf}
The joint probability density function (PDF) of $\left\| {\bf{d}}_1 \right\|$ and $\left\|{\bf{d}}_k \right\|$ is 
%\begin{align}
%f_{\left\| {\bf{d}}_1 \right\|, \left\| {\bf{d}}_k \right\|}\left(r_1, r_k \right) = \left\{ \begin{array}{cc} \frac{4\left(\lambda \pi \right)^k}{\left(k-2 \right)!} r_1 r_k \left(r_k^2 - r_1^2 \right)^{k-2} e^{-\lambda \pi r_k^2} & {\rm if}\; r_1 \le r_k \\ 0&{\rm otherwise.}  \end{array} \right.
%\end{align}
\begin{align}
&f_{\left\| {\bf{d}}_1 \right\|, \left\| {\bf{d}}_k \right\|}\left(r_1, r_k \right) \nonumber \\
&= \left\{ \begin{array}{cc} \frac{4\left(\lambda \pi \right)^k}{\left(k-2 \right)!} r_1 r_k \left(r_k^2 - r_1^2 \right)^{k-2} e^{-\lambda \pi r_k^2} & {\rm if}\; r_1 \le r_k \\ 0&{\rm otherwise.}  \end{array} \right.
\end{align}
\end{lemma}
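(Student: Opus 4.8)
The plan is to derive the joint PDF directly from the two defining properties of the homogeneous PPP $\Phi$: the number of BSs falling in any bounded region $A \subset \mathbb{R}^2$ is Poisson distributed with mean $\lambda$ times the area of $A$, and the counts over disjoint regions are mutually independent. Because the map from points to their ordered distances is determined entirely by how many points lie inside nested disks, I would express the event $\{\left\| {\bf{d}}_1 \right\| \approx r_1,\ \left\| {\bf{d}}_k \right\| \approx r_k\}$ as a product of void and occupancy probabilities over a partition of the plane into concentric annuli, then read off the density.

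Concretely, fix $r_1 \le r_k$ and identify the infinitesimal event $\left\| {\bf{d}}_1 \right\| \in [r_1, r_1 + {\rm d}r_1]$ and $\left\| {\bf{d}}_k \right\| \in [r_k, r_k + {\rm d}r_k]$ with the following configuration over four disjoint regions: (i) the open disk $\mathcal{B}(0, r_1)$ is empty, so that no BS is closer than $r_1$; (ii) the thin annulus of radius $r_1$ and width ${\rm d}r_1$ holds exactly the closest BS; (iii) the annulus between radii $r_1$ and $r_k$ holds exactly $k-2$ BSs, namely those ranked $2$ through $k-1$; and (iv) the thin annulus of radius $r_k$ and width ${\rm d}r_k$ holds exactly the $k$-th closest BS. The step I expect to require the most care is the combinatorial bookkeeping here: justifying that exactly $k-2$ intermediate points is the correct count for the $k$-th-closest condition, and checking that these four regions are pairwise disjoint so that their Poisson counts factor as independent events.

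With the partition in hand, each factor is a routine evaluation. The void probability of $\mathcal{B}(0, r_1)$ is $e^{-\lambda \pi r_1^2}$; each thin annulus contains exactly one point with probability $2\pi\lambda r_1\,{\rm d}r_1$ and $2\pi\lambda r_k\,{\rm d}r_k$ respectively, to first order in the widths; and the intermediate annulus, whose area is $\pi(r_k^2 - r_1^2)$, contains exactly $k-2$ points with probability $\frac{(\lambda\pi(r_k^2 - r_1^2))^{k-2}}{(k-2)!}\,e^{-\lambda\pi(r_k^2 - r_1^2)}$. Multiplying these independent factors and dividing by ${\rm d}r_1\,{\rm d}r_k$, the exponentials merge as $e^{-\lambda\pi r_1^2}\,e^{-\lambda\pi(r_k^2 - r_1^2)} = e^{-\lambda\pi r_k^2}$ and the prefactors combine as $(2\pi\lambda)^2 (\lambda\pi)^{k-2} = 4(\lambda\pi)^k$, yielding $\frac{4(\lambda\pi)^k}{(k-2)!}\, r_1 r_k (r_k^2 - r_1^2)^{k-2} e^{-\lambda\pi r_k^2}$, exactly as claimed. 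The support constraint $r_1 \le r_k$ is automatic, since $r_1 > r_k$ would require the closest BS to be farther than the $k$-th closest, an impossible event of probability zero. I would close with a one-line remark that the $k=2$ case is consistent: the intermediate annulus then carries no points and $(k-2)! = 1$, so the formula reduces correctly to the joint density of the two nearest BSs.
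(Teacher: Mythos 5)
Your derivation is correct: the four-region decomposition into the empty inner disk, two infinitesimal annuli each holding exactly one point, and the intermediate annulus holding exactly $k-2$ points is the standard void-probability argument for order statistics of a homogeneous PPP, the regions are disjoint so the Poisson counts factor, and the algebra combining the factors into $\frac{4(\lambda\pi)^k}{(k-2)!}\, r_1 r_k (r_k^2-r_1^2)^{k-2} e^{-\lambda\pi r_k^2}$ checks out. The paper itself gives no proof here --- it only cites Appendix C of \cite{NY:dynamic} --- so your argument is a complete, self-contained substitute for that citation; the only caveat worth recording is that the formula (and your count of $k-2$ intermediate points) implicitly assumes $k \ge 2$, which is the only case the paper uses.
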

\begin{proof}
See Appendix C in \cite{NY:dynamic}.
\end{proof}

Leveraging Lemma \ref{lem_joint_pdf}, Theorem \ref{theo_ppp_ergodic} is claimed. 
For analytical tractability, we assume the interference limited regime.

\begin{theorem} \label{theo_ppp_ergodic}
When BSs' locations are distributed by a homogeneous PPP and the proposed method is applied,
the ergodic spectral efficiency of the typical user is lower bounded as
%The ergodic spectral efficiency when the CM is applied with the proposed method \eqref{rate_random} is lower bounded by
%\begin{align} \label{theo_low_random}
%R^{\ell} \left( N, K, \beta\right) \ge \frac{1}{L} \log_2 \left(1 + \frac{\left(\beta^2 - 4 \right)}{{8K}} {\exp\left( \psi\left(N-2K+1 \right)\right)} \right),
%\end{align}
\begin{align} \label{theo_low_random}
&R^{\ell} \left( N, K, \beta\right) \nonumber \\
&\ge \frac{1}{L} \log_2 \left(1 + \frac{\left(\beta^2 - 4 \right)}{{8K}} {\exp\left( \psi\left(N-2K+1 \right)\right)} \right),
\end{align}
where $\psi \left( z \right)$ is the digamma function, defined as $\Gamma'\left(z\right) / \Gamma\left(z\right)$, where $\Gamma\left(z\right)$ is the gamma function defined as
\begin{align}
\Gamma\left(z\right) = \int_{0}^{\infty} x^{z-1}e^{-z} {\rm d} z.
\end{align}
\end{theorem}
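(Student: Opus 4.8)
The plan is to apply the Jensen-type bound of Lemma~\ref{lem_lower} to the interference-limited SINR in \eqref{sir_random}, after isolating the desired fading gain from the combined geometry-and-interference term. First I would rewrite the SINR by multiplying numerator and denominator by $\left\| {\bf{d}}_1 \right\|^{\beta}$, so that
\begin{align}
{\rm SINR}_{|\ell} = \frac{\left| \tilde h^{\ell}_1 \right|^2}{\tilde I}, \quad \tilde I = \left\| {\bf{d}}_1 \right\|^{\beta} \sum_{{\bf{d}}_j \in \Phi \backslash \mathcal{B}(0, \left\| {\bf{d}}_2 \right\|)} \left\| {\bf{d}}_j \right\|^{-\beta} \left| \left( {\bf{h}}^{\ell}_j \right)^T {\bf{V}}^{\ell}_j \right|^2. \nonumber
\end{align}
Here $S = \left| \tilde h^{\ell}_1 \right|^2$ is Chi-squared with $2(N-2K+1)$ degrees of freedom by Lemma~\ref{lem_ch}, and it is independent of $\tilde I$ because the desired-link fading is independent of the point process and of the interfering-link fading. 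I would set $S$ and $\tilde I$ as the two arguments of Lemma~\ref{lem_lower}, taking the interference-limited limit in which the additive noise term vanishes (the two Jensen steps in that lemma's proof carry over verbatim with the additive constant removed). This reduces the claim to evaluating $e^{\mathbb{E}[\ln S]}$ and $\mathbb{E}[\tilde I]$.

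The numerator is immediate: exactly as in the proof of Theorem~\ref{th_low}, $\mathbb{E}[\ln S] = \psi(N-2K+1)$ for this Chi-squared variable, giving $\exp(\psi(N-2K+1))$. The real work is $\mathbb{E}[\tilde I]$. I would first average over the interfering channels: each $\left| ( {\bf{h}}^{\ell}_j )^T {\bf{V}}^{\ell}_j \right|^2$ is Chi-squared with $2K$ degrees of freedom (Lemma~\ref{lem_ch}) and independent of the geometry, so its mean is $K$, leaving
\begin{align}
\mathbb{E}[\tilde I] = K\, \mathbb{E}\!\left[ \left\| {\bf{d}}_1 \right\|^{\beta} \sum_{{\bf{d}}_j \in \Phi \backslash \mathcal{B}(0, \left\| {\bf{d}}_2 \right\|)} \left\| {\bf{d}}_j \right\|^{-\beta} \right]. \nonumber
\end{align}
Next I would condition on $r_1 = \left\| {\bf{d}}_1 \right\|$ and $r_2 = \left\| {\bf{d}}_2 \right\|$. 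By the restriction property of a homogeneous PPP, conditioned on these the points outside $\mathcal{B}(0,r_2)$ again form a PPP of intensity $\lambda$ on $\{\left\| {\bf{x}} \right\| > r_2\}$, so Campbell's theorem gives the inner sum's conditional mean as $\int_{r_2}^{\infty} x^{-\beta}\, 2\pi\lambda x\, {\rm d}x = 2\pi\lambda r_2^{2-\beta}/(\beta-2)$, which is finite precisely when $\beta>2$. The outer expectation of $r_1^{\beta} r_2^{2-\beta}$ is then taken against the joint density of Lemma~\ref{lem_joint_pdf} with $k=2$, namely $4(\lambda\pi)^2 r_1 r_2 e^{-\lambda\pi r_2^2}$ on $\{r_1 \le r_2\}$. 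Performing the inner integral over $r_1$ and a Gamma-type integral over $r_2$ cancels the $\lambda$-dependence and yields $\mathbb{E}[\tilde I] = 8K/(\beta^2-4)$. Substituting both quantities into Lemma~\ref{lem_lower} and reinstating the $1/L$ prefactor delivers \eqref{theo_low_random}.

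The main obstacle is the correlation between the desired distance $\left\| {\bf{d}}_1 \right\|$ and the interferer distances, all of which are measurable functions of the same point process, so a naive factorization of $\mathbb{E}[\left\| {\bf{d}}_1 \right\|^{\beta} \sum \left\| {\bf{d}}_j \right\|^{-\beta}]$ is invalid. The device that resolves this is conditioning on $(r_1,r_2)$ and invoking the PPP restriction property, so that \emph{given} $r_2$ the interference sum decouples from $r_1$ and is handled by Campbell's theorem, while the residual coupling between $r_1$ and $r_2$ is carried exactly by the joint density of Lemma~\ref{lem_joint_pdf}. I would also emphasize the two structural facts the argument relies on: the independence of $S$ from $\tilde I$, which legitimizes Lemma~\ref{lem_lower}, and the requirement $\beta>2$ for the interference mean to converge, consistent with the factor $\beta^2-4$ appearing in the statement.
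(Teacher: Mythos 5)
Your proposal is correct and follows essentially the same route as the paper's proof: apply Lemma~\ref{lem_lower} to the interference-limited SINR, obtain $\exp(\psi(N-2K+1))$ for the desired channel gain, and compute the mean normalized interference by factoring out $K$ from the $\chi^2(2K)$ fading, conditioning on $(r_1,r_2)$, applying Campbell's theorem, and integrating against the $k=2$ joint density of Lemma~\ref{lem_joint_pdf} to get $8K/(\beta^2-4)$. Your added remarks on the independence of $S$ and $\tilde I$ and on why the conditioning resolves the coupling between $\left\|{\bf d}_1\right\|$ and the interferer distances are just a more explicit account of the same steps.
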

\begin{proof}
%From our numerical finding that $\mathbb{E}\left[L \right] = 11.1$, 
The ergodic spectral efficiency \eqref{rate_random} is written by
%\begin{align} \label{theorem4_frame}
%&\mathbb{E}\left[ \frac{1}{L} \log_2\left(1+  {{\rm SINR}}_{|{\ell}} \right) \right] 
%= \frac{1}{L} \mathbb{E} \left[ \log_2\left(1+  {{\rm SINR}}_{|{\ell}} \right) \right] \nonumber \\
%\mathop {\ge} \limits^{(a)} & \frac{1}{\mathbb{E}\left[ L \right] } \mathbb{E} \left[ \log_2\left(1+  {{\rm SIR}}_{|{\ell}} \right) \right] \nonumber \\
%\mathop {\ge} \limits^{(a)} &\frac{1}{L} \log_2\left(1 + \frac{e^{\mathbb{E} \left[ \ln\left( \left| \tilde h^{\ell}_1 \right|^2 \right)  \right]}}{\mathbb{E} \left[\left\| {\bf{d}}_1 \right \|^{\beta}   \sum_{{\bf{d}}_i \in \Phi \backslash \mathcal B\left(0, \left\| {\bf{d}}_2 \right\| \right)}{ \left\| {\bf{d}}_i \right\|^{-\beta}\left| \left({{\bf{h}}}^{\ell}_{i}\right)^T {\bf{V}}^{\ell}_{i} \right|^2}\right]} \right),
%\end{align}
\begin{align} \label{theorem4_frame}
&\mathbb{E}\left[ \frac{1}{L} \log_2\left(1+  {{\rm SINR}}_{|{\ell}} \right) \right] \nonumber \\ 
&= \frac{1}{L} \mathbb{E} \left[ \log_2\left(1+  {{\rm SINR}}_{|{\ell}} \right) \right] \nonumber \\
%\mathop {\ge} \limits^{(a)} & \frac{1}{\mathbb{E}\left[ L \right] } \mathbb{E} \left[ \log_2\left(1+  {{\rm SIR}}_{|{\ell}} \right) \right] \nonumber \\
&\mathop {\ge} \limits^{(a)} \frac{1}{L} \log_2\left(1 + \frac{e^{\mathbb{E} \left[ \ln\left( \left| \tilde h^{\ell}_1 \right|^2 \right)  \right]}}{A} \right),
\end{align}
%where (a) follows Lemma \ref{lem_lower} and $L = \Delta_{\rm EC}(G)$. Note that $\Delta_{\rm EC}(G)$ is determined as an algorithm parameter of the edge-cutting algorithm.
%From Lemma \ref{Lem_Ber}, $\rho = \Theta \left(\log n/\log\log n \right)$. 
%Since the expected number of the edge of Voronoi region is $6$, $\rho = 6$. 
%For sufficiently large $n$, $\rho \approx $
where $A = \mathbb{E} \left[\left\| {\bf{d}}_1 \right \|^{\beta}   \sum_{{\bf{d}}_i \in \Phi \backslash \mathcal B\left(0, \left\| {\bf{d}}_2 \right\| \right)}{ \left\| {\bf{d}}_i \right\|^{-\beta}\left| \left({{\bf{h}}}^{\ell}_{i}\right)^T {\bf{V}}^{\ell}_{i} \right|^2}\right]$, (a) follows Lemma \ref{lem_lower}, and $L = \Delta_{\rm EC}(G)$. Note that $\Delta_{\rm EC}(G)$ is determined as an algorithm parameter of the edge-cutting algorithm.
Now, we obtain the expectation in the nominator and denominator of \eqref{theorem4_frame}.
Since the interference is mitigated by CBF, as in the case of the deterministic square grid network model, $\left|\tilde h ^{\ell}_1 \right|^2$ follows $\chi^2\left(2\left(N-2K+1\right) \right)$, which leads to 
\begin{align} \label{theorem4_X}
\mathbb{E} \left[ \ln\left( \left| \tilde h^{\ell}_1 \right|^2 \right)  \right] = \psi\left(N-2K+1 \right),
\end{align}
where $\psi\left(\cdot\right)$ is the digamma function. 
Next, we calculate the expectation of the interference part. We write the expectation of the denominator of the SIR as
\begin{align} 
&{\mathbb{E} \left[\left\| {\bf{d}}_1 \right \|^{\beta}   \sum_{{\bf{d}}_i \in \Phi \backslash \mathcal B\left(0, \left\| {\bf{d}}_2 \right\| \right)}{ \left\| {\bf{d}}_i \right\|^{-\beta}\left|\left({{\bf{h}}}^{\ell}_{i}\right)^T {\bf{V}}^{\ell}_{i} \right|^2}\right]} \nonumber \\
&\mathop = \limits^{(a)}  K {\mathbb{E} \left[\left\| {\bf{d}}_1 \right \|^{\beta}   \sum_{{\bf{d}}_i \in \Phi \backslash \mathcal B\left(0, \left\| {\bf{d}}_2 \right\| \right)}{ \left\| {\bf{d}}_i \right\|^{-\beta}}\right]} \nonumber \\
%\mathop = \limits^{} & K \mathbb{E}_{r_1, r_2} \left[ {\mathbb{E}_{\Phi \backslash \mathcal{B}\left(0, r_2 \right)} \left[\left. r_1^{\beta} \sum_{{\bf{d}}_i \in \Phi \backslash \mathcal B\left(0, r_2 \right)}{ \left\| {\bf{d}}_i \right\|^{-\beta}}
%2 \pi \lambda \int_{r_2}^{\infty} r^{1-\beta}  {\rm d} r 
%\right| \left\| {\bf{d}}_1 \right\| = r_1,  \left\| {\bf{d}}_2 \right\| = r_2\right]} 
%\right] \nonumber \\
\mathop = \limits^{} & K \mathbb{E}_{r_1, r_2} \left[ {\mathbb{E}_{\Phi \backslash \mathcal{B}\left(0, r_2 \right)} \left[\left. r_1^{\beta} \sum_{{\bf{d}}_i \in \Phi \backslash \mathcal B\left(0, r_2 \right)}{ \left\| {\bf{d}}_i \right\|^{-\beta}}
\right| \begin{array}{c}{\left\| {\bf{d}}_1 \right\| = r_1}, \\{ \left\| {\bf{d}}_2 \right\| = r_2}\end{array}\right]} 
\right] \nonumber \\
\mathop = \limits^{(b)} & K \mathbb{E}_{r_1, r_2} \left[ r_1^{\beta} 2 \pi \lambda \int_{r_2}^{\infty} r^{1-\beta}  {\rm d} r \right] \nonumber \\
%\mathop = \limits^{} &  \frac{ 2 K\pi \lambda}{\beta-2} \mathbb{E}_{r_1, r_2} \left[  r_1^{\beta} r_2^{2-\beta}  \right] \nonumber \\
\mathop = \limits^{(c)} & \frac{ 2 K\pi \lambda}{\beta-2} \int_{r_2 = 0}^{\infty} \int_{r_1 = 0} ^{r_2}  4\left(\lambda \pi \right)^2   e^{-\lambda \pi r^2_2} r _1^{\beta+1} r_2^{3-\beta}{\rm d} r_1 {\rm d} r_2 \nonumber \\
=& \frac{8K}{\beta^2 - 4} \label{theorem4_Y}
\end{align}
where (a) follows $\left|\left({{\bf{h}}}^{\ell}_{i}\right)^T {\bf{V}}^{\ell}_{i} \right|^2 \sim \chi^2\left(2K \right)$ and it is independent to $\Phi$, (b) follows the Campbell's theorem, and (c) follows that the joint PDF in Lemma \ref{lem_joint_pdf}. 
%The joint PDF of $\left\| {\bf{d}}_1 \right\|$ and $\left\| {\bf{d}}_2 \right\|$ can be directly obtained from Lemma \ref{lem_joint_pdf}.
%The proof of the above PDF is provided in \cite{NY:dynamic}. 
Plugging \eqref{theorem4_X} and \eqref{theorem4_Y} into \eqref{theorem4_frame} the proof is completed.
\end{proof}

%Since it is challenging to find $\mathbb{E}\left[1/L\right]$ analytically in a homogeneous PPP, it is possible to use numerically found $\mathbb{E}\left[1/L\right]$. For instance, we find that $\mathbb{E}\left[1/L\right] \approx 0.09$ by Monte-Carlo simulations when $\lambda = 3\times 10^{-7}$. Further, 
One should note that Theorem \ref{theo_ppp_ergodic} considers the worst case for the out-of-cluster interference. This is because, in an irregular and asymmetric network model, e.g., a homogeneous PPP model, some BSs do not use the specific time-frequency resource exploited to serve the typical user. Then, the interference from those BSs should not be counted in the out-of-cluster interference term. Analyzing this, however, is highly difficult because it depends on the correlation of the distributed points. It is not tractable in the known PPP framework. For this reason, we rather consider the worst case scenario, which assumes that the out of cluster interference comes from all the BSs in a network.

%In an asymmetric network, there can be too many
%is that asymmetric network geometry can cause too much 
%time-frequency resources use and this can degrade the performance.
%the total time-frequency resource used by each BS can be different. 
%Focusing on the BS at ${\bf{d}}_0$, the cluster patterns $\mathcal{P}_6$, $\mathcal{P}_7$, $\mathcal{P}_8$, and $\mathcal{P}_9$ are not exploited, which means the BS at ${\bf{d}}_0$ does not use the time-frequency resources assigned to $\mathcal{P}_6$, $\mathcal{P}_7$, $\mathcal{P}_8$, and $\mathcal{P}_9$. This unbalanced resource assignment can cause inefficiency compared to the symmetric network case.
\section{Performance Comparisons}
%In this section, the performance comparisons by using Monte-Carlo simulations are described. We first introduce baseline methods and explain simulation results.
\subsection{Simulation Setup}
We provide performance comparison results to demonstrate the superiority of the proposed method via system level simulations. As a simulation setup, we consider a $1400{\rm m} \times 1400{\rm m}$ square network plane, that consists of $49$ ($= 7 \times 7$) square grids whose a side length is $200 {\rm m}$. 
%For the wrap-around, we only consider users inside $1000 {\rm m} \times 1000 {\rm m}$. 
%tessellated by square grids whose side length is $200 \rm m$. The whole network plane consists of $49$ ($= 7 \times 7$) squares, so that the size of the whole network plane is. 
Within each square grid, one BS is randomly located. A range of a BS location is restricted by the perturbation square whose a side length is $p$, $0 \le p \le 200$, and a center is same with the square grid. Given $p$, a BS can randomly located inside $p {\rm m}\times p{\rm m}$ perturbation square. For example, if the perturbation square has $p = 0 $, 
each BS is regularly located on a center of each square grid. This is equivalent to a regular network. In other extreme case, if the perturbation parameter has $p = 200 $, each BS can be located in any place in each square grid, which causes maximal randomness in a network. Fig.~\ref{perturb_square} describes the perturbation square and the corresponding BS location. This setting mimics an actual BS deployment scenario, where a BS is built to satisfy a certain guard region to other BSs, preventing that any two BSs are closely located each other.

%We have total $49$ BSs in a network. 
For the simulation, we consider two cases, $p = 100$ and $p = 200$. 
A case of $p=200$ represents more irregularity of BSs locations than a case of $p=100$. In these networks, $K_{\rm perBS} \times 49$ users are uniformly dropped. 
%We repeat the user dropping $1000$ times and average it. 
Network snapshots for these two cases are illustrated in Fig.~\ref{network_snapshot}.
For these snapshots, $K_{\rm perBS}=10$. 
Clearly, we observe that BSs are more irregularly located when $p=200$.
In the performance comparison, we only consider users inside $1000 {\rm m} \times 1000 {\rm m}$ for wrap-around. 
The number of BS antennas and selected users are assumed to be $N = 3$ and $K = 1$. The path-loss exponent is $4$.

\begin{figure}[t]
\centerline{\resizebox{0.5\columnwidth}{!}{\includegraphics{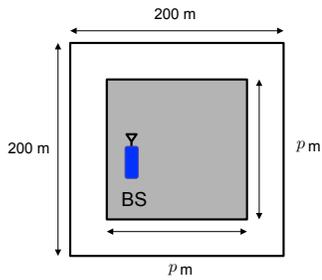}}}    
\caption{A description of the perturbation square whose a side length is $p$. In the described case, a BS is randomly located inisde a region marked by grey shade. }  
   \label{perturb_square} 
\end{figure}

Applying the proposed method in a given network, it is set that $L = \Delta_{\rm EC}(G) = 4$ in both cases of $p=100$ and $p=200$ for the edge cutting algorithm. Recalling the edge cutting algorithm, we cut edges inversely proportional to the area of the corresponding 2nd-order Voronoi regions. However, measuring the area of 2nd-order Voronoi region is a challenging task. For this reason, instead of directly measuring each area, we uniformly drop dummy users and count the number of users associated with the certain two closest BSs pair. Denoting that the number of users whose the two closest BSs are $\{{\bf{d}}_i, {\bf{d}}_j\}$ as $N({\bf{d}}_i, {\bf{d}}_j )$, we approximate $\lambda(\mathcal{V}_2({\bf{d}}_i, {\bf{d}}_j)) \approx N({\bf{d}}_i, {\bf{d}}_j )$. In the simulation, we drop $5000$ dummy users for the approximation.

\begin{figure}[t] 
\centering 
$\begin{array}{cc}  
{\resizebox{0.4\columnwidth}{!}
{\includegraphics{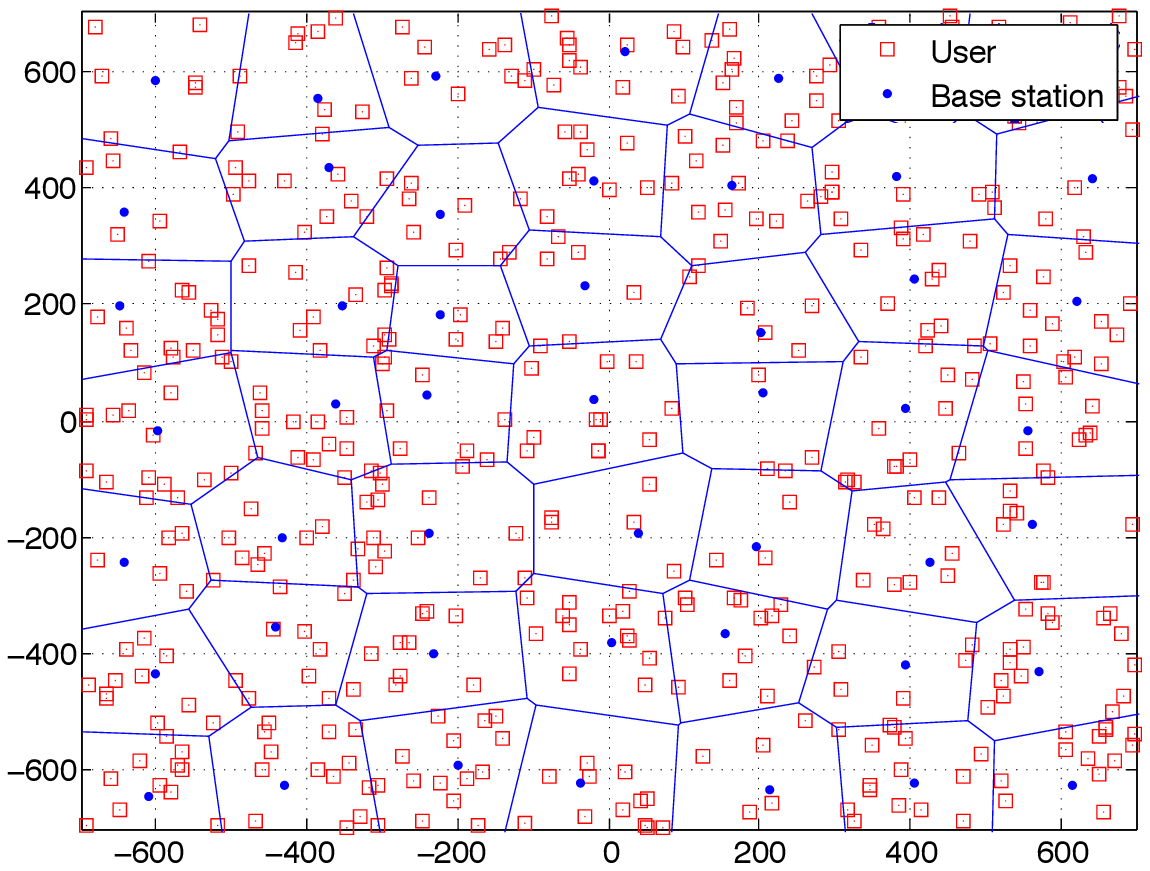}}}  &  
{\resizebox{0.4\columnwidth}{!}
{\includegraphics{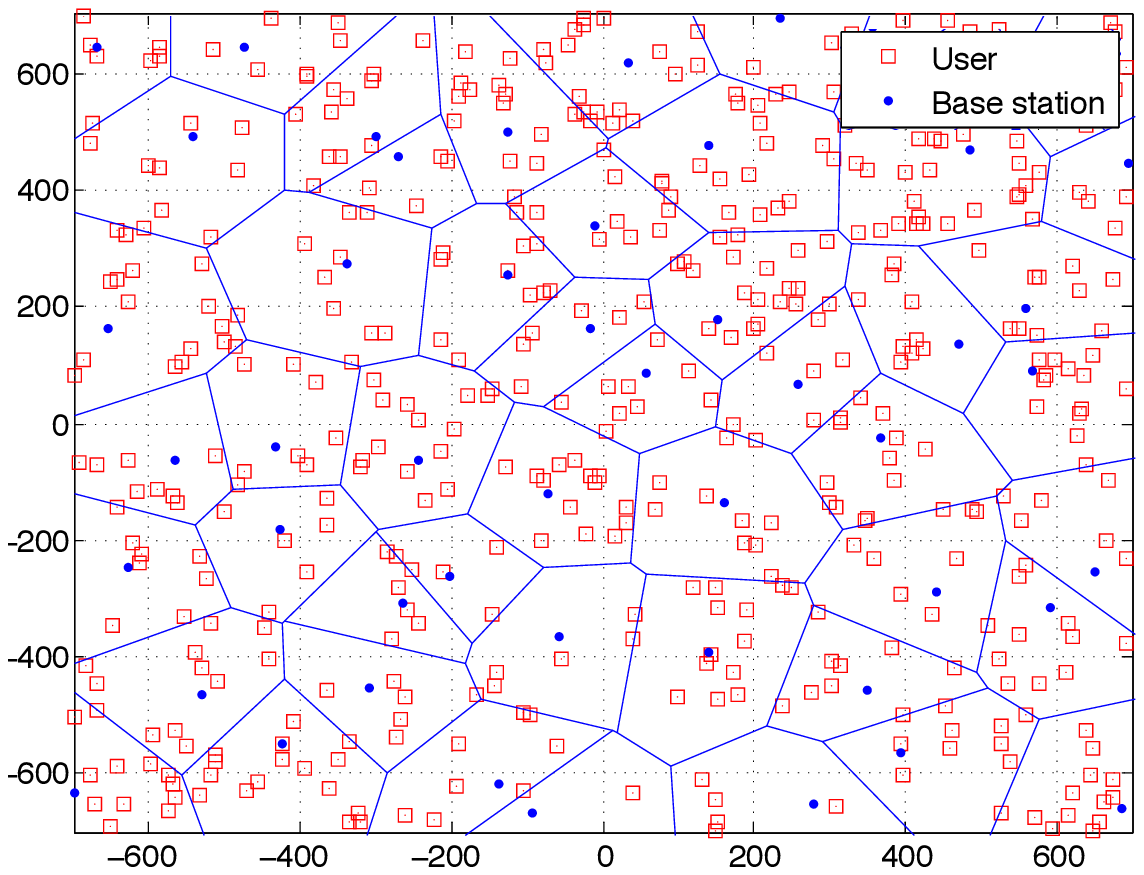}}}  \\  
\mbox{(a)} &
\mbox{(b)}
\end{array}$
\caption{Network snapshots for $p=100$ in (a) and $p=200$ in (b). It is assumed that $K_{\rm perBS} = 10$. Each network plane is tessellated by 1st order Voronoi region of each BS. }
\label{network_snapshot}  
\end{figure} 

For the performance comparison, we consider two conventional coordination methods: dynamic clustering and static clustering. We explain as follows:
\subsubsection{Dynamic Clustering} 
In this method, selected users communicate with their two closest BSs.
%Since there is no known user scheduling 
For comparison, we propose a simple user scheduling method by modifying FlashLinQ \cite{flashlinq} for dynamic clustering.
Assuming a global scheduler, it assigns random orders to dropped users and the orders can be changed over time for user fairness. Then, according to the assigned order, a user finds the two closest BSs and create a BS cluster. If two users select the same BS, i.e., the BS selection conflict problem occurs, a user whose an order is late gives up communication in this particular time slot and wait for the next time slot. Since two users are able to communicate with one BS cluster ($K=1$ for one BS and a BS cluster has two BSs), once a BS cluster is formed by a user, a global scheduler immediately finds the other user to communicate with the corresponding BS cluster. This process continues until the number of made BS clusters is maximal.
%This is similar to maximal packing in a network since BS clusters are maximally packing to a network. 
Once all the BS clusters are made, selected users communicate with their BS clusters by using CBF, same with the proposed method.
If there are remaining BSs that could not make a BS cluster, a global scheduler finds a user to communicate with them. No BS coordination is used for these users.

\subsubsection{Static Clustering}
In this method, BS clusters are predetermined irrespective of users' conditions. A BS randomly selects one BS among its neighbor BSs, and makes a BS cluster with the selected one. Then, a user is independently selected by each BS and communicates with the predetermined BS cluster. Similarly to the proposed method and dynamic clustering, the intra-cluster interference is mitigated by using CBF. In the static clustering, there is non-zero probability that a user cannot communicate with its two closest BSs since BS clusters are made independently to users' locations.

%Fig.~\ref{baseline_explain} illustrates the baseline methods applied in a square grid network where each BS is located on a square grid. In single cell operation described in Fig.~\ref{baseline_explain} (a), the user associated with BS 1 receives interference from all the BSs since there is no interference management technique. In fractional frequency reuse described in Fig.~\ref{baseline_explain} (b), the interference only comes from BS 4, since BS 2 and BS 3 (marked by darker shade) are allocated by different sub-band from the sub-band of BS 1. In random clustering describe in Fig.~\ref{baseline_explain} (c), BS 1 forms a BS cluster with BS 2 (marked by green dotted line), so that the interference from BS 2 is mitigated.
%In particular, in random clustering, there is a non-zero possibility that the performance of the user is degraded by the dominant interference. This is mainly because a BS cluster is formed by a random manner. For instance, in Fig.~\ref{baseline_explain} (c), the dominant interference comes from BS 3 since BS 3 is the closest interfering BS for the user, but BS 1 forms a BS cluster with BS 2, causing the dominant interference still degrades the user performance. 
%The random clustering is considered as a baseline method since it is a conventional way to form a BS cluster in irregular networks without exploiting the proposed clustering strategy.
%there is no way to guarantee that every user is free from the dominant interference without using the proposed clustering strategy at the best of authors' knowledge.

\subsection{Simulation Results}
First, we compare the edge users sum throughput depending on $K_{\rm perBS}$. 
An edge user is defined as a user whose a distance ratio between the closest BS and the second closest BS is larger than $2/3$, i.e., $\left\| {\bf{d}}_0\right\| / \left\|{\bf{d}}_1 \right\| > 2/3$. 
%The rationale behind this setting is on the conventional edge user definition in a square grid BS topology. In a square grid BS topology, an edge user is defined as a user a user located in the outside of the square whose a side length is $2/3R$, where $R$ is a inter-site distance. In irregular BS topologies, however, this definition cannot be applied since a shape of each cell is not 
For obtaining the edge users sum throughput, we only pick the active edge users and sum their throughput. 
In the simulation, SNR is set to be $100{\rm dB}$. Fig.~\ref{sum_throughput_compare} illustrates the edge users sum throughput. As observed in the figure, when the dropped users are dense enough, the performance of the proposed method achieves the performance of the dynamic clustering. Specifically, the proposed method have the same performance as the dynamic clustering when $K_{\rm perBS}>20$ in the $p=100$ case, and when $K_{\rm perBS}>40$ in the $p=200$ case. Comparing to the performance of the static clustering in the dense user environment, the edge users throughput is improved by $28\%$ in the $p=100$ case and $24\%$ in the $p=200$ case.
The rationale behind this improvement obviously comes from that users communicate their two closest BSs in the proposed method, where the strongest interference is mitigated. In the static clustering, the strongest interference still can degrade the performance. 
In contrary to that, when the dropped users are not dense enough, the performance of the proposed method is even lower than that of the static clustering. This is because the proposed method uses predefined cooperative regions (2nd-order Voronoi regions) and time-frequency allocation. For intuition, let's consider an example cooperative region $\mathcal{V}_2\left({\bf{d}}_i, {\bf{d}}_j \right)$. For serving users in this region, a BS cluster $\{ {\bf{d}}_i, {\bf{d}}_j\}$ is made and a specific time-frequency resource is allocated to this cluster. If there is no user in this region, however, the allocated time-frequency resource is wasted, and this degrades the performance. In the dynamic clustering, this resource waste does not occur since a BS cluster is flexibly made by a global scheduler, therefore there is always a user to communicate with the BS cluster. 
When the users are dense enough, all the predefined region has at least $K$ users therefore no resource is wasted. Then, the proposed method provides the same performance as the dynamic clustering. This intuition also justifies the difference between the results of the $p=100$ and $p=200$ cases. Since the BSs locations are more regular when $p=100$, each 2nd-order Voronoi region has similar area. When $p=200$, however, the BSs locations are highly irregular, therefore some 2nd-order Voronoi regions can be very small. Then, the probability that each region has at least $K$ users is reduced. This is the reason why in the $p=200$ case, more users are required to achieve the same performance as the dynamic clustering compared to the $p=100$ case. 

\begin{figure}[!t] 
\centering
$\begin{array}{cc}  
{\resizebox{0.465\columnwidth}{!}
{\includegraphics{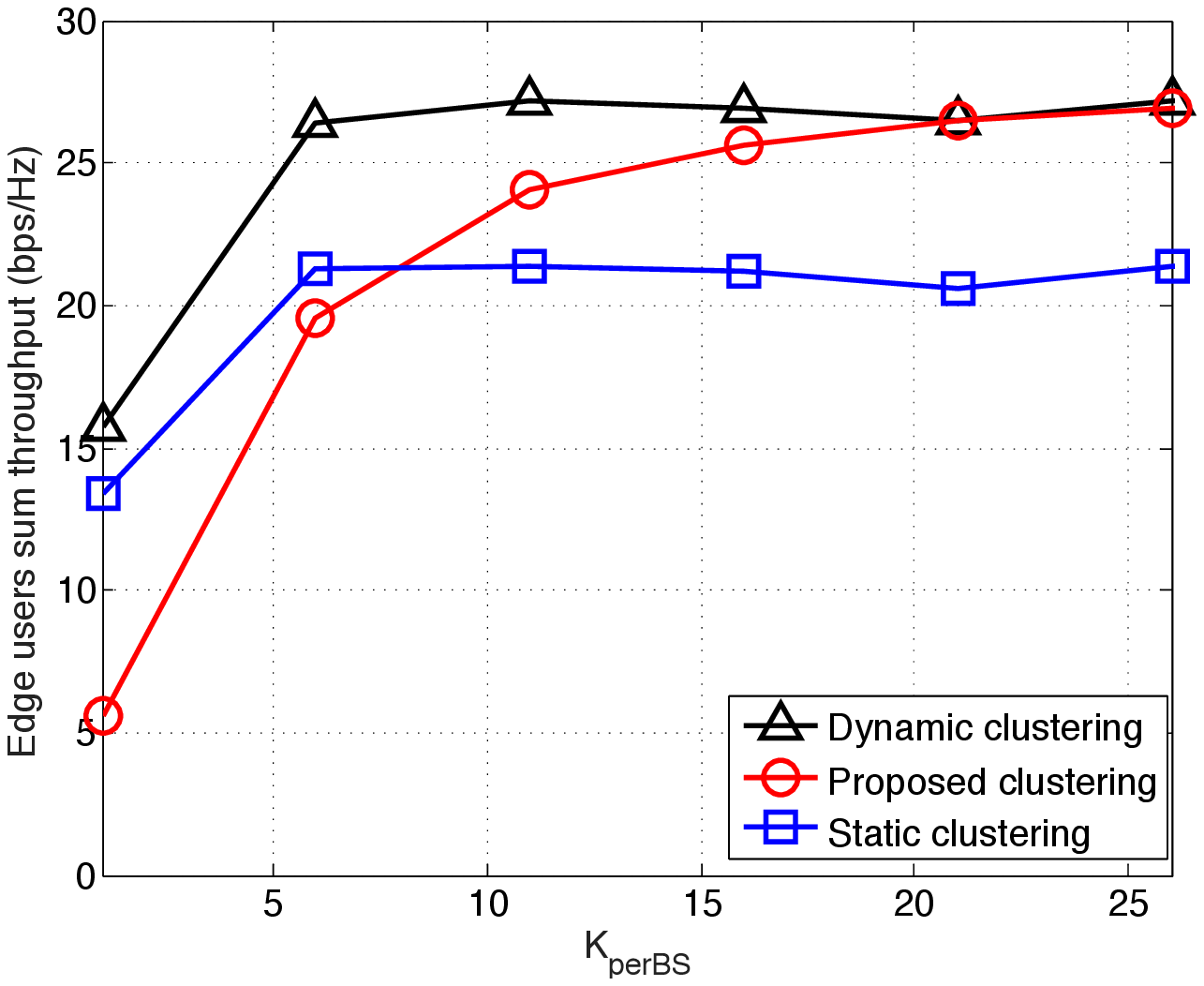}}}  &
{\resizebox{0.47\columnwidth}{!} 
{\includegraphics{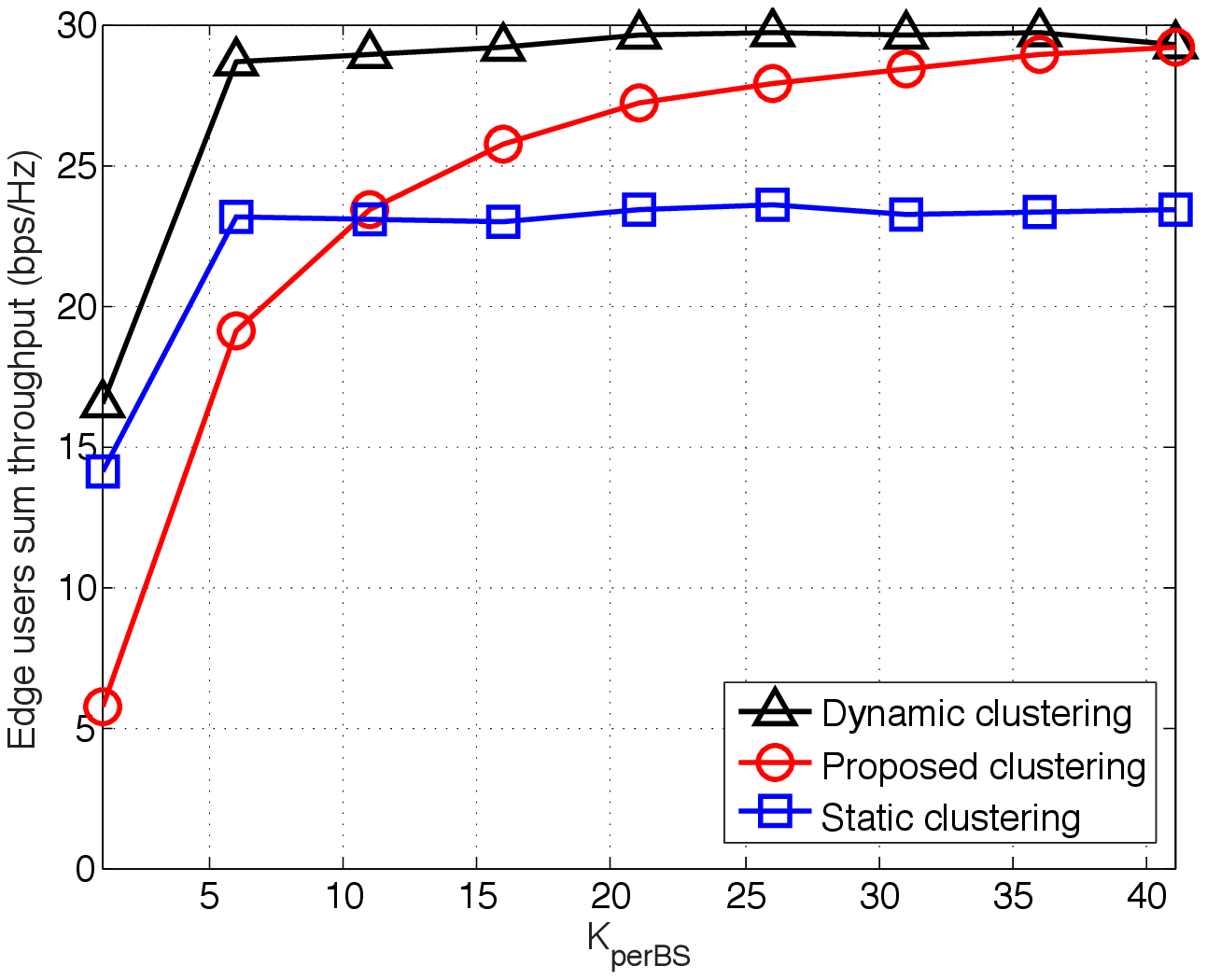}}}  \\   
\mbox{(a)} &
\mbox{(b)}
\end{array}$
\caption{The edge users sum throughput depending on $K_{\rm perBS}$ in the $p=100$ case (a) and in the $p=200$ case (b). It is assumed that $N=3$, $K=1$, and $\beta = 4$. The edge cutting is performed with $\Delta_{\rm EC}(G)=4$.}
%Four users are located in the cell-edge region as described in Fig.~\ref{cell_edge_region}}    
\label{sum_throughput_compare}  
\end{figure} 

From this observation, we make some intuitive explanations about the effect of $\Delta_{\rm EC}(G)$ on the network performance. Considering $\Delta_{\rm EC}(G) = 1$ as an extreme case, no time-frequency resource are wasted. 
A large user density, however, would be required to achieve the same level of the coordination benefit with that of dynamic clustering. This is mainly because many 2nd-order Voronoi regions are cut, therefore users exist in the remaining 2nd-order Voronoi region with only small probability. In another extreme case, if the edge-cutting algorithm is not applied, i.e., $\Delta_{\rm EC}(G) = \Delta(G)$, the performance of the proposed method is not affected by the user density since all regions are covered by the proposed method, while the network performance degradation is unavoidable due to the resource waste. In summary, less user density is required as $\Delta_{\rm EC}(G)$ increases, while more resource waste is expected.

We can incorporate the channel estimation overhead into the sum throughput performance. From \cite{NY:dynamic}, the pilot overhead is 
\begin{align}
\alpha = \frac{L_p(2, N, {\rm SINR})}{L_b},
\end{align}
where $L_p(2, N, {\rm SINR})$ is the number of symbols for pilots and $L_b$ is a given fading coherence time in symbols. The number of symbols for pilots is $L_p(2, N, {\rm SINR}) = 2\eta N$, and $\eta$ is 
\begin{align}
\eta = \max\left\{1, \left\lfloor \frac{1}{\rm SINR} \left(\frac{1}{\rm MMSE } -1\right)\right\rfloor \right\},
\end{align}
where ${\rm MMSE}$ is the minimum mean square error of the pilot signals. Incorporating the pilot overhead into the throughput, the performance is degraded by $(1-\alpha)$. 
Since all the considered coordination methods are pair-wise BS coordination where a BS cluster consists of two BSs, the amount of channel estimation overhead is same.

%\begin{remark}
%From the simulation result, 
%\end{remark}

\begin{figure}[!t] 
\centering
$\begin{array}{cc}  
{\resizebox{0.47\columnwidth}{!}
{\includegraphics{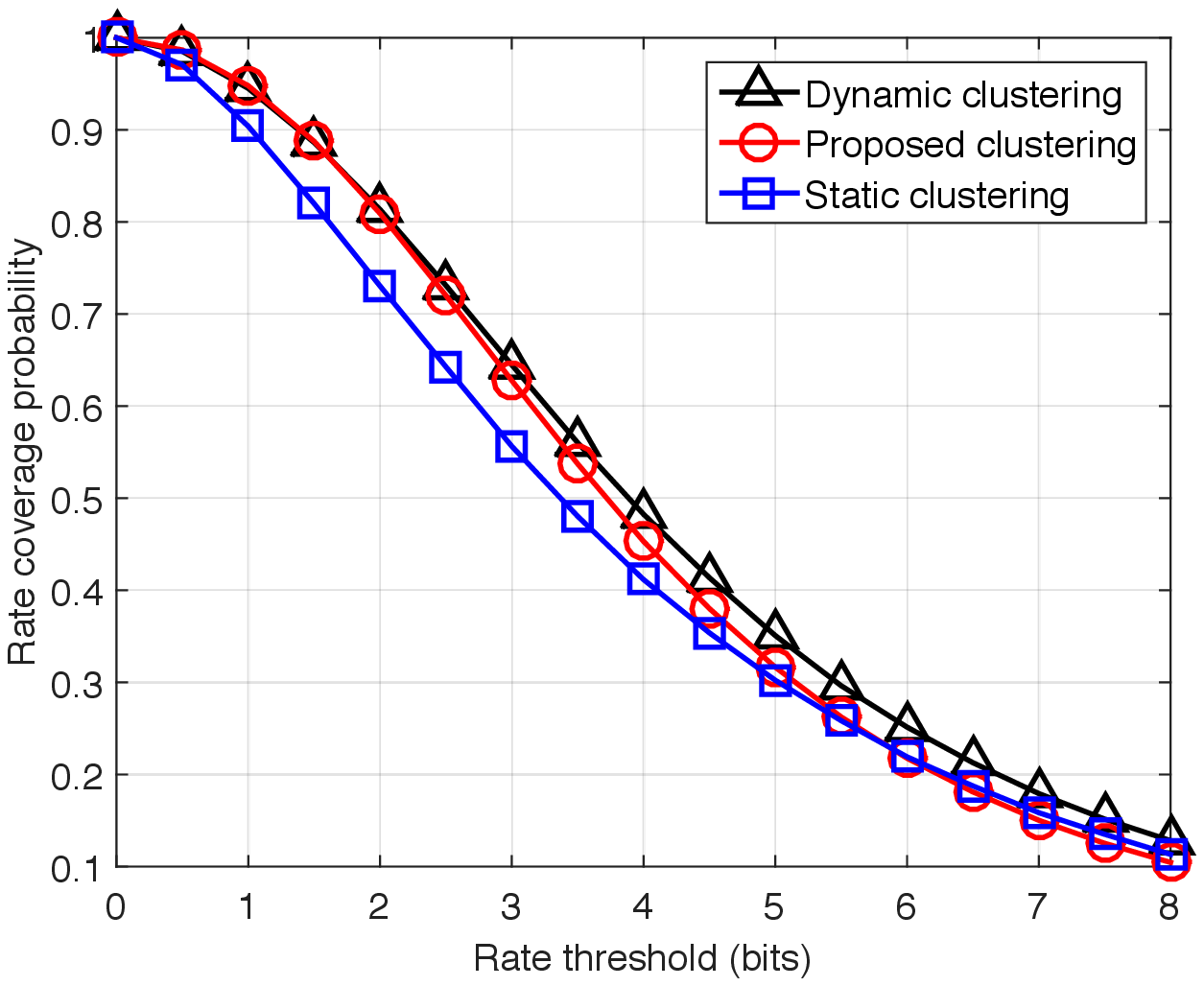}}} &
{\resizebox{0.455\columnwidth}{!}
{\includegraphics{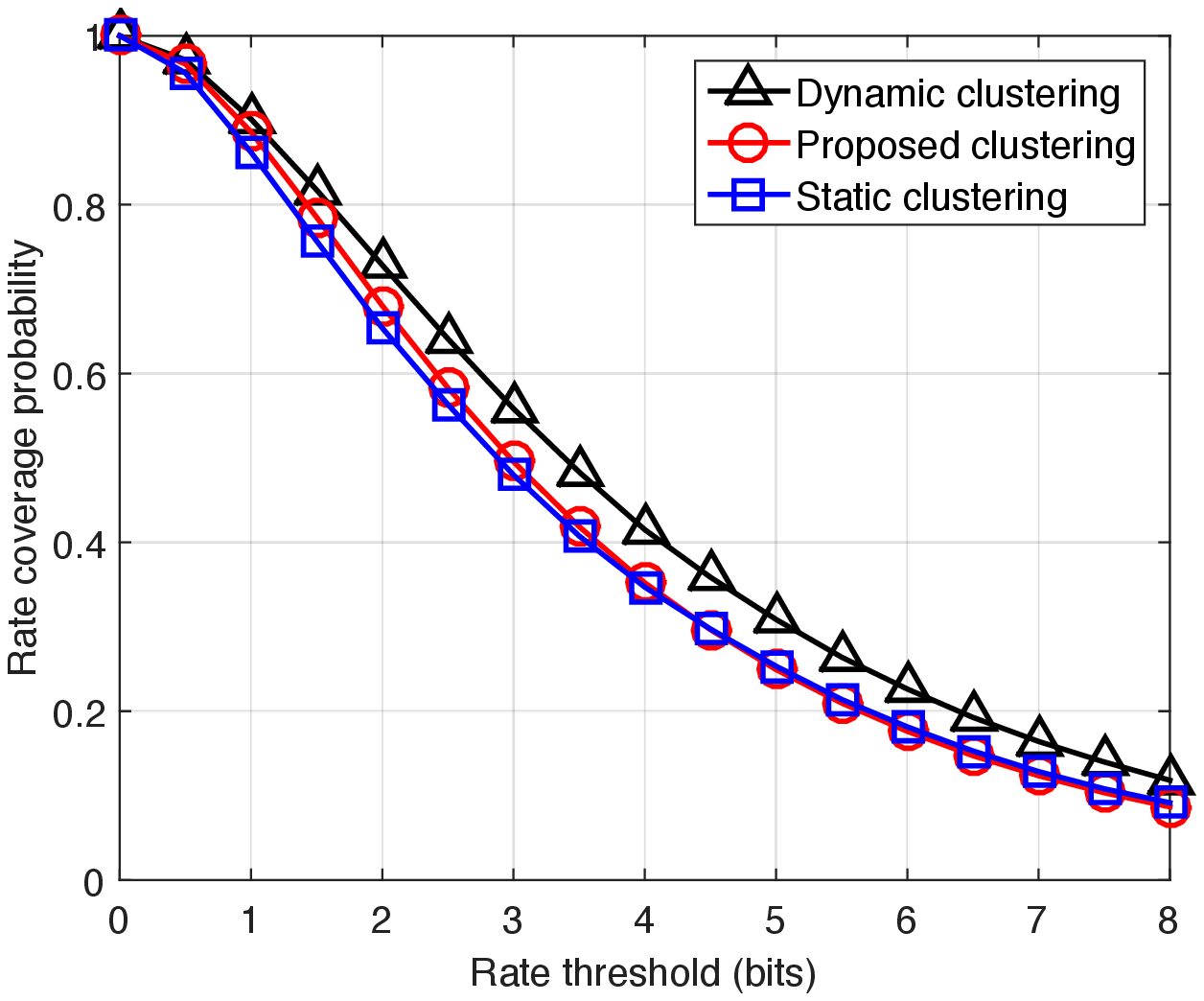}}}  \\  
\mbox{(a)} &
\mbox{(b)}
\end{array}$
\caption{The rate coverage probability in the $p=100$ case (a) and the $p=200$ case (b). It is assumed that $N=3$, $K=1$, and $\beta = 4$. The edge cutting is performed with $\Delta_{\rm EC}(G)=4$.}
\label{rate_compare_fig}  
\end{figure}

Now we compare the rate coverage probability. For obtaining the rate coverage probability, we calculate the probability that the selected user's throughput is larger than rate threshold $\gamma$. It is assumed that $\rm {SNR}=100 {\rm dB}$ and $K_{\rm perBS} = 40$.
Fig.~\ref{rate_compare_fig} illustrates the rate coverage probability. As observed in this figure, the proposed method provides the same or similar coverage probability with that of the dynamic clustering at low rate threshold, while the coverage probability degrades to the similar level of the static clustering at high rate threshold. Focusing on $\gamma = 1 {\rm bit}$, the rate outage probability is reduced by $50\%$ comparing to the static clustering in the $p=100$ case, and it is reduced by $15\%$ in the $p=200$ case. Since the rate coverage performance at low rate threshold is determined by edge users, the simulation results show that the proposed method is able to provide the same or similar level of coordination benefit with the dynamic clustering to edge users.
At high rate threshold, however, the rate coverage performance is degraded. The reason of this observation is on the edge cutting algorithm. As a result of the edge cutting algorithm, we cut some regions and do not serve users in the cut regions even if users' locations in the cut region are very close to their associated BS. Then, we miss a chance to have a high rate, which degrades the rate coverage at high rate threshold.

\section{Conclusion}
In this paper, we proposed a method for designing BS clusters and BS cluster patterns in 
irregular BS topologies. The core idea of the proposed method is to form BS clusters according to the 2nd-order Voronoi regions, and use edge-coloring for the graph induced by Delaunay triangulation to assign time-frequency resources (colors) to each BS cluster.
%assign made BS clusters into a cluster patterns by using edge-coloring in graph theory. 
By using the proposed method, users in the uncut regions communicate with their two closest BS while avoiding the BS selection conflict problem.
% To avoid a BS selection conflict problem, we use the notion of edge-coloring in graph theory. We assign different time-frequency resources (colors) to each edge of the graph induced by the Delaunay triangulation. Using the BS cluster patterns by the proposed method, every selected user in the plane is guaranteed to communicate with the two closest BSs. 
%This coordination benefit is achieved without a BS conflict problem.
%a user is guaranteed to be in a cluster-center region. 
%By alternatively exploiting the proposed cluster patterns, every user in the plane have a chance to communicate in a cluster-center region.
% method exploits multiple different clustering patterns alternatively so that every user is near the cluster-center. 
In a fixed network with irregular BSs' locations, we provided analytical expressions of the 
rate coverage probability and the ergodic spectral efficiency in terms of the number of antennas per BS, the number of users, distance, SNR, and the pathloss exponent. 
In a random network modeled by a homogeneous PPP,
we also derived a lower bound of the ergodic spectral efficiency as a function of the pathloss exponent and the number of associated users.
Through system level simulations, we showed that the proposed method achieves the same edge users sum throughput performance with dynamic clustering when the users are dense enough. In the rate coverage probability performance, the proposed method provides the same or similar performance with that of dynamic clustering at low rate threshold.
%Through Monte-Carlo simulations, we showed that the proposed clustering outperforms other conventional strategies.
%with less coordination making coordination feasible even with a large number of BSs. 
Future work should consider the application of more advanced coordination strategies going beyond coordinated beamforming.

\bibliographystyle{IEEEtran}
\bibliography{referece_cmcbf}

\end{document}